\documentclass[a4paper,12pt]{article}

\usepackage{amsmath,amssymb,amsthm,amsfonts}
\usepackage{color}

\theoremstyle{plain}
\newtheorem{prop}{Proposition}[section]
\newtheorem{theorem}[prop] {Theorem}

\theoremstyle{definition}
\newtheorem{lemma}[prop]{Lemma}

\newtheorem{cor}[prop]{Corollary}

\theoremstyle{remark}
\newtheorem*{remark}{Remark}
\newtheorem*{example}{Example}

\newcommand{\N}{\mathbb{N}}
\newcommand{\R}{\mathbb{R}}
\newcommand{\Z}{\mathbb{Z}}
\newcommand{\C}{\mathbb{C}}
\renewcommand{\P}{\mathbb{P}}

\newcommand{\dd}{\mathrm{d}} 
\newcommand{\eps}{\epsilon}
\newcommand{\la}{\langle}
\newcommand{\ra}{\rangle}

\newcommand{\vect}[1]{\boldsymbol{#1}}

\DeclareMathOperator{\supp}{supp}
\DeclareMathOperator{\const}{const}

\DeclareMathOperator{\Tr}{Tr}

\title{Fermionic and bosonic Laughlin state \\
on thick cylinders}
\date{Sept. 18, 2011}
\author{Sabine Jansen\footnote{Weierstrass Institute for Applied Analysis and Stochastics, Leibniz Institute in Forschungsverbund Berlin e.V., Mohrenstr. 39, 10117 Berlin, \texttt{jansen@wias-berlin.de}.}}

\begin{document}

\maketitle

\begin{abstract}
	We investigate a many-body wave function for particles on a cylinder known as Laughlin's function. 
	It is the power of a Vandermonde determinant times a Gaussian. Our main result is:  in a many-particle limit,
	 at fixed radius, all correlation functions have a unique limit, and the limit state  
	has a non-trivial period in the axial direction. 
	The result holds regardless how large the radius is, for fermions as well as bosons.  
	In addition, we explain how the algebraic structure used in proofs relates to a ground 
	state perturbation series and to quasi-state decompositions, and we show that the monomer-dimer function introduced in an earlier work is an exact, zero energy, ground state 
	of a  suitable finite range Hamiltonian; this is interesting because of formal analogies with 
	some quantum spin chains. \\
	
	{\small \noindent \textit{Keywords}:  quantum many-body theory, symmetry breaking, quasi-state decomposition; fractional quantum Hall effect, Coulomb systems, jellium; powers of Vandermonde determinants.}
\end{abstract}

\tableofcontents

\section{Introduction}

In this article we study a many-body wave function for  particles on long cylinders.
The wave function is the product of the power of a Vandermonde determinant and 
a Gaussian. It arises as a model wave function in the fractional Hall effect~\cite{lau83} and is 
known as Laughlin's wave function, but the wave function, or  variants of it, 
  play a role in other areas too, e.g., rapidly rotating Bose gases (see~\cite{lewin-seiringer} and the references therein)
and classical Coulomb systems 
(``jellium'' or one-component plasma), see for example~\cite{samaj,forr06}. The function also resembles expressions studied in 
 random matrix theory. 

A full discussion of the quantum Hall effect background is beyond the scope of this article, and we should stress that our result does not seem to have direct implications in terms of quantized conductance. There are, nevertheless, reasons that might make the result interesting in the quantum Hall effect context. The key words are Laughlin's argument, Chern numbers, incompressibility, topological order, and the plasma analogy. 

 Laughlin's argument for the integer Hall effect~\cite{lau81} used the cylinder geometry and gauge invariance in order to show that Hall conductances should have integer quantization. It was later suggested that a ground state degeneracy, possibly due to translational symmetry breaking, is required in order to reconcile this argument with fractional conductances~\cite{taowu}. The relation between ground state 
degeneracy and fractional conductance can in fact be made rigorous on a torus, in the Chern number approach~\cite{avron-seiler-yaffe}. 
This  leads to the question whether Laughlin's state, on a cylinder, considered as an approximate ground state, is degenerate or not:  this was the initial motivation of the present article, see~\cite{jls} for further references. Our main result is that at filling factor $1/p$, Laughlin's state exhibits indeed $p$-fold symmetry breaking, at all values of the radius, complementing  previous thin cylinder results~\cite{rh,jls}.

This shows, in a way, that the use of Laughlin's function is consistent with an important ingredient of the theory of the fractional Hall effect, \emph{incompressibility} or the existence of a gap above the ground state(s),  see the review~\cite{bf} and also the discussion in~\cite{bes}.
Indeed, on a cylinder, at non-integer filling factor, incompressibility implies translational symmetry breaking~\cite{koma}. Let us mention, however, that our symmetry breaking result holds at all values of the radius and of the filling factor, regardless of whether or not there is a gap.

The symmetry breaking proven here can be read as another illustration of the geometry-dependent degeneracy which is considered a hallmark of ``topological order''. We do not wish to discuss this notion here, and instead refer the reader to~\cite{hastings-mich} and the references therein for a discussion in combination with the fractional Hall effect. 

The plasma analogy refers to the observation that the modulus squared of Laughlin's function is proportional to the Boltzmann weight for a classical system of point charges moving in a neutralizing background. It was originally invoked~\cite{lau83} in order to justify that, at not too low filling factors, in a disk geometry, Laughlin's state describes a homogeneous liquid. Interestingly, this very same argument, when adapted to the cylinder geometry, suggests 
a spatial periodicity. From this point of view, the symmetry breaking proven here is closely related to results for one-dimensional jellium~\cite{kunz,brascamp-lieb,aizenman-martin}, 
jellium on a semi-periodic strip at the ``free-fermion point''~\cite{cfs} and at even-integer coupling~\cite{swk}, and jellium on quasi one-dimensional tubes with periodic or Neumann boundary conditions, at arbitrary coupling~\cite{ajj}. 

A curious aspect of the quasi one-dimensional jellium is that it interpolates between one-dimensional jellium, which is known to have no phase transition~\cite{kunz}, and higher-dimensional jellium, which is expected to undergo a Wigner crystallization phase transition (related to the formation of vortices in a rotating Bose gas). An interesting open question is, therefore, whether at  low filling factor $1/p$, Laughlin's state has a phase transition as the radius is  varied. We know that  for every fixed filling factor, when the cylinder radius is sufficiently small, the infinite volume correlations depend on the radius in an analytic way, and 
there is exponential clustering~\cite{jls}. 
 Our results imply that if there is a phase transition, it cannot manifest itself in a change of the state's spatial periodicity. A possible scenario, instead, could be that exponential clustering is replaced with algebraic decay of correlations, which in turn is related to the question whether the gap of some toy Hamiltonian vanishes at some finite value of the radius as the radius is increased. 

The simplest motivation, perhaps, is to consider Laughlin's function as a partially solvable, quantum many-body toy problem. It is well-known, indeed, that the wave function is an exact ground 
state of a suitable interaction
which encodes 
that the wave function has zeros of a given order as particles get close~\cite{haldane,pt,tk}. 
 Our problem can also be seen as a problem for fermions, or bosons, 
on a one-dimensional lattice, 
and the interaction takes the form  
\begin{equation} \label{eq:ham}
	H = \sum_{k_1+k_2 = n_1 + n_2} F(n_1 -n_2) F(k_1 - k_2)  c_{k_1}^* c_{k_2}^* c_{n_2} c_{n_1}
\end{equation}
for some suitable rapidly decaying function $F$. Studying Laughlin's wave function amounts to studying the ground 
state of this simple looking Hamiltonian. \\

Our main results are the following: when we let the particle number go to infinity, 
at fixed radius, all correlation functions have a unique limit, and the limit state has a non-trivial period along the cylinder axis. 
At filling factor $1/p$, the period is $p$ times the period of the filled lowest Landau level; equivalently,  $p$ times the period of the Hamiltonian
$H$ from Eq.~\eqref{eq:ham}.   We also show that the state is clustering, and that bulk correlation functions are insensitive to the 
precise choice of the domain of integration. This is akin to the accumulation of excess charge at the boundary of Coulomb systems. 

These results extend  previous results for fermions on thin cylinders~\cite{jls}. They leave open, however, some questions that 
were answered positively on thin cylinders. Most notably, on thick cylinders, we do not know whether the symmetry breaking 
is already apparent at the level of the one-particle density, and we do not know whether the correlation length is finite. 
This last question is, via the exponential clustering theorem~\cite{ns} related to another open question, namely, whether 
the Hamiltonian $H$ has a gap above its ground state (see, however, \cite{sfl} for numerical results at filling factor $1/3$).

The symmetry breaking proven here is  also  closely related to results for classical quasi one-dimensional Coulomb systems~\cite{ajj}. From this perspective our result 
is, on the one hand, a specialization of~\cite{ajj} to  $1+1$-dimensional semi-periodic strips; on the other hand, our result is an improvement 
in the sense that we examine the full quantum mechanical state and we prove uniqueness (up to shifts) and ergodicity 
of the limiting state.

Our proofs follow~\cite{jls} by exploiting the algebraic structure of the wave function. 
Some of the results in~\cite{jls} were proven using a perturbative arguments in the thin strip limit. 
We adapt methods from~\cite{ajj} in order to show that part of these results are not perturbative  
but true for all radius values. In particular, the associated renewal process has always finite mean. \\

Another goal of this article is to cast the algebraic manipulations and expressions of infinite volume 
correlation functions from~\cite{jls} in a form that allows for a more explicit 
expression of the relationship between quantum mechanics and an associated probability problem. 
We will see that the formalism implicitly used in~\cite{jls} does not stand alone, but is actually similar
to structures encountered in  magnetic itinerant electrons and in spin chains.

Quickly summarized, \cite{jls} exploited an  algebraic property 
of powers of Vandermonde determinants in order to give a representation both of the  
normalization $C_N=||\Psi_N||^2$ and the correlation functions. The normalization was expressed 
as a polymer partition function,  or more precisely, a sum over partitions of some discrete volume into non-overlapping 
intervals with multiplicative weights.

This representation of the normalization is similar to a formula 
for the canonical partition function for magnetic itinerant electrons given by 
Aizenman and Lieb, Eq.~(11) in~\cite{al}. The canonical partition function 
was written there as a sum over partitions $\{n_i\}$ of the number of electrons $N$,
with the interpretation that 
	\begin{quote}
	``as far as the $z$-component $S$ is concerned (...) the system is in a  superposition of states in which the particles form independent `cliques'
	of size $n_i$.''  \hfill {\small \cite{al}, Remark~(1) after Eq.~(11)}
	\end{quote}
The clique structure carries over to a representation of the correlation functions:
for quantum spin chains, the state can is a weighted superposition of (quasi-)states, corresponding each 
to a partition of spins into ``random clusters of total spin zero'' (\cite{an}). 
The weights  define a probability measure on partitions, and a good 
understanding of this probability measure is  a starting point for a good understanding of the quantum-mechanical state.

Something similar can be done for Laughlin's wave function.  The physical meaning of the analogous clique structure is 
unclear, but we shall try to give some intuition in terms of a ground state perturbation series. 
Let us explain right away 
why the representation is so useful for the infinite volume limit. We  will write expectations as convex 
combinations of linear functionals (\emph{quasi-states}), indexed by partitions $\mathcal{X}$:
\begin{equation*} 
	\frac{ \la \Psi_N, a \Psi_N\ra }{||\Psi_N||^2} = \sum_{\mathcal{X}} p_N(\mathcal{X}) \omega_\mathcal{X} (a).
\end{equation*}
The weights define a probability measure $\P_N$ on partitions. The decomposition allows us to 
reduce the question of convergence of quantum mechanical states to the much simpler problem of convergence of 
probability measures. In the limit $N\to \infty$, the measure $\P_N$ has a limit $\P$, and 
\begin{equation*}
	\frac{ \la \Psi_N, a \Psi_N\ra }{||\Psi_N||^2} \to \int \dd \P(\mathcal{X}) \, \omega_\mathcal{X} (a).
\end{equation*}
The measure $\P$ is a $p$-periodic renewal process and it is mixing. The infinite volume state inherits the periodicity 
and the clustering from the measure $\P$.  \\

Finally, we show that the solvable monomer-dimer function introduced 
in~\cite{jls} arises as an exact ground state. At filling factor $1/3$, the Hamiltonian is obtained 
from $H$ by restricting the interaction range to $|k_1 - k_2|\leq 3$, $|n_1 - n_2|\leq 3$. 
This does not enter the proofs of our main results in any way, but is interesting for two reasons. 
First, the monomer-dimer Hamiltonian might be a good toy model for open  questions about gaps and incompressibility. 
Second, the monomer-dimer Hamiltonian, as a sum of non-commuting, positive, local operators $\sum_s B_s^* B_s,$ 
resembles spin chains with nearest neighbor spin singlet projections. This calls for a better understanding of the relation 
of our monomer-dimer structure and the valence bond structures and dimerization encountered in spin chains~\cite{aklt,an}. \\

The article is organized as follows: Sect.~\ref{sec:results} states the main results in terms of the continuum wave function; 
the lattice formulation, many-body Hamiltonian and monomer-dimer model are presented in Sect.~\ref{sec:lattice}. 
The renewal structure of the wave function and the quasi-state decomposition are explained in Sect.~\ref{sec:renewal}. 
In Sect.~\ref{sec:peierls}, we explain how methods from quasi-1D Coulomb systems help strengthen perturbative results from~\cite{jls}. 
Finally, Sect.~\ref{sec:correlations} weaves together the formalism from Sect.~\ref{sec:renewal} and the bounds from Sect.~\ref{sec:peierls} 
to prove our results on correlation functions.

\section{Main results} \label{sec:results}

Consider $N$ particles moving on a cylinder of radius $R>0$. Particles have complex coordinates $z = x + i y$. 
 The real part $x$ is a coordinate along the cylinder axis, and the imaginary part 
$0 \leq y \leq 2\pi R$ is an angular coordinate around the cylinder. We are interested in the wave function 
\begin{equation}
	\Psi_N(z_1,...,z_N) = \kappa_N 
		 \prod_{1 \leq j<k \leq N} 
		\Bigl(\exp( z_k/R) - \exp( z_j/R) \Bigr)^p \exp\left( - \frac{1}{2\ell^2} \sum_{k=1}^N x_k^2 \right).
\end{equation}
The parameters $\ell>0$ and $p\in \N$ are considered fixed. In the quantum Hall effect, $\ell \propto 1/ \sqrt{|B|}$ is the magnetic length, 
and $1/p$ is the filling factor.  When $p$ is odd, $\Psi_N$ is a wave function for fermions; when $p$ is even, $\Psi_N$ is a wave function for 
bosons. The multiplicative constant $\kappa_N$ is irrelevant for correlation functions; a convenient choice is nevertheless given in Eq.~\eqref{eq:kappan}. 

Before we state our results, we mention an illuminating alternative form of the wave function's modulus.  
Using the factorization 
\begin{equation*}
	\exp(z_k/R) - \exp(z_j/R) = \Bigl(1- \exp([z_j-z_k]/R) \Bigr) \exp(z_k/R). 
\end{equation*}
and a completion of squares, we find 
\begin{multline} \label{eq:wave-fct-period}
	|\Psi_N(z_1,\ldots, z_N)|^2 \propto \prod_{1 \leq j < k \leq N} \Bigl| 1- \exp\Bigl(\frac{z_j -z_k}{R}\Bigr)
		 \Bigr|^{2p} \\
		\times \exp\left( - \frac{1}{\ell^2} \sum_{k=1}^N \Bigl(x_k - (k-1) \frac{p \ell^2}{R}\Bigr)^2 \right). 
\end{multline}
The Gaussian clearly favors a period $p \ell^2/R$ in the $x$-direction. The pre\-factor should not 
destroy the periodicity, since 
in the sector $x_1 \leq x_2 \leq \cdots \leq x_N$, 
$$ 1- \exp\frac{z_j-z_k}{R} = 1 + O \left( \exp\Bigl(- \frac{|x_j-x_k|}{R} \Bigr) \right)$$
is almost unity when $z_j$ and $z_k$ are far apart. 
Moreover, since the Gaussian decays exponentially outside a finite cylinder of length $p N \ell^2/R$, 
the exact choice of the domain of integration $\Lambda$
should not affect bulk correlation functions. 
Theorems~\ref{thm:thermolim} and~\ref{thm:symbreak} below state that this picture is correct; that this is true for all values of the cylinder radius, and not just for a radius small compared to the magnetic length $\ell$, is in essence the main result of this paper. 

\begin{remark} The dimensionless parameters are $p\in \N$ and the ratio $\gamma:= \ell/R$. 
In the proofs, we will choose length units such that $\ell =1$, so that the period of the state becomes $p \ell^2 /R = p \gamma$. 
\end{remark}

We now state our results in more detail. 
Let $\mathcal{Z} = \R \times [0,2\pi R]$. For a given domain of integration $\Lambda \subset \mathcal{Z}$ 
and $N\in \N$, we consider the $n$-point correlation function 
\begin{multline*}
 	\rho_n^{N,\Lambda} (z_1,\ldots,z_n;z'_1,\ldots,z'_n) \\
		:= \frac{N!}{(N-n)!}
		\,  \frac{1}{||\Psi_N||^2_\Lambda} \,
	  \int_{\Lambda^{N-n}} 
		 \Psi_N(z_1,\ldots,z_n,z_{n+1},\ldots,z_N) \\
		\times \overline{\Psi_N(z'_1,\ldots,z'_n,z_{n+1},\ldots,z_N)}\, \dd x_{n+1} \dd y_{n+1} \cdots \dd x_N \dd y_N. 
\end{multline*}
We willl consider domains of the form 
\begin{equation} \label{eq:domchoice}
  \Lambda = [a,b]\times [0,2\pi R],\quad - \infty \leq a \leq C,\quad N p \ell^2/R - C \leq b \leq \infty
\end{equation}
for some $N$-independent 
constant $C$, but $a$ and $b$ possibly $N$-dependent. For example, we can integrate over the infinite cylinder 
$\mathcal{Z}:\, -\infty \leq x \leq \infty$, the semi-infinite cylinder $x\geq 0$, or a finite cylinder of length 
$p N \ell^2/R$. 

We are interested in the limit $N\to \infty$ at fixed cylinder radius $R>0$, and fixed parameters 
$\ell, p$. The cylinder length $b-a$ as in Eq.~\eqref{eq:domchoice} goes to infinity. In this limit, 
the bulk correlation functions have a unique limit, and the limit is $p\ell^2/R$-periodic. More exactly, 
 we shall see that  
\begin{multline} \label{eq:periodic}
  \Bigl( \prod_{j=1}^N u(py_j)\Bigr)  \, \rho_n\Bigl( \{z_j- p \ell^2/R\}_{j=1}^n;\{z'_j- p \ell^2/R\}_{j=1}^n\Bigr)\,
	 \prod_{j=1}^N \overline{u(p y'_j)} \\
      = \rho_n(z_1,\ldots,z_n;z'_1,\ldots,z'_n).
\end{multline}
for all $z_1,\ldots, z'_n \in \mathcal{Z}$, and $u(py) = \exp( i p y /R)$; the unitary $u(y)$ comes from 
 the \emph{magnetic translation}
$$\Bigl( t( \ell^2/R\, \vect{e}_x) \psi\Bigr)(z) = \exp(i  y /R) \psi(z-  \ell^2/ R).$$

\begin{theorem}[Infinite volume correlation functions, $p$-periodicity] \label{thm:thermolim}
  Let $N\to \infty$ at fixed $R$ and $\Lambda$ as in Eq.~\eqref{eq:domchoice}.  Then there is  
	 a unique family of  infinite volume correlation functions $\rho_n(\{z_j\};\{z'_j\})$ such that, for suitable  
	 $\varepsilon(d)$ with $\varepsilon(d) \to 0$ as $d\to \infty$, and suitable constant $K_n$, 
 \begin{equation*}
    \Bigl| \rho_n^{N,\Lambda} (z_1,\ldots,z_n;z'_1,\ldots,z'_n) - \rho_n(z_1,\ldots,z_n;z'_1,\ldots,z'_n) \Bigr| \leq 
      K_n \varepsilon(d), 
  \end{equation*}
 	whenever $z_1,\ldots,z'_n \in \Lambda$ satisfy $x_j, x'_j \geq d$  and 
	$x_j, x'_j \leq pN \ell^2/R - d$ for all $j$.  
	The infinite volume correlations are $p\ell^2/R$-periodic in the sense of Eq.~\eqref{eq:periodic}. 
	They do not depend on the precise choice of $\Lambda$.  
  \end{theorem}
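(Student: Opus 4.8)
The plan is to read off the theorem from the quasi-state decomposition of Sect.~\ref{sec:renewal} together with the renewal estimates of Sect.~\ref{sec:peierls}. Recall that the algebraic identity for powers of the Vandermonde determinant expresses both $\|\Psi_N\|_\Lambda^2$ and the un-normalized $n$-point function as sums over partitions $\mathcal X$ of a discrete volume of order $N$ into consecutive intervals with multiplicative weights, so that
\begin{equation*}
  \rho_n^{N,\Lambda}(z_1,\ldots,z_n;z'_1,\ldots,z'_n) = \sum_{\mathcal X} p_N^\Lambda(\mathcal X)\, \omega_{\mathcal X}(z_1,\ldots,z_n;z'_1,\ldots,z'_n),
\end{equation*}
where $p_N^\Lambda$ is the renewal probability measure on partitions determined by the interval weights and the boundary pieces attached to $a$ and $b$, and $\omega_{\mathcal X}$ is the explicit quasi-state kernel. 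The two structural properties I would use are: $\omega_{\mathcal X}$ is \emph{local}, i.e.\ it depends on $\mathcal X$ only through the cliques that meet a bounded window around the points $x_j,x'_j$, up to an error exponentially small in the distance from that window to the nearest renewal point outside it; and $\omega_{\mathcal X}$ is uniformly bounded in terms of that window, with Gaussian decay in the separations $|x_j-x'_j|$. Both follow from the explicit form of $\omega_{\mathcal X}$ and the estimate $1-\exp([z_j-z_k]/R)=1+O(\exp(-|x_j-x_k|/R))$ noted after Eq.~\eqref{eq:wave-fct-period}.

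Next I would analyse the measures $p_N^\Lambda$. The underlying partition function is one-dimensional and its thermodynamic behaviour is governed by the generating function of the interval weights. The non-perturbative input of Sect.~\ref{sec:peierls} is essential here: it guarantees, for \emph{every} $R>0$, that the interval weights decay fast enough that the associated renewal process has finite mean and exponentially decaying correlations. Granting this, standard renewal theory yields that, restricted to any fixed window lying at distance $\ge d$ from both nominal endpoints $0$ and $pN\ell^2/R$ of the effective support --- the Gaussian in Eq.~\eqref{eq:wave-fct-period} confines the configuration to $[0,pN\ell^2/R]$ up to exponentially small corrections, so the values of $a,b$ beyond this interval are irrelevant --- the law of $\mathcal X$ under $p_N^\Lambda$ converges, as $N\to\infty$, to the law under a translation-invariant renewal process $\P$, in total variation at rate $O(\mathrm e^{-cd})$ uniformly over $\Lambda$ as in Eq.~\eqref{eq:domchoice}, because the constant $C$ there is $N$-independent and the renewal bounds do not see the boundary. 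The process $\P$ is $p$-periodic --- invariant under shifting the discrete volume by $p$ units but not by fewer, the period $p$ being inherited from the corresponding periodicity of the combinatorial weights in Sect.~\ref{sec:renewal} --- and it is mixing.

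Combining the two ingredients proves the convergence statement. Fixing bulk points with $x_j,x'_j\in[d,pN\ell^2/R-d]$ and a window $W$ of size of order $n+\log(1/\varepsilon(d))$ around them, one splits the sum over $\mathcal X$ according to $\mathcal X|_W$: by locality $\omega_{\mathcal X}$ depends on $\mathcal X$ only through $\mathcal X|_W$ up to $O(\mathrm e^{-cd})$; by the previous step the law of $\mathcal X|_W$ under $p_N^\Lambda$ is within $O(\mathrm e^{-cd})$ of its law under $\P$; and the uniform bound plus Gaussian tails on $\omega_{\mathcal X}$ control the finitely many surviving terms and dominate the remainder. Hence
\begin{equation*}
  \rho_n^{N,\Lambda}(z_1,\ldots;z'_1,\ldots)\ \longrightarrow\ \rho_n(z_1,\ldots;z'_1,\ldots):=\int \dd\P(\mathcal X)\,\omega_{\mathcal X}(z_1,\ldots;z'_1,\ldots),
\end{equation*}
with the asserted rate $\varepsilon(d)$ and constants $K_n$ depending only on $n$. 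The right-hand side is manifestly independent of $\Lambda$, so all admissible domains give the same limit and the family $\rho_n$ is unique. For the periodicity, the magnetic translation $t(\ell^2/R\,\vect{e}_x)$ shifts the lattice of lowest-Landau-level orbitals, hence the discrete volume, by one unit; $\P$ is invariant under the shift by $p$ units and $\omega_{\mathcal X}$ transforms covariantly, $\omega_{\mathcal X+p}$ at $\{z_j\}$ equalling $\omega_{\mathcal X}$ at $\{z_j-p\ell^2/R\}$ dressed by the phases $u(py_j)$, so passing the invariance of $\P$ through the integral representation of $\rho_n$ gives Eq.~\eqref{eq:periodic}.

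The main obstacle is the second step: obtaining the exponential, boundary-independent mixing bounds for $p_N^\Lambda$ \emph{uniformly in the radius} rather than only for $R$ small compared with $\ell$. This is precisely where perturbation theory in the thin-cylinder limit fails and where one must import the Peierls-type argument from the theory of quasi one-dimensional Coulomb systems, and it is what forces the finite-mean property on the renewal process. Once that is in hand, the combinatorial bookkeeping of the first step and the passage from convergence of measures to convergence of correlations are routine, provided one is careful about the non-compactness of the sum over $\mathcal X$, which is again tamed by the finite-mean estimate.
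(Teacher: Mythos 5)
Your route is the paper's route: quasi-state decomposition $\la a\ra_N=\sum_{\mathcal{X}}p_N(\mathcal{X})\omega_{\mathcal{X}}(a)$, convergence of the renewal measures $\P_N\to\P$, locality and translational covariance of the $\omega_{\mathcal{X}}$, and finally the lattice-to-continuum passage via the Gaussian decay of the orbitals. However, three steps are asserted where the paper has to work, and two of the assertions are actually false or unavailable as stated. First, you claim the renewal process has exponentially decaying correlations and that $\P_N\to\P$ in total variation at rate $O(e^{-cd})$ \emph{for every} $R$. This is precisely what is open on thick cylinders; the non-perturbative input (Theorem~\ref{thm:finite-mean}, via the submultiplicativity of Prop.~\ref{lemma:sub-mult}) only gives finiteness of the mean $\mu(p,\gamma)$, whence the renewal theorem yields $u_k\to\mu^{-1}$ and the tail bound of Lemma~\ref{lem:long-intervals} with \emph{no rate}; the resulting $\varepsilon(d)$ is $\sup_{k\geq d}|u_k-\mu^{-1}|+(\sum_{k\geq d}kp_k)^{1/2}$. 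This happens to suffice because the theorem only asks $\varepsilon(d)\to0$, but as written your argument rests on an unproven (and possibly false) exponential estimate.

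Second, ``$\omega_{\mathcal{X}}$ is uniformly bounded'' fails: the quasi-states are not states off the diagonal (in the two-particle example the off-diagonal coefficient in Eq.~\eqref{eq:deco} is $\propto\exp(2\gamma^2)$), and for bosons the Wick monomials are unbounded operators. Controlling $\sum_{\mathcal{X}}p_N(\mathcal{X})|\omega_{\mathcal{X}}(a)|$ and the contribution of unfavorable partitions requires the Cauchy--Schwarz variant of Lemma~\ref{lem:cauchy-schwarz} and Prop.~\ref{prop62}, which reduce everything to $\la a^*a\ra_N^{1/2}$, together with the $N$-uniform moment bounds of Prop.~\ref{prop:corr-bounds}; those in turn rest on the cubic tail estimate for the particle excess (Theorem~\ref{thm:imbalance}), a second, independent use of the Peierls argument that your sketch omits entirely. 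Third, the $\Lambda$-independence does not follow from ``the Gaussian confines the configuration'': changing $\Lambda$ changes the weights of every rod and the normalization, and one must show this acts as a boundary perturbation decoupled from the bulk. The paper does this in Sect.~\ref{sec:domchoice} by inserting the diagonal operator $J_{N,\Lambda}$ and invoking the clustering of the infinite-cylinder state plus the moment bounds; some substitute for that argument is needed before you may conclude that all admissible $\Lambda$ give the same limit.
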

The error terms $K_n$ and $\varepsilon(d)$  may depend on $C$ in the choice of $\Lambda$, Eq.~\eqref{eq:domchoice}. The infinite volume 
correlation functions are invariant (in the usual sense) with respect to  arbitrary shifts in the $y$-direction, $y_j \to y_j +s$, and with respect to 
	    reversals $z_j \to - z_j$. 

\begin{remark}
	 The correlation functions $\rho_n$ determine uniquely a state $\la \cdot \ra$ on a 
	suitable bosonic 
	or fermionic observable algebra, and we also have convergence of states on that algebra, 
	    i.e., $\la \Psi_N, a \Psi_N \ra / ||\Psi_N||^2 \to \la a \ra$. 
\end{remark}

The periodicity statement in the previous theorem does not yet guarantee that $p \ell^2/R$ is the 
smallest period of the state, but the next theorem does.

\begin{theorem}[Symmetry breaking] \label{thm:symbreak}
  For all $\ell, R>0$ and $p \in \N$, there is some correlation function $\rho_n(z_1,\ldots,z_n; z_1,\ldots,z_n)$ 
  that  has  $p \ell^2/R$  as its smallest period. 
\end{theorem}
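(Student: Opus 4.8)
The plan is to deduce the statement from Theorem~\ref{thm:thermolim} together with the renewal picture of Section~\ref{sec:renewal} and the uniform estimates of Section~\ref{sec:peierls}. Let $t = t(\ell^2/R\,\vect{e}_x)$ be the one-step magnetic translation and, for $q\in\Z$, write $\la\cdot\ra_q = \la t^{q*}\,\cdot\,t^q\ra$ for the correspondingly shifted infinite-volume state; the correlation-function identity \eqref{eq:periodic} with $p$ replaced by $q$ is exactly the statement that $\la\cdot\ra_q = \la\cdot\ra$. By Theorem~\ref{thm:thermolim} we already have $\la\cdot\ra_p = \la\cdot\ra$, so the smallest period of the state divides $p$. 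If it were a proper divisor $d\mid p$, then $d$ would divide a maximal proper divisor $p/\pi$ ($\pi$ a prime factor of $p$), and \emph{every} correlation function would be invariant under $t^{p/\pi}$. Hence it suffices to produce a single bounded local observable $a$ in the lattice observable algebra of Section~\ref{sec:lattice} with $\la a\ra_q \ne \la a\ra$ for all $q\in\{1,\dots,p-1\}$ dividing $p$, and then pull $a$ back to a continuum correlation function via the Remark after Theorem~\ref{thm:thermolim}. One constraint is fixed from the start: as stressed in the introduction, $a$ cannot be a single occupation number, because the modulation of the one-particle density need not be controllable uniformly in $R$; the witness must be a genuinely higher-order local observable.

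Work in units $\ell=1$ and index one-particle orbitals by $k\in\Z$, so that $t$ acts as $k\mapsto k+1$ and the expected ``thin-limit'' occupation pattern is $1\,0^{p-1}\,1\,0^{p-1}\cdots$. Using the quasi-state decomposition of Section~\ref{sec:renewal}, $\la a\ra = \int \dd\P(\mathcal X)\,\omega_{\mathcal X}(a)$ with $\P$ the limiting $p$-periodic renewal measure on cluster configurations of $\Z_{\geq 0}$; concretely, $\P$ is invariant under the $p$-step shift but under no smaller shift, its cluster boundaries lie on $p\Z$, and the cylinder edge pins a boundary at $0$. This is why the uniqueness in Theorem~\ref{thm:thermolim} is compatible with a broken symmetry: the edge selects one of the $p$ phases, exactly as a boundary pins the crystal in a Coulomb system. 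For the witness I would take $a = B^*B$ with $B$ a local operator supported on the $p$ orbitals $\{0,\dots,p-1\}$ chosen so that $\omega_{\mathcal X}(B^*B)=0$ precisely for those $\mathcal X$ carrying a cluster boundary at $0$ (the existence of such a $B$, and the fact that $\omega_{\mathcal X}(B^*B)$ is genuinely sensitive only to the residue mod $p$ of nearby boundaries, is where the detailed structure of the quasi-states of Section~\ref{sec:renewal} is needed). Then $\la a\ra \le \P(\text{no boundary at }0)$, whereas $\la a\ra_q$ is controlled by $\P(\text{no boundary at }q\bmod p)$, and since all boundaries of $\P$ lie on $p\Z$, a boundary at a position $q\not\equiv 0\pmod p$ has probability zero, so that set has full measure.

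It remains to make the separation quantitative and uniform in $R$: one needs $\P(\text{cluster boundary at }0)\ge c(R)>0$, so that $\la a\ra \le \|B\|^2\bigl(1-c(R)\bigr)$ stays bounded away from $\la a\ra_q = \|B\|^2$ (after normalizing $B$). This is precisely the point where Section~\ref{sec:peierls} is used: the Peierls-type estimate shows the renewal process has finite mean uniformly in $R$, and, more sharply, that a short ``clean'' cluster occurs at a prescribed location with $R$-uniformly positive probability — equivalently, giant clusters cannot dominate the ensemble and erase the $\Z/p\Z$ phase. Combined with the finite-$N$ cluster representation of Section~\ref{sec:renewal} and the convergence $\P_N\to\P$, this yields $c(R)>0$ for every $R>0$, for $p$ of either parity. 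Transporting $a$ through the lattice–continuum dictionary then exhibits a diagonal correlation function $\rho_n$ with $n\le p$ whose smallest period is exactly $p\ell^2/R$.

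The main obstacle is the content of the last two paragraphs: showing that the $p$-periodic cluster structure visible in the finite normalization genuinely survives the limit $N\to\infty$ with a phase pinned by the edge, that this phase is not washed out as $R\to\infty$ (this is the role of the uniform Peierls input), and that there is a local observable whose expectation detects it. By comparison, the group-theoretic reduction at the start and the final translation back to continuum correlation functions are routine.
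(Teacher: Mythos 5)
Your strategy---construct a local lattice observable whose bulk expectation changes under a shift by $q\ell^2/R$ for $0<q<p$, exploiting the fact that the renewal points of $\P$ live on $p\Z$---is not the route the paper takes, and as written it has two genuine gaps, both sitting exactly where you write ``this is where the detailed structure of the quasi-states is needed.'' First, the existence of a local $B$ supported on $\{0,\dots,p-1\}$ with $\omega_{\mathcal X}(B^*B)=0$ precisely on the event ``renewal point at $0$'' is asserted, not constructed. The functions $v_X$ entering the quasi-states are defined only implicitly through the coefficients $a_N(\vect{m})$ of the $p$-th Vandermonde power and are not known explicitly for general $p$ and $R$, so no candidate is at hand and nothing in Sects.~\ref{sec:renewal}--\ref{sec:peierls} supplies one. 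Second, and more seriously, even granting such a $B$ the separation does not close: you bound $\la a\ra$ from above by $\|B\|^2\,\P(\text{no boundary at }0)$, but for the shifted expectation you only note that the event ``no boundary at $q$'' has full measure. That yields an \emph{upper} bound $\la a\ra_q\le\|B\|^2$, whereas your comparison needs the \emph{lower} bound $\la a\ra_q=\|B\|^2$. In general $\omega_{\mathcal X}(\tau_x^q(B^*B))<\|B\|^2$, and nothing prevents the $\P$-average of these values from coinciding with $\la a\ra$; without a lower bound on the shifted expectation the two numbers are not separated. (A smaller conceptual slip: the limiting state lives on partitions of all of $\Z$ and is stationary under $p$-shifts; there is no cylinder edge in the limit pinning a boundary at $0$---what breaks the symmetry is that renewal points almost surely lie on $p\Z$.)

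The paper avoids constructing a witness observable altogether. It notes that the diagonal correlation functions $\rho_n(z_1,\dots,z_n;z_1,\dots,z_n)$ are the factorial moment densities of a point process $P$ on the cylinder, that $P$ is the infinite-volume limit of the classical jellium Gibbs measures with density $\propto|\Psi_N|^2$ studied in \cite{ajj}, and that by \cite[Theorem 3.1]{ajj} the shift $P^\theta$ is mutually \emph{singular} to $P$ unless $\theta\in p\gamma\Z$. Theorem~\ref{thm:imbalance} together with Eq.~\eqref{eq:nak} gives moment bounds ensuring the point process is uniquely determined by its correlation functions, so two distinct (indeed singular) processes must differ in some $\rho_n$. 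To salvage your approach you would have to either exhibit the witness explicitly (which is what \cite{jls} does on thin cylinders, perturbatively, at the level of $\rho_1$) or import the singularity statement from \cite{ajj} as the paper does.
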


This should be contrasted with the result from~\cite{jls}: on thin cylinders ($R \ll \ell$),  we know 
that $p \ell^2/R$ is the smallest period of the one-particle density $\rho_1(z;z)$, which is 
a function of $x$ alone. Theorem~\ref{thm:symbreak} leaves open whether this stays true for thick cylinders. In any case, however, the one-particle density has a non-trivial period which must be a multiple of the filled lowest Landau period $\ell^2/R$~\cite[Lemma 2]{jls}.

The use of higher order correlation functions in the formulation of symmetry breaking may look 
confusing, so let us try to make it more concrete. 
Suppose we are given  samples of 
electron gas in the state $\Psi_N$, and we can make repeated measurements of the number of 
particles $N(x)$, $N(\tilde x)$ in small width $\epsilon$ annuli at two different abscissas $x$ and $\tilde x$. 
This gives two histograms which will look different unless $\tilde x -x$ is a multiple of $p\ell^2/R$.
On thick cylinders, the histograms might  have the same average $\la N(x) \ra = \la N(\tilde x)\ra$
but different shapes, so  we might have to look at higher moments $\la N(x)^m\ra$ in order to see the difference.

The $C^*$ algebraic content of Theorem~\ref{thm:symbreak} is that the states 
$\omega_1,\ldots,\omega_p$ associated with the bulk correlation functions $(\rho_n)$ 
and its shifts by $\ell^2 /R$  along the cylinder axis are distinct. We shall see that they are 
actually \emph{disjoint} in the sense of~\cite[Sect. 4.2.2, p. 370]{bratteli-robinson1}, see Sect.~\ref{sec:symbreak}. 
This corresponds to the mutual singularity of probability measures proven in \cite{ajj}. 

The correlation functions are clustering. We formulate the next theorem for integration on the infinite 
cylinder $\mathcal{Z}$, but note that the insensitivity to the precise domain of integration stated in Theorem~\ref{thm:thermolim} 
allows us to transfer the statement to other domains of integration.  
\begin{theorem}[Clustering] \label{thm:clustering}
	Let $m,n\in\N$ and $\{z_j,z'_j;\ 1\leq j \leq m\}$, $\{z_j,z'_j:\ m+1 \leq j \leq n\}$ 
	be coordinate clouds having distance $\geq d>0$ to each other. Then 
	  \begin{equation*}
		\Bigl| \rho_{m+n}^{N,\mathcal{Z}} (z_1,\ldots, z'_{m+n} )- 
		      \rho_m^{N,\mathcal{Z}} (z_1,\ldots,z'_m)\,\rho_n^{N,\mathcal{Z}} (z_{m+1},\ldots,z'_{m+n})\Bigr| 
		    \leq K_{m+n} \varepsilon(d).
	  \end{equation*}
	for some $N$-independent function  $\varepsilon(d)$ with 
	  $\varepsilon(d) \to 0$ as $d\to \infty$, and some suitable constant $K_{m+n}$. 
\end{theorem}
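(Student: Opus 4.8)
The plan is to deduce clustering from the mixing property of the limiting renewal process $\P$ together with the uniform-in-$N$ control of $\P_N$ obtained in Sect.~\ref{sec:peierls}. Recall from Sect.~\ref{sec:renewal} that $\rho_{m+n}^{N,\mathcal{Z}}$ admits a quasi-state decomposition: it is an average, with respect to the random partition $\mathcal{X}$ of law $\P_N$, of a functional $\omega_\mathcal{X}$ evaluated on the relevant product of creation/annihilation operators at the prescribed points. The essential structural input is that $\omega_\mathcal{X}$ \emph{factorizes over the intervals (cliques) of} $\mathcal{X}$: two observables supported on disjoint groups of cliques contribute multiplicatively, and an observable attached to abscissas far from a given clique receives from that clique only an exponentially small correction --- this is the same estimate $1-\exp([z_j-z_k]/R) = 1 + O(\exp(-|x_j-x_k|/R))$ already used in Sect.~\ref{sec:results}. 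Since we integrate over the infinite cylinder $\mathcal{Z}$, the renewal structure is in its cleanest form; Theorem~\ref{thm:thermolim} then transfers the conclusion to the other admissible domains.

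First I would isolate, under $\P_N$, the event that the partition $\mathcal{X}$ has a renewal point (an interval boundary) in the abscissa window separating the two coordinate clouds. On the complement some single clique straddles the gap, hence has abscissa-length $\geq d$; because the inter-arrival law of the renewal process has finite mean (Sect.~\ref{sec:peierls}), the $\P_N$-probability that a fixed location lies in a clique of length $\geq d$ is bounded, uniformly in $N$, by a quantity $\varepsilon_1(d)\to 0$ --- a union bound over the $O(d)$ relevant locations combined with the length-biased tail $\sum_{\ell\geq d}\ell\,\mu(\ell)\to 0$, which is finite precisely because the mean is. On the good event, conditioning on a renewal point splits $\omega_\mathcal{X}$ into a left part carrying the first cloud and a right part carrying the second, up to the exponentially small cross terms coming from the cliques adjacent to the cut.

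Then I would assemble the pieces. Writing $\rho_{m+n}^{N,\mathcal{Z}}$ as (good-event contribution) $+$ (bad-event contribution), the bad-event contribution is at most $K\varepsilon_1(d)$ since the quasi-states are uniformly bounded on the fixed finite family of observables at hand; the good-event contribution, after the factorization, equals $\rho_m^{N,\mathcal{Z}}\,\rho_n^{N,\mathcal{Z}}$ up to (i) the $\varepsilon_1(d)$ error of subtracting and re-adding the bad event inside each factor and (ii) the exponential cross terms --- both absorbed into a single $\varepsilon(d)\to 0$ with an $N$-independent constant $K_{m+n}$ depending only on $m+n$ (and on $R,\ell,p$).

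\textbf{Main obstacle.} The delicate point is the bookkeeping of the factorization on the good event: one must check that $\omega_\mathcal{X}$ genuinely splits across a renewal point when evaluated on a product of field operators located at prescribed points --- not merely that the weights $p_N(\mathcal{X})$ carry a renewal structure --- and that the contributions of the cliques abutting the cut are controlled uniformly in $N$. This is exactly where the explicit form of $\omega_\mathcal{X}$ from Sect.~\ref{sec:renewal} and the quantitative clique-length bounds from Sect.~\ref{sec:peierls} must be combined carefully; everything downstream is then a routine union bound, and in particular the argument produces no rate for $\varepsilon(d)$, consistent with the fact that finiteness of the correlation length is not known on thick cylinders.
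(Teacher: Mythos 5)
Your high-level skeleton matches the paper's: decompose $\la ab\ra_N$ via the quasi-states $\omega_\mathcal{X}$, split according to whether $\mathcal{X}$ has renewal points flanking the gap, use the length-biased tail $\sum_{k\geq d}k\,p_k$ (finite because the mean is, which is exactly what Theorem~\ref{thm:finite-mean} and Lemma~\ref{lem:long-intervals} provide) to kill the bad event, and use the factorization of $\omega_\mathcal{X}$ across a renewal point on the good event. However, there are two genuine gaps, both located precisely at the point you flag as the ``main obstacle'' but do not resolve. First, the factorization $\omega_\mathcal{X}(ab)=\omega_\mathcal{X}(a)\,\omega_\mathcal{X}(b)$ when a renewal point of $\mathcal{X}$ separates the supports is \emph{exact}, by the product rule~\eqref{eq:product-rule}; there are no ``exponentially small cross terms from the cliques adjacent to the cut.'' The estimate $1-\exp([z_j-z_k]/R)=1+O(e^{-|x_j-x_k|/R})$ is only heuristic motivation and plays no role in the proof. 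This is not a harmless slip: if the splitting were only approximate you would need a quantitative control of those cross terms uniformly in $N$, which is a different and much harder analysis.

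Second, and more seriously, your bound on the bad-event contribution --- ``at most $K\varepsilon_1(d)$ since the quasi-states are uniformly bounded on the fixed finite family of observables'' --- does not go through, because the $\omega_\mathcal{X}$ are \emph{not} states: they contain off-diagonal terms $|u_\mathcal{Y}\ra\la u_\mathcal{Z}|$ with $\mathcal{Y}\neq\mathcal{Z}$, and one does not have $|\omega_\mathcal{X}(a)|\leq\|a\|$ (see the two-particle example, Eq.~\eqref{eq:deco}, where a coefficient $\exp(2\gamma^2)/3$ appears). The paper handles this with the Cauchy--Schwarz variant (Lemma~\ref{lem:cauchy-schwarz}), which converts the sum over bad partitions into $\la (ab)^*ab\ra_N^{1/2}\,\P_N(\text{bad})^{1/2}$, and --- crucially --- with a combinatorial lemma in the occupation-number picture: if a pair $(\vect{n},\vect{n'})$ has a nonvanishing matrix element of $ab$ and no \emph{common} renewal point in the separating window, then one of $\vect{n},\vect{n'}$ \emph{individually} has no renewal point there. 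That lemma uses particle-number conservation of $a$ together with the admissibility (dominance) condition~\eqref{eq:admissible-nj}, and it is what allows the off-diagonal pairs to be controlled by the single-partition probability $\P_N(\mathcal{E}_d^{\mathrm{c}})$. Without it, the bad set is a set of \emph{pairs} of partitions and cannot be bounded by a $\P_N$-probability at all. For bosons one additionally needs the uniform moment bounds of Prop.~\ref{prop:corr-bounds} to control $\la (ab)^*ab\ra_N$ uniformly in $N$, and a final (routine) passage from lattice to continuum correlations via the Gaussian tails of the $\psi_k$; neither appears in your outline.
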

On thin cylinders, it was shown in~\cite{jls} that there is exponential clustering, i.e., 
$\varepsilon(d) \to 0$ exponentially fast. Whether this stays true on thick cylinders is an open question. 

Clustering is not only of interest in itself, but also enters the proof of the insensitivity of bulk correlation functions to the precise 
choice of domain of integration. We will see, indeed, that a switch from integration on the infinite cylinder to integration on a finite
cylinder can be seen as a perturbation localized at the boundary; clustering allows us, then, to show that 
this boundary perturbation does not affect bulk correlations. 

 Theorem~\ref{thm:clustering} is a statement about diagonals: in each of the coordinate clouds, the number of primed 
variables equals the number of unprimed variables -- we do not separate $z_j$ from $z'_j$. 
There is off-diagonal decay too, but do not offer any explicit estimate, except for 
the one-particle matrix.

\begin{prop}[Off-diagonal decay of the one-particle matrix] \label{prop:off-diagonal}
	For some suitable constant $K$ and all $N$, $z$, $z'$, 
	\begin{equation*}
		\bigl| \rho_1^{N,\mathcal{Z}} (z;z') \bigr| \leq K \exp\Bigl(  - \frac{(x-x')^2}{4 \ell^2} \Bigr).
	\end{equation*}		
\end{prop}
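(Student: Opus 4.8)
The plan is to exploit the explicit form of the wave function together with an orthogonality structure in the angular variable. First I would expand the power of the Vandermonde factor $\prod_{j<k}(\mathrm{e}^{z_k/R}-\mathrm{e}^{z_j/R})^p$ in its natural orthogonal basis: writing each factor in terms of $\mathrm{e}^{z/R}$, the wave function $\Psi_N$ is, up to the Gaussian $\exp(-\tfrac{1}{2\ell^2}\sum x_k^2)$, a polynomial in the variables $\mathrm{e}^{z_k/R}$, hence a linear combination of monomials $\prod_k \mathrm{e}^{m_k z_k/R}$ with nonnegative integer exponents. Each such monomial, multiplied by the Gaussian, is (up to normalization) a single-particle Landau orbital $\varphi_{m_k}(z_k)$ where $\varphi_m(z)\propto \exp(m z/R - x^2/(2\ell^2))$; completing the square, $|\varphi_m(x+iy)|^2\propto \exp(-(x-m\ell^2/R)^2/\ell^2)$, a Gaussian centered at $x=m\ell^2/R$. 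The crucial point is that distinct orbitals $\varphi_m$ are mutually orthogonal in $L^2(\mathcal{Z})$ because the $y$-integral $\int_0^{2\pi R}\mathrm{e}^{i(m-m')y/R}\,\dd y$ vanishes for $m\neq m'$ (here $\Lambda=\mathcal{Z}$ is essential so the $x$-integral is unconstrained Gaussian).

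Given this, I would write $\Psi_N=\sum_{\mathbf{m}} c_{\mathbf{m}}\,\Phi_{\mathbf{m}}$ where $\Phi_{\mathbf{m}}$ is the (anti)symmetrized product of orbitals $\varphi_{m_1},\dots,\varphi_{m_N}$, the sum running over strictly ordered (for $p$ odd) or ordered multi-indices, and then the one-particle reduced density matrix becomes diagonal in the orbital basis:
\begin{equation*}
  \rho_1^{N,\mathcal{Z}}(z;z') = \sum_{m\geq 0} \lambda_m\, \varphi_m(z)\,\overline{\varphi_m(z')},
\end{equation*}
with occupation numbers $\lambda_m\in[0,1]$ (for fermions; bounded by a constant depending on $p$ for bosons) satisfying $\sum_m \lambda_m = N$. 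This diagonalization is the standard consequence of rotational invariance: $[\rho_1,\text{rotations}]=0$ forces the one-particle matrix to commute with the generator of $y$-translations, whose eigenfunctions are exactly the $\varphi_m$. Then
\begin{equation*}
  |\rho_1^{N,\mathcal{Z}}(z;z')| \leq \sum_{m\geq 0} \lambda_m\,|\varphi_m(z)|\,|\varphi_m(z')| \leq \Bigl(\sup_m |\varphi_m(z)|\,|\varphi_m(z')|\Bigr)\sum_m\lambda_m \cdot (\text{no, } \lambda_m\leq 1),
\end{equation*}
so more carefully $|\rho_1^{N,\mathcal{Z}}(z;z')|\leq \sum_m |\varphi_m(z)|\,|\varphi_m(z')|$ using $\lambda_m\leq 1$, and with the normalized orbitals $|\varphi_m(z)|^2 = (\pi\ell^2)^{-1/2}(2\pi R)^{-1}\exp(-(x-m\ell^2/R)^2/\ell^2)$ the sum is a Gaussian sum in $m$ that evaluates (by comparison with an integral or by completing the square in the product) to a constant times $\exp(-(x-x')^2/(4\ell^2))$. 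Indeed $|\varphi_m(z)||\varphi_m(z')| = C\exp\bigl(-\tfrac{1}{2\ell^2}[(x-m\ell^2/R)^2+(x'-m\ell^2/R)^2]\bigr)$ and the bracket equals $\tfrac12(x-x')^2 + 2(m\ell^2/R - (x+x')/2)^2$, so summing over $m\in\Z$ (an overestimate) gives $\exp(-(x-x')^2/(4\ell^2))$ times a bounded theta-like factor; the $R$-dependent prefactor is absorbed into $K$.

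The main obstacle is controlling the occupation numbers --- specifically, justifying $\lambda_m\leq 1$ for fermions and $\lambda_m\leq \text{const}(p)$ for bosons, and more basically making rigorous the claim that $\rho_1$ is diagonal in the orbital basis with bounded entries. For fermions $\lambda_m\leq 1$ is the Pauli principle (the one-particle density matrix of any $N$-fermion state has eigenvalues in $[0,1]$), which is immediate; for bosons one needs a quantitative bound, but since $\Psi_N$ is a specific polynomial-times-Gaussian whose monomial expansion has controlled multiplicities, $\lambda_m$ is bounded by a $p$-dependent combinatorial constant uniformly in $m$ and $N$ --- this is where one would either cite the structure from Section~\ref{sec:renewal} or argue directly that each orbital $m$ appears in at most boundedly many of the basis states $\Phi_{\mathbf{m}}$ with nonzero coefficient. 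Once $\lambda_m = O(1)$ uniformly, the Gaussian sum argument above is routine and gives the stated bound with a constant $K$ depending only on $p$, $R$, $\ell$ (hence, after the rescaling $\ell=1$, only on $p$ and $\gamma$).
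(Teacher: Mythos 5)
Your argument follows the same route as the paper's own proof: diagonality of $\rho_1^{N,\mathcal{Z}}$ in the orbital basis (a consequence of conservation of total $y$-momentum, since the power of the Vandermonde determinant is a homogeneous polynomial in the $\mathrm{e}^{z_k/R}$), a uniform bound on the occupation numbers, and the identity $(x-a)^2+(x'-a)^2=\tfrac12(x-x')^2+2\bigl(a-\tfrac{x+x'}{2}\bigr)^2$ followed by summation over $a=m\ell^2/R$, which produces the Gaussian factor times a bounded theta-like sum. For fermions ($p$ odd) your proof is complete, since $\la \hat n_m\ra_N\leq 1$ is the Pauli principle; the paper itself remarks that this case is trivial.

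For bosons, however, there is a genuine gap, and it sits exactly where the real content of the proposition lies. The uniform bound $\la\hat n_m\ra_N\leq D$ is Prop.~\ref{prop:corr-bounds}, and it is \emph{not} a soft combinatorial statement about the expansion of $\Psi_N$. Your suggested justification --- that ``each orbital $m$ appears in at most boundedly many of the basis states $\Phi_{\vect{m}}$ with nonzero coefficient'' --- is false: the number of admissible multi-indices containing a given $m$ grows with $N$, and the quantity one must control is the weighted average $\sum_{\vect{m}}|a_N(\vect{m})|^2\, n_m(\vect{m})/C_N$, which \emph{is} $\la\hat n_m\ra_N$; so the argument as stated is circular. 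The paper obtains the bound by leaving the lattice picture entirely: the number of particles in the width-$\gamma$ annulus centered at $x=m\gamma$ is written as $1+K(b;\vect{z})-K(a;\vect{z})$ in terms of the particle excess function, whose moments are bounded uniformly in $N$ and in the position of the annulus by the Peierls-type estimate of Theorem~\ref{thm:imbalance} (tail bounds $\leq C\exp(-cn^3)$, proved via the plasma analogy following the methods of quasi-1D Coulomb systems); these continuum moment bounds are then transferred to the lattice occupation numbers through the factorial-moment identity $(c_k^*)^n(c_k)^n=\hat n_k^{[n]}$ and the positivity of the $y$-integrated diagonal terms. Without this input, or an equivalent quantitative control of multiple occupancy, the bosonic case of the proposition is not proved.
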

Prop.~\ref{prop:off-diagonal} is proven in Sect.~\ref{sec:bosons}. 

\section{Interacting fermions and bosons} \label{sec:lattice}

For the proofs, it is convenient to view the wave function as a wave function for 
fermions, or bosons, on a one-dimensional lattice. Here we explain how this is done. 
In addition, we recall the expression of the many-body Hamiltonian whose exact ground state Laughlin's function is, and we show that the monomer-dimer function from~\cite{jls} is the exact ground state of a suitable Hamiltonian too. 

\paragraph{One-dimensional lattice} Straightforward algebra shows that $\Psi_N(z_1,\ldots,z_N)$ is a linear combination of products of one-particle functions 
\begin{equation}
\begin{aligned} \label{eq:basis}
  \psi_k(z) & \propto \exp( k z/ R) \exp\Bigl( -\frac{ 1}{2 \ell^2} x^2\Bigr) \\
		& \propto \exp( i k y / R) \exp\Bigl( - \frac{1}{2\ell^2} (x- k \ell^2/R)^2\Bigr)
\end{aligned}
\end{equation}
with $0 \leq k \leq pN - p$ ; see Eq.~\eqref{eq:expansion} below. The proportionality constants are chosen positive and such that the $\psi_k(z)$, $k\in \Z$, are
orthonormal in $L^2(\mathcal{Z})$. 

The one-particle function $\psi_k(z)$ is localized in the cylinder axis direction around $x = k \ell^2/R$. 
Thus we may identify  lattice sites $k \in \Z$ with the orbitals $\psi_k(z)$ and view   
our wave function as a function for bosons, or fermions, on a one-dimensional lattice. 

It is convenient, therefore, to work with creation and annihilation operators $c_k^*$ and $c_k$, whose definition 
we briefly recall. When $p$ is odd, let $\mathcal{F}$ be the fermionic Fock space 
associated with the one-particle space $L^2(\mathcal{Z})$. 
 For $k \in \Z$, the creation operator $c_k^*$ is the linear map in $\mathcal{F}$ given by $c_k^* F = \psi_k \wedge F$, 
with the  antisymmetrized product
 $f\wedge g (z,w) = [f(z) g(w) - g(z) f(w)]/\sqrt{2}$.  The annihilation operator $c_k$ is the adjoint of $c_k^*$ in $\mathcal{F}$. 
When $p$ is even, the definitions are similar, except the wedge product is replaced by a symmetrized tensor product 
and we have to work in bosonic Fock space.  Thus $c_k^*$ creates a particle at lattice site $k$, or in the orbital $\psi_k(z)$, and $c_k$ annihilates a particle.
The number of particles at site $k$ is given by the occupation number operator $\hat n_k:= c_k^* c_k$. 

We are going to investigate expectations 
\begin{equation*}
	\la a \ra_{N,\Lambda} := \frac{\la \Psi_N, a \Psi_N\ra_\Lambda}{||\Psi_N||_{\Lambda}^2}.
\end{equation*}
for observables $a$ that are products of creation and annihilation operators $c_k^*$ and $c_k$. 
The continuum $n$-point correlation functions is easily recovered from 
expectations of such operators. For example, the one-particle function is
\begin{equation*}
	\rho_1^{N,\Lambda}(z;z') 
		= \sum_{k,m\in \Z} \la c_m^*c_k \ra_{N,\Lambda} \psi_k(z)\, \overline{\psi_m(z')}
\end{equation*}
 In our case, the expectations $\la c_m^*c_k\ra_{N,\Lambda}$ actually vanish when $k \neq m$.

When $p=1$, the wave function is a simple Slater determinant and describes independent fermions. As soon as $p \geq 2$, it has built-in 
correlations and is the ground state of a suitable many-body Hamiltonian. 

\paragraph{Many-body Hamiltonian}  
The wave function $\Psi_N$ at filling factor $1/p$ has  a zero of order $p$ when two particles get close, $z_j - z_k \to 0$.  In the subspace
spanned by products of one-particle functions $\psi_0(z),\ldots, \psi_{pN-p}(z)$, $\Psi_N$  is the unique 
wave function with this property. This leads to the characterization of $\Psi_N$ as the unique ground state 
of a suitably defined Hamiltonian~\cite{haldane,pt,tk}. 
The aim of this section is to recall the form of this Hamiltonian 
for the cylinder problem, see~\cite{lee-leinaas,sfl} for similar expressions at filling factor $1/3$.
 This will not be used in any of our later proofs, but helps the interpretation of our results in terms of  a quantum many-body problem. 

Let $H_n(z)$, $n\in \N_0$,  be the Hermite polynomials, given by the generating function relation
\begin{equation*}
	\exp( 2 t z - z^2)  = \sum_{n=0}^\infty \frac{t^n}{n!} H_n(z)
\end{equation*}
and $F_n(t):= H_n(t) \exp( - t^2 /4)$. At filling factor $1/p$, let  $F(t):= F_0(t)+\cdots + F_{p-1}(t)$; 
alternatively, sum only those $F_k(t)$ where $0 \leq k \leq p-1$ and $k$ has the same parity as $p$. 
For example,  $F(t) = F_1(t)$ when $p=3$. 
Consider the formal sum  over $k_1,k_2,n_1,n_2 \in \Z$ with $k_1+k_2 = n_1 +n_2$: 
\begin{equation} \label{eq:formal-1}
	H = \sum_{k_1 + k_2 = n_1 + n_2  } F\bigl( (n_1-n_2) \gamma \big) F\bigl((k_1-k_2) \gamma\bigr)
		c_{k_1}^* c_{k_2}^* c_{n_2} c_{n_1}.
\end{equation}
(Recall $\gamma = \ell/R$.)
We 
define the finite volume Hamiltonians via free boundary conditions: 
For $\mathcal{L} = \{0,\ldots, pN-p\}$ we let $H_\mathcal{L}$ be the Hamiltonian obtained from this formal 
sum by restricting summation to integers $k_1,k_2,n_1,n_2 \in \mathcal{L}$. 
\begin{prop} \label{prop:ground-state}
	The Hamiltonian $H_\mathcal{L}$ is positive $H_\mathcal{L} \geq 0$, $\Psi_N$ is an exact ground state $H_\mathcal{L} \Psi_N =0$, and $\Psi_N$ is the unique  
	ground state of $H_\mathcal{L}$, considered as a Hamiltonian for $N$ particles in the finite lattice 
	$\mathcal{L} = \{0,1,\ldots, pN-p\}$.  	
\end{prop}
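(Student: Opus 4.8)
The plan is to prove the three claims -- positivity, $\Psi_N$ being a zero-energy state, and uniqueness -- in that order, exploiting the explicit structure of $H_{\mathcal{L}}$ as a sum of squares.

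\medskip

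\textbf{Positivity and the zero-energy property.} First I would rewrite the formal sum \eqref{eq:formal-1} in manifestly positive form. For fixed total momentum $M = k_1 + k_2 = n_1 + n_2$, introduce the operator
\begin{equation*}
	A_M^* := \sum_{\substack{k_1 + k_2 = M}} F\bigl((k_1 - k_2)\gamma\bigr) c_{k_1}^* c_{k_2}^*,
\end{equation*}
restricted to $k_1, k_2 \in \mathcal{L}$; then $H_{\mathcal{L}} = \tfrac14 \sum_M A_M^* A_M \geq 0$ (the factor accounting for the double counting coming from exchanging the two creation operators, and the constraint $k_1+k_2=n_1+n_2$ being exactly what couples an $A_M^*$ to its adjoint). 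Hence $H_{\mathcal{L}}\Psi_N = 0$ if and only if $A_M \Psi_N = 0$ for every $M$. To see the latter, I would use the continuum picture: $A_M$ annihilates a two-particle state precisely when that state, viewed as an analytic function of $z_1, z_2$ after stripping the common Gaussian, has a zero of order $\geq p$ on the diagonal $z_1 = z_2$ (on the cylinder, $\exp(z_1/R) = \exp(z_2/R)$). The point is that $F(t) = \sum_{k=0}^{p-1} F_k(t)$ is built so that $\sum_{j} F((k_1-k_2)\gamma)\,\psi_{k_1}\otimes\psi_{k_2}$, summed over $k_1+k_2 = M$, is (a Gaussian times) the two-body relative wave function $(e^{z_1/R} - e^{z_2/R})^p$ expanded in the $\psi_k$ basis -- essentially the Hermite generating-function identity stated just before the proposition, applied to the relative coordinate. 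Since $\Psi_N$ has exactly a $p$-th order zero on every pair diagonal, pairing it against this ``lowest admissible relative angular momentum'' combination gives zero; this is the content of $A_M \Psi_N = 0$. I would make this rigorous by writing $\langle A_M \Phi, \Psi_N\rangle = \langle \Phi, A_M^* \Psi_N \rangle$ and showing $A_M^*\Psi_N$ has no component of relative degree $< p$ in any pair.

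\medskip

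\textbf{Uniqueness.} For the uniqueness statement I would argue that any $N$-particle state $\Phi$ supported on orbitals $\psi_0,\dots,\psi_{pN-p}$ with $H_{\mathcal{L}}\Phi = 0$ must, by the sum-of-squares form, satisfy $A_M\Phi = 0$ for all $M$; translating back to the continuum, $\Phi$ is then (up to the Gaussian) an antisymmetric -- or symmetric, for $p$ even -- polynomial in the variables $w_j := e^{z_j/R}$ that vanishes to order $p$ on every diagonal $w_j = w_k$, hence divisible by $\prod_{j<k}(w_j - w_k)^p$. The degree constraint coming from the orbital cutoff $0 \leq k \leq pN-p$ forces the quotient to be a constant, so $\Phi \propto \Psi_N$. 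The only subtlety is the reduction from ``$A_M$-annihilated for all $M$'' to ``divisible by the $p$-th power Vandermonde in the $w_j$'': this is the standard fact that a function vanishing to order $p$ on all pairwise diagonals is divisible by $\prod(w_j-w_k)^p$, valid for polynomials; one has to note that after removing the Gaussian, $\Phi/\prod(\text{Gaussian})$ is genuinely a (Laurent, then by the $k\geq 0$ cutoff, honest) polynomial in the $w_j$, so the divisibility argument applies. I would cite \cite{haldane,pt,tk} for the analogous plane/disk statement and indicate that the cylinder version goes through verbatim with $z \mapsto e^{z/R}$.

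\medskip

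\textbf{Main obstacle.} The step I expect to require the most care is the precise matching between the two-body operator $A_M$ and the ``zero of order $p$'' condition -- i.e.\ verifying that the particular function $F = F_0 + \dots + F_{p-1}$ (or its parity-restricted variant) is exactly the one whose coefficients reproduce the relative-coordinate wave function $(w_1 - w_2)^p e^{-(\cdots)}$ in the $\psi_k$ basis, with nothing left over. This is a computation with Hermite polynomials and the completion-of-squares already used in \eqref{eq:wave-fct-period}, but getting the normalizations, the $\gamma$-scaling, and the center-of-mass versus relative-coordinate separation exactly right is where an error would most easily creep in; everything else (the sum-of-squares rearrangement, the divisibility/degree count) is routine once that identity is in hand.
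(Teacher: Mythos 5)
Your architecture is the same as the paper's: positivity via a sum of squares over total-momentum sectors (the paper writes $H=\sum_s B_s^*B_s$ with $B_s=\sum'_k F(2k\gamma)c_{s-k}c_{s+k}$, which is your $A_M$ with $M=2s$ up to a harmless counting constant), the zero-energy property via reduction to the two-body relative coordinate and the Hermite generating function, and uniqueness via the standard fact that the only (anti)symmetric polynomial in $w_j=e^{z_j/R}$ of the allowed degree vanishing to order $p$ on all diagonals is the $p$-th Vandermonde power (a fact the paper states before the proposition rather than reproving). So the plan is the right one.

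However, the step you yourself single out as the crux is stated backwards, and as written it would fail. You claim that $A_M^*|\mathrm{vac}\rangle=\sum_{k_1+k_2=M}F((k_1-k_2)\gamma)\,\psi_{k_1}\otimes\psi_{k_2}$ is (the momentum-$M$ component of) the two-body Laughlin function $(e^{z_1/R}-e^{z_2/R})^p$ times a Gaussian. If that were true, then for $N=2$ the quantity $A_M\Psi_2=\langle A_M^*\mathrm{vac},\Psi_2\rangle$ would be the squared norm of that component — nonzero — and $\Psi_2$ would \emph{not} be annihilated. The correct statement, which is exactly what the paper's expansion $e^{x^2-z^2}\Psi(Z+z,Z-z)=\sum_n\frac{z^n}{n!}\Phi_n(Z)$ produces, is dual to yours: $F_n$ is the coefficient vector whose pairing with $\lambda(m_1,m_2)$ yields the $n$-th Taylor coefficient of the relative wave function at the diagonal, so $\sum_{k_1+k_2=M}F_n((k_1-k_2)\gamma)\psi_{k_1}\otimes\psi_{k_2}$ spans the relative-degree-$n$ states, i.e.\ (for $n<p$) the \emph{orthogonal complement} of the order-$\geq p$-zero subspace; $A_M\Psi_N=0$ expresses orthogonality to that complement, not proportionality to the Laughlin pair function. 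A second, smaller point: your ``precisely when'' (the converse, which your uniqueness argument relies on) needs each $\Phi_n$, $n<p$, to vanish separately, whereas $A_M\Phi=0$ with $F=\sum_nF_n$ is a single linear condition per $M$. For $p\leq 3$ the parity of $F_n$ under $(m_1,m_2)\mapsto(m_2,m_1)$ kills all but one $F_n$ after (anti)symmetrization and the two conditions coincide; for $p\geq 4$ you must use the operators built from each $F_n$ individually (the paper's own proof is also silent on this point). With the identification of $F$ corrected and that caveat addressed, the rest of your outline goes through.
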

The content of the Proposition seems is well-known in the physics literature. For mathematical completeness, we include nevertheless a proof adapted to the cylinder setting.

\begin{proof}[Proof]
	The positivity of the Hamiltonian follows from an alternative expression of Eq.~\eqref{eq:formal-1}
	\begin{equation}\label{eq:formal-2}
		H  = \sum_{ s \in (1/2) \Z} B_s^* B_s, \quad 
			B_s  = \sideset {}{'}\sum_{k} F(2k\gamma) c_{s-k} c_{s+k}.
	\end{equation}
	see~\cite[Eq. (2)]{lee-leinaas}.
	The first sum is over half-integers $s$, the second sum is either over integers $k\in \Z$ (when $s$ is integer) 
	or half-integers $k \in ((1/2) \Z)\backslash \Z$ (when $s$ is half-integer but not integer); note that the sum of two integers and their difference have the same parity, 
	 $2 s = k_1 + k_2 = k_1 - k _2 \mod 2$.  
	For the ground state property, let us first look at two-particle functions. Let 
	\begin{equation*}
		\Psi(z_1,z_2) = \sum_{m_1,m_2} \lambda(m_1,m_2) \psi_{m_1}(z_1) \psi_{m_2}(z_2)
	\end{equation*}
	with the normalized one-particle functions $\psi_m(z) \propto \exp( - m^2 \gamma^2/2) \exp( i m\gamma y - x^2 /2)$. 
	We use  the generating functions relations for the Hermite polynomials: 
	and expand 
	\begin{equation*}
		\exp( x^2 - z^2 ) \Psi(Z+z,Z-z) = \sum_{n=0}^\infty \frac{z^n}{n!} \Phi_n( Z). 
	\end{equation*}
	where  
	\begin{equation*}
		\Phi_n(Z)  \propto \sum_{m_1,m_2} \lambda(m_1,m_2)  F_n\bigl( (m_1- m_2)\gamma \bigr)
					\psi_{m_1+m_2}(Z).
	\end{equation*}
	Looking at the $L^2$ norm of $\Phi_n(Z)$ we see that  $\Phi_n(Z)\equiv 0$ if and only if 
	\begin{equation} \label{eq:vanish}
		\sum_{m_1+m_2=k_1+k_2} \overline{\lambda(m_1,m_2)} F_n\bigl(( m_1 - m_2)\gamma \bigr) F_n\bigl(( k_1 - k_2)\gamma \bigr) \lambda(k_1,k_2) = 0. 
	\end{equation}
	 $\Psi(z_1,z_2)$ has a zero of order $\geq p$  as two particles get close if and only if $\Phi_0(Z),\ldots,\Phi_{p-1}(Z)$  vanish 
	identically, which because of Eq.~\eqref{eq:vanish} happens if and only if $\la \Psi, H_\mathcal{L} \Psi \ra =0$. Note that, depending on the parity of $p$, 
	symmetry or antisymmetry gives $\Phi_{2n}(Z)\equiv 0$ or $\Phi_{2n+1}(Z)\equiv 0$ for free. This proves Prop.~\ref{prop:ground-state} for $N=2$ particles. For more than $2$ particles, the previous argument is applied separately for each pair of particle coordinates $(z_j,z_k)$, $j<k$. 	 
\end{proof}

\paragraph{Monomer-dimer function} 
In~\cite[Sect. 2]{jls} a solvable model was introduced; this was motivated, roughly, by neglecting the overlap between next-nearest Gaussians, or replacing Gaussians with functions of compact support. Here we explain an alternative motivation: it turns out that the  monomer-dimer function is an exact ground state of a Hamiltonian obtained from $H$ by keeping only leading order terms.

 We only look at $p=3$.  We start as in Eq.~\eqref{eq:formal-1}
but keep only terms with $|k_1 - k _2| \leq 3$ and $|n_1 - n_2 |\leq 3$. This gives, up to a factor 
$\exp( - \gamma^2/2)$, the \emph{monomer-dimer Hamiltonian}
\begin{multline*}
	 H^\mathsf{MD} = \sum_k \Bigl( \hat n_k \hat n_{k+1} +  4 \exp(- 3 \gamma^2/2) \hat n_k \hat n_{k+2} 
			+ 9 \exp(- 4 \gamma^2) \hat n_k \hat n_{k+3} \\
				+ 3 \exp(- 2 \gamma^2) (c^*_{k+1} c^*_{k+2} c_{k+3} c_k + c^*_{k} c^*_{k+3} c_{k+2} c_{k+1} ) \Bigr).
\end{multline*}
Equivalently, 
\begin{multline} \label{eq:mondim}
	H^\mathsf{MD} = 4 \exp(- 3 \gamma^2/2) \hat n_0 \hat n_2 \\
		+ \bigl(c^*_1 c^*_2 + 3 \exp( - 2 \gamma^2) c^*_0 c^*_3 \bigr) \bigl(c_2 c_1 + 3 \exp(- 2 \gamma^2)c_3 c_0 \bigr) 
		+\text{ translates}.
\end{multline}
These sums are, of course, formal; given a finite lattice $\mathcal{L} = \{0,1,\ldots, 3N-3\}$ we associate as before 
a Hamiltonian $H_\mathcal{L}^\mathsf{MD}$ via free boundary conditions. 

Note that in a thin strip (large $\gamma$) limit, truncating the interaction range amounts to keeping only leading 
order contributions, i.e.,  we neglect high powers of $\exp( - \gamma^2)$. 

We define a many-body wave function as follows: for $k \in \Z$, let 
\begin{equation*}
	A_{\{k\}} := c_{3k}^*,\qquad A_{\{k,k+1\}} := - 3\exp(-2\gamma^2) c^*_{3k+1}c^*_{3k+2}.
\end{equation*}
 Thus 
a monomer operator creates a single particle and a dimer operator creates a pair of particles. 
The wave function is a sum  over monomer-dimer partitions $(X_1,\ldots,X_D)$ of $\{0,1,\ldots,N-1\}$ (labeled from left to right):
\begin{equation*}
	\Psi_N^\mathsf{MD}:= \sum_{(X_1,\ldots, X_D)} A_{X_1}\cdots A_{X_D}  |\text{vacuum} \ra.
\end{equation*}
$\Psi_N^\mathsf{MD}$ is the sum of $|0\,3\,6\cdots (3N-3)\ra$, corresponding to a partition consisting  of monomers only, 
and orthogonal terms obtained by hopping one or several  pairs of particles, for example, $ 3\,6 \to 4\,5$. 	In particular, 
all particle pairs have distance $1$, $3$ or higher; there are no pairs of particles at mutual  distance~$2$.

We have an analogue of Prop.~\ref{prop:ground-state}. We do not, however, prove uniqueness of the ground state. 
\begin{prop} \label{prop:ground-mondim}
	The Hamiltonian $H^\mathsf{MD}_\mathcal{L}$ is positive, $H_\mathcal{L}^\mathsf{MD} \geq 0$, and $\Psi_N^\mathsf{MD}$ is an exact, zero energy, ground state. 
\end{prop}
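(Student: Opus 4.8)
The plan is to exploit the manifestly positive form of the monomer--dimer Hamiltonian. By Eq.~\eqref{eq:mondim} and its integer translates, $H^{\mathsf{MD}}_{\mathcal{L}}$ can be written, with free boundary conditions, as $\sum_k \bigl( 4 e^{-3\gamma^2/2}\, \hat n_k \hat n_{k+2} + B_k^* B_k \bigr)$, the sum being over those $k$ with all of $k,k+1,k+2,k+3$ in $\mathcal{L}$, where $B_k := c_{k+1}c_{k+2} + 3 e^{-2\gamma^2} c_k c_{k+3}$. Each $\hat n_k \hat n_{k+2}$ is a product of two commuting projections, hence a projection and in particular positive, and each $B_k^* B_k \geq 0$; this gives $H^{\mathsf{MD}}_{\mathcal{L}} \geq 0$. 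Since $H^{\mathsf{MD}}_{\mathcal{L}}$ is such a sum of operators of the form $T^* T$, the expectation $\la \Psi^{\mathsf{MD}}_N, H^{\mathsf{MD}}_{\mathcal{L}} \Psi^{\mathsf{MD}}_N \ra$ vanishes if and only if $\hat n_k \hat n_{k+2}\Psi^{\mathsf{MD}}_N = 0$ and $B_k \Psi^{\mathsf{MD}}_N = 0$ for every admissible $k$, and in that case $H^{\mathsf{MD}}_{\mathcal{L}}\Psi^{\mathsf{MD}}_N = 0$ as well. Since $\Psi^{\mathsf{MD}}_N \neq 0$ --- it contains $|0\,3\,6\cdots(3N-3)\ra$ with coefficient $1$ --- this exhibits $\Psi^{\mathsf{MD}}_N$ as an exact zero-energy ground state. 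So everything reduces to these two families of vanishing conditions.

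I would next record the combinatorial content of $\Psi^{\mathsf{MD}}_N$. Each summand $A_{X_1}\cdots A_{X_D}|\text{vacuum}\ra$ is, up to the scalar $(-3 e^{-2\gamma^2})^{(\#\text{ dimers})}$ and a fermionic sign, the occupation-number basis vector of the configuration in which a monomer $\{j\}$ fills site $3j$ and a dimer $\{j,j+1\}$ fills the pair of sites $3j+1,3j+2$; the map from monomer--dimer partitions of $\{0,\dots,N-1\}$ to such configurations is injective, so these basis vectors are pairwise orthogonal. In any such configuration the block $\{3j,3j+1,3j+2\}$ is occupied either at $3j$ alone or at $3j+1,3j+2$ alone; comparing neighbouring blocks shows that two occupied sites are never at distance exactly $2$. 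This kills $\hat n_k \hat n_{k+2}\Psi^{\mathsf{MD}}_N$ for every $k$. It also disposes of $B_k$ for $k \not\equiv 0 \pmod 3$: in each of those two residue classes, the site-pairs $\{k+1,k+2\}$ and $\{k,k+3\}$ on which the two terms of $B_k$ act would each force the presence of two tiles covering a common label, which is impossible, so $B_k \Psi^{\mathsf{MD}}_N = 0$ for trivial reasons.

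The only case requiring real work is $B_{3j}\Psi^{\mathsf{MD}}_N = 0$. Here I would group the monomer--dimer partitions by their restrictions $P_L$ to $\{0,\dots,j-1\}$ and $P_R$ to $\{j+2,\dots,N-1\}$. For fixed $(P_L,P_R)$, the only partitions on which $B_{3j}$ acts non-trivially are $T' = P_L \cup \{\{j\},\{j+1\}\}\cup P_R$, hit only by the term $3 e^{-2\gamma^2} c_{3j}c_{3j+3}$, and $T = P_L \cup \{\{j,j+1\}\}\cup P_R$, hit only by the term $c_{3j+1}c_{3j+2}$; both are sent to the \emph{same} basis vector, namely the configuration obtained from either by emptying the four sites $3j,3j+1,3j+2,3j+3$ and leaving $P_L,P_R$ untouched. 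Collecting scalars, the factor $3 e^{-2\gamma^2}$ in $B_{3j}$ combines with the factor $-3 e^{-2\gamma^2}$ built into $A_{\{j,j+1\}}$ to supply the opposite relative sign, so the two contributions cancel \emph{provided} the fermionic reordering signs agree. This sign identity is the heart of the matter; I would verify it by putting both operator products into a fixed normal order and comparing permutation parities. Because $A_{P_L}$ involves only sites $< 3j$ and $A_{P_R}$ only sites $\geq 3j+6$, the comparison is entirely local to the four sites $3j,\dots,3j+3$: one checks the base case $P_L = P_R = \emptyset$ directly (for $N=2$, $B_0\bigl(-3 e^{-2\gamma^2} c_1^* c_2^* + c_0^* c_3^*\bigr)|\text{vacuum}\ra = 0$), and observes that prepending or appending a tile multiplies both reordering signs by the same factor.

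Granting this, $\la \Psi^{\mathsf{MD}}_N, H^{\mathsf{MD}}_{\mathcal{L}} \Psi^{\mathsf{MD}}_N \ra = 0$, hence $H^{\mathsf{MD}}_{\mathcal{L}}\Psi^{\mathsf{MD}}_N = 0$ by positivity, which is Proposition~\ref{prop:ground-mondim}. The main obstacle is precisely the fermionic sign bookkeeping in the $B_{3j}$ computation; the remainder is either the manifest positivity of projections and of operators $B^*B$, or a short residue-class check exploiting the period-$3$ block structure of the lattice. This is the same flavour of argument as in Proposition~\ref{prop:ground-state}, except that here one does not attempt the harder uniqueness statement.
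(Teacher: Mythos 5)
Your proposal is correct and follows essentially the same route as the paper: positivity from the manifestly $\sum T^*T$ form of Eq.~\eqref{eq:mondim}, the distance-$2$ observation killing the $\hat n_k\hat n_{k+2}$ terms and the $B_k$ with $k\not\equiv 0\pmod 3$, and for $B_{3j}$ the pairwise cancellation between the dimer $\{j,j+1\}$ and the monomer pair $\{j\},\{j+1\}$ at fixed left/right tilings. The sign bookkeeping you flag is handled in the paper by exactly the factorization you describe, $B_{3k}\,\mathcal{L}\,(c_{3k}^*c_{3k+3}^*-3e^{-2\gamma^2}c_{3k+1}^*c_{3k+2}^*)\,\mathcal{R}\,|\text{vacuum}\rangle$, together with the observation that $B_{3k}$ commutes or anticommutes with $\mathcal{L}$ and $\mathcal{R}$, so the local four-site computation is all that remains.
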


\begin{proof}
	The positivity follows from Eq.~\eqref{eq:mondim}. In order to check that $\Psi_N^\mathsf{MD}$ is a zero energy state, we first note that 
	$\Psi_N^\mathsf{MD}$ has no pairs of particles at mutual distance $2$ and therefore $\hat n_j \hat n_{j+2} \Psi_N^\mathsf{MD}=0$ for all $j$. 
	It remains to see that 
	$$ B_j \Psi_N^\mathsf{MD} =0, \quad B_j := c_{j+1} c_{j+2} + 3 e^{- 2 \gamma^2} c_j c_{j+3}  $$ 
	for all $j$. When $j$ is not a multiple of $3$, $B_j \Psi_N^\mathsf{MD}$ vanishes because the wave function has no particle pairs 
	at positions $(j,j+3)$ or $(j,j+1)$.  
	When $j = 3k$ for some $k= 0,\ldots, N-1$,
	\begin{equation} \label{eq:bjzero}
		\bigl(c_{3k+1} c_{3k+2} + 3 e^{- 2 \gamma^2} c_{3k} c_{3k+3} \bigr) A_{X_1} \cdots A_{X_D} |\text{vacuum}\ra \neq 0 
	\end{equation}
	implies that either the partition has a dimer $\{k,k+1\}$ or it has two monomers $\{k\}$, $\{k+1\}$. As a consequence we can factorize 
	\begin{equation*}
		B_{3k} \Psi_N^\mathsf{MD}  = B_{3k} \mathcal{L} \bigl (c_{3k}^* c_{3k+3}^* - 3 e^{-2\gamma^2} c_{3k+1}^* c_{3k+2}^* \bigr) \mathcal{R} |\text{vacuum}\ra.
	\end{equation*}
	$\mathcal{L}$ is a sum over partitions of $\{0,\ldots,k-1\}$ and $\mathcal{R}$ is a sum over partitions of $\{k+2,\ldots, N-1\}$. 
	Eq.~\eqref{eq:bjzero} follows from the observation that $B_{3k}$ commutes or anticommutes with $\mathcal{L}$ and $\mathcal{R}$, and 
	\begin{equation*}
		B_{3k} \bigl (c_{3k}^* c_{3k+3}^* - 3 e^{-2\gamma^2} c_{3k+1}^* c_{3k+2}^* \bigr) |\text{vacuum}\ra  = 0.  \qedhere
	\end{equation*}
\end{proof}

\section{Renewal structure} \label{sec:renewal}

In this section we present the structure of the wave function that  is key to our proofs. 
This  structure  is already visible at the level of normalization constants, provided, however, 
we make a good choice for the multiplicative constant $\kappa_N$ in Eq.~\eqref{eq:wave-fct-period}. From here on we fix 
\begin{equation} \label{eq:kappan}
	\kappa_N := \frac {1}{\sqrt{N!}} \times \frac{1}{( 2\pi R \ell \sqrt{\pi} )^{N/2}}\times \exp\left( - \frac{1}{2} p^2 \gamma^2 \sum_{j=1}^N (j-1)^2 \right)
\end{equation}
(recall $\gamma = \ell /R$). 
With this choice, the normalization $C_N =||\Psi_N||^2_\mathcal{Z}$ becomes supermultiplicative~\cite[Sect. 3.2]{jls}.

In the remaining part of this article we choose length units such that the magnetic length is $\ell =1$.

\subsection{Block structure of the Vandermonde matrix}

Laughlin's wave function involves the $p$-th power of a $N\times N$  Vandermonde determinant in the 
variables $Z_j = \exp(\gamma z_j)$. For $1 \leq k \leq N$, the upper left $k\times k$ 
block of the Vandermonde matrix is itself a Vandermonde matrix. The  lower right 
$(N-k) \times (N-k)$ block is a Vandermonde matrix 
in which the row with variable $Z_j$ has been multiplied with $Z_j^k$: 
\begin{equation} \label{eq:block}
\begin{pmatrix}
	  1 & \cdots & Z_1^{k-1} & & & \\
		\vdots &  &\vdots & &\ddots & \\
		1 & \cdots & Z_k^{k-1}& & & \\
		 &         &          & Z_{k+1}^k&\cdots & Z_{k+1}^{N-1}\\
		 &  \ddots &          &  \vdots        &       & \vdots \\
		 &         &          & Z_N^k &\cdots & Z_N^{N-1}
		\end{pmatrix}^p.
\end{equation}
Thus the Vandermonde matrix has a simple block structure. 
Now, the power of the Vandermonde determinant is a polynomial and as such can be written as a sum 
of monomials. The block structure suggests that the expansion coefficients have some recursive structure. 
This is, indeed, true, as was shown in~\cite{fgil}.  Moreover, the recursive structure carries over to coefficients in 
Laughlin's wave function~\cite[Lemma 3]{jls}. 

Let us explain this in some more detail. Let $\psi_k(z)$ be the orthonormal basis functions from Eq.~\eqref{eq:basis}. 
We can write 
\begin{equation}  \label{eq:expansion}
	\Psi_N (z_1,\ldots,z_N) = \frac{1}{\sqrt{N!}} \sum_{m_1,\ldots,m_N } a_N(m_1,\ldots,m_N) \psi_{m_1}(z_1) \cdots \psi_{m_N} (z_N).
\end{equation}
The sum is over integer $m_1,\ldots,m_N$ running from $0$ to $pN-p$. 
Not all such $\vect{m}$'s contribute to the sum: if  $a_N(\vect{m}) \neq 0$ , then for all $k =1,\ldots,N$, 
		\begin{equation} \label{eq:dominance}
			m_1+\cdots + m_k \geq 0+p+\cdots +p (k-1)
		\end{equation} 
		with equality for $k=N$.  We call such $\vect{m}$'s \emph{admissible}. 
If $(m_1,\ldots,m_N)$  is admissible and in Eq.~\eqref{eq:dominance} there is equality  for some $k\leq N-1$ then 
		\begin{equation*}
			m_1,\ldots,m_k \leq pk-p,\quad pk \leq m_{k+1},\ldots,m_N 
		\end{equation*}
		and
		\begin{equation} \label{eq:product-rule}
			a_N(m_1,\ldots,m_N) = a_k(m_1,\ldots,m_k) a_{N-k}(m_{k+1} - pk,\ldots,m_N -pk). 
		\end{equation}
This product rule mirrors the block structure of the Vandermonde matrix~\eqref{eq:block}.
Not every $a_N(\vect{m})$ factorizes as in Eq.~\eqref{eq:product-rule}: some $\vect{m}$'s are \emph{irreducible}. 
Others, in contrast, may factor along more than one $k$.

\begin{example} 
	For $p=3$, the one- and  two-particle wave functions are  
	\begin{equation*}
		\Psi_1 = \psi_0, \qquad 
		\Psi_2 = \psi_0 \wedge \psi_3- 3 \exp(- 2\gamma^2) \psi_1 \wedge \psi_2.
	\end{equation*}
	$a_2(1,2)$ does not factorize, but $a_2(0,3)$ does: 
	\begin{equation*}
		a_2(0,3) = 1 = 1 \times 1 = a_1(0) \times a_1(3-3).
	\end{equation*}
\end{example}	

 \begin{remark}
	Product rules have been derived for other fractional Hall effect trial functions~\cite{bernevig}.
\end{remark}

\subsection{Ground state perturbation series}

It would be nice to have a physical interpretation of the product rule described in the previous section. 
Unfortunately, such an interpretation does not seem readily at hand. We hope, however, that the ground state perturbation 
series argument presented in this section provides some physical intuition. The reader should compare the product rule~\eqref{eq:product-rule} to a ``ground state concatenation'' equation, Eq.~\eqref{eq:concat} below.

In this section we consider only filling factor $1/p = 1/3$. 
We start with the Hamiltonian $H$, or its finite lattice version $H_\mathcal{L}$, defined in Sect.~\ref{sec:lattice}. 
For large $\gamma$, the dominant contribution to the Hamiltonian comes from 
 nearest and next-nearest neighbor repulsion
\begin{multline*}
 	H_\mathcal{L} = \sum_{k\in \mathcal{L}} \exp(- \gamma^2/2) \hat n_k \hat n_{k+1} 
			+  \sum_{k\in \mathcal{L}} 4 \exp(- 2\gamma^2) \hat n_k \hat n_{k+2} \\
		+ O\bigl( \exp( - 10 \gamma^2/4 ) \bigr).
\end{multline*}
(Thus we discard even more terms than in the definition of the monomer-dimer Hamiltonian.) 
An exact ground state of the nearest and next-nearest neighbor repulsion is, obviously, the so-called ``Tao-Thouless state''~\cite{tao-thouless}
\begin{equation*}
 	\Psi_N^\mathsf{TT} =  c_0^*\,  c_3^* \cdots c_{3N-6}^*\, c_{3N-3}^*  \, |\text{vacuum}\ra
\end{equation*}
which is why we will abbreviate the corresponding Hamiltonian $H^\mathsf{TT}_\mathcal{L}$.
For $\mathcal{L} =\{0,1,\ldots, pN-p\}$,  $\Psi_N^\mathsf{TT}$ is actually the unique 
ground state of $H^\mathsf{TT}_\mathcal{L}$. Thus starting 
from 
\begin{equation*}
 	H_\mathcal{L} \Psi_N = (H_\mathcal{L}^\mathsf{TT} + V_\mathcal{L}) \Psi_N =0,
	\quad H_\mathcal{L}^\mathsf{TT} \Psi_N^\mathsf{TT} =0,
\end{equation*}
we obtain 
\begin{equation} \label{eq:perturb}
 	\Psi_N = \Psi_N^\mathsf{TT} + \sum_{n=1}^\infty (-1)^{n} \Bigl( Q (\hat H_\mathcal{L}^\mathsf{TT} )^{-1} 
			Q V_\mathcal{L} Q \Bigr)^n \Psi_N^\mathsf{TT}.
\end{equation}
where $Q = Q_{N,\mathcal{L}}$ projects onto the orthogonal complement of $\Psi_N^\mathsf{TT}$ 
and $(\hat H_\mathcal{L}^\mathsf{TT})^{-1}$ is the inverse of $H_\mathcal{L}^\mathsf{TT}$ restricted 
to $\mathcal{R}(Q)$. For sufficiently large $\gamma$, 
the 
expansion~\eqref{eq:perturb} should be convergent. Eq.~\eqref{eq:perturb} is a perturbation series for the ground state, simplified 
because both the unperturbed and the perturbed ground state energy are equal to $0$. 

In the occupation number basis, 
 the Tao-Thouless Hamiltonian $H_\mathcal{L}^\mathsf{TT}$ 
and the projection $Q$ are diagonal,
 while $V_\mathcal{L}$ has non-diagonal,  ``hopping'',  terms, e.g., 
$c_0^* c_3^* c_2 c_1$. Thus Eq.~\eqref{eq:perturb} tells us that the ground state 
is a sum of 
the Tao-Thouless state  
and  orthogonal terms obtained from $\Psi_N^\mathsf{TT}$ by hopping a pair of particles,
	$(n_1,n_2) \to (n_1+ r, n_2 - r)$.

More precisely, let $\mathcal{B}_\mathcal{L}$ be the set of pairs $((k_1,k_2), (n_1,n_2))$   
of integers in $\mathcal{L}$ with $k_1+ k_2 = n_1 +n_2$, discarding pairs of the form 
$(k_1,k_2) = (n_1,n_2)= (k,k+1)$ or $=(k,k+2)$, or permutations thereof. Let 
\begin{equation*}
	A_\mathcal{L}(b):= - f(k_1-k_2) f(n_1-n_2)\,   Q (\hat H_\mathcal{L}^\mathsf{TT} )^{-1} 
			Q\,  c_{ k_1}^* c_{k_2}^* c_{n_2} c_{n_1}\, Q
 \end{equation*}
so that 
\begin{equation*}
	Q (\hat H_\mathcal{L}^\mathsf{TT} )^{-1} 
			Q V_\mathcal{L} Q = - \sum_{b \in \mathcal{B}_\mathcal{L}} A_\mathcal{L}(b).
\end{equation*}
 Thus $\Psi_N$ is a sum over bond-paths $(b_1,b_2,\ldots,b_n)$ of variable  
length $n$,   
\begin{equation}  \label{eq:bond-paths}
	\Psi_N 
	= \Psi_N^\mathsf{TT} + \sum_{n=1}^\infty \sum_{b_1,\ldots,b_n \in \mathcal{B}_\mathcal{L}} A_\mathcal{L}(b_1) \cdots A_\mathcal{L}(b_n)
		\Psi_N^\mathsf{TT}.  		 
\end{equation}
Now let $k \in \{1,\ldots, N-1\}$ and imagine drawing a vertical line at $x= pk - 1/2$ splitting the Tao-Thouless state 
in two. 
A path $(b_1,\ldots,b_n)$  \emph{crosses} the line if one of the bonds $b_i$ involves integers  both to the left and to the right of the line. 
If in Eq.~\eqref{eq:bond-paths} we discard crossing paths, 
the sum factorizes as 
\begin{equation} \label{eq:concat}
	\Psi_N^\mathsf{TT} + \sum_\text{non-crossing paths} A_\mathcal{L}(b_1) \cdots A_\mathcal{L}(b_n)
		\Psi_N^\mathsf{TT} 
		= \Psi_k  \wedge pk\text{-shifted} \Psi_{N-k},
\end{equation}
(or a symmetrized product instead of $\wedge$, depending on the parity of $p$). 
This is a
$k$-particle ground state to the left of the separation line, concatenated with 
an $(N-k)$-particle ground state to the right of the line, and is a version of the product rule~\eqref{eq:product-rule}. 

\begin{remark}
	The idea of treating the full Hamiltonian as a perturbed version of a thin cylinder 
	Hamiltonian is related to  ``one-dimensional'' approaches to the quantum Hall effect, see~\cite{bk} and the references therein. 
\end{remark}

\subsection{Renewal points and lattice partitions}

Here we explain how the product rule~\eqref{eq:product-rule} was exploited in~\cite{jls}. 
First we need some notation. We call the points along which $a_N(\vect{m})$ factorizes 
\emph{renewal points}. More precisely, let  $\vect{m}=(m_1,\ldots,m_N)$. If $\vect{m}$ 
is not increasing, let $\sigma$ be a permutation that rearranges the integers, $m_{\sigma(1)} \leq \cdots \leq m_{\sigma(N)}$. 
We call $r=pk$, $k=0,\ldots,N$, a renewal point of $\vect{m}$ if 
$\sum_1^k m_{\sigma(j)} =  \sum_1^k p(j-1)$. 
The end points $0$ and $pN$ are always renewal points.
The set of renewal points of $\vect{m}$ is denoted $\mathcal{R}(\vect{m})$.

Sometimes it is more convenient to work with partitions  rather than 
 renewal points. 
%
%
Thus let $\mathcal{P}_N$
be the set of partitions of the discrete volume 
$\{0,...,pN-1\}$ into discrete intervals or ``rods'' $X_1,...,X_D$ of the form 
$\{pj,\ldots, pj+pn-1\}$.  
The number $D$ of rods in the partition varies from $1$ to $N$, and the rods are always labeled 
from left to right, i.e., the elements of $X_1$ are smaller than those of $X_2$, etc. 
We use the short-hand 
\begin{equation*}
	\mathcal{X} = (X_1,\ldots,X_D) 
\end{equation*}
and let $\mathcal{R}(\mathcal{X})$ be the set of starting points $pj$ of $X_k$'s, with $pN$ added. 
We write $\mathcal{X}(\vect{m})$ for the partition associated with the renewal points of $\vect{m}$. 

It is useful to group together $\vect{m}$'s that give rise to the same partition. Thus we write 
\begin{equation} \label{eq:vect-decomp}
	\Psi_N (z_1,\ldots,z_N)= \sum_{\mathcal{X} \in \mathcal{P}_N} u_\mathcal{X}(z_1,\ldots,z_N)
\end{equation}
with
\begin{equation} \label{eq:fermionic-ux}
	u_\mathcal{X}(z_1,\ldots,z_N)
		:=  \frac{1}{\sqrt{N!}} \sum_{ \mathcal{X}(\vect{m}) = \mathcal{X}} 
			a_N(m_1,\ldots,m_N)  \psi_{m_1} (z_1) \cdots \psi_{m_N}(z_N). 
\end{equation}
As a consequence of the product rule, the function $u_\mathcal{X}(z_1,\ldots,z_N)$ factorizes into an (anti-)symmetrized product 
of functions $v_X$ associated with individual rods instead of the whole partition:  
if $\mathcal{X} = (X_1,\ldots,X_D)$ and $p$ is odd, 
\begin{equation} \label{eq:ux-vx}
	u_\mathcal{X}(z_1,\ldots,z_N) = v_{X_1}\wedge \cdots \wedge v_{X_D}(z_1,\ldots,z_N).
\end{equation}
Good orthogonality properties ensure that  
\begin{equation*} 	\label{eq:partfunct}
	||\Psi_N||_\Lambda^2 = \sum_{\mathcal{X} \in \mathcal{P}_N}||u_\mathcal{X}||_\Lambda^2  
			= \sum_{ (X_1,\ldots,X_D) \in \mathcal{P}_N} ||v_{X_1}||_\Lambda^2\times \cdots \times ||v_{X_D}||_\Lambda^2
\end{equation*}
whenever $\Lambda$ has the form $ [a,b] \times [0,2\pi R]$.  The family of functions $(v_X)$ has the following 
properties when $p$ is odd:   
\begin{itemize}
	\item $v_X$ is an antisymmetric function of $N(X)=n$ complex variables, 
					with $p n$  the cardinality of $X$, i.e.,  $ X = \{pj,\ldots, pj+pn-1\}$ for some $j$.  
	\item $v_X(z_1,\ldots,z_{N(X)})$ depends on $X$, $p$ and $\gamma$, but not on $N$. 
	\item $v_X$ is a linear combination of antisymmetrized products of the $\psi_k$'s with localized indices $k$: 
		\begin{equation} \label{eq:local}
			v_X \in \wedge^{N(X)}\mathrm{span}\{\psi_k\mid k \in X \}.
		\end{equation}
	\item 	Up to a $y$-dependent phase, 
		$	v_{X+p}(\{z_j\}) $ is the same as $v_X (\{z_j-p\gamma\})$. 
		Put differently, the shift of a rod is equivalent to the magnetic translation 
		of the associated function. (Here $p+X$ could be, e.g., $ 3+\{0,1,2\} = \{3,4,5\}$.)
\end{itemize}
Similar statements hold for even $p$, provided we replace antisymmetrized products with symmetrized products, e.g., in Eq.~\eqref{eq:ux-vx}.

For later purpose we note that the translational covariance allows us to define functions $v_X$ for intervals 
$X = \{pj,pj+1,\ldots, pj+pn - 1\}$ not necessarily contained in $\{0,1,\ldots, pN -1\}$, and functions $u_\mathcal{X}(z_1,\ldots,z_N)$ 
when $\mathcal{X}$ is a partition of some set $\{pj, pj+1, \ldots, pj+pN-1\}$, $j\in \Z$.

The decomposition of $\Psi_N$ as a sum over partitions was exploited in~\cite{jls} in order to deduce statements about 
correlation functions, or expectation values of observables. We will essentially follow this approach. First, however, 
let us explain how to go from a decomposition of the \emph{vector} $\Psi_N$ to a decomposition of the \emph{state} $|\Psi_N \ra \la \Psi_N|$.

\subsection{Quasi-state decomposition}

To each partition $\mathcal{X}= (X_1,\ldots,X_D)$ assign the weight 
 \begin{equation}  \label{eq:weight}
 	p_{N} (\mathcal{X}) 
 		= \frac{1}{||\Psi_N||^2} ||v_{X_1}||^2 \times \cdots \times 
 			||v_{X_D}||^2. 
 \end{equation}
Because of Eq.~\eqref{eq:partfunct} the weights sum up to $1$ and 
define a probability measure $\mathbb{P}_N$ on $\mathcal{P}_N$. 
It is natural to try to write the state $|\Psi_N\ra \la \Psi_N|$ as 
a weighted sum of states $\omega_\mathcal{X}$, associated each with a partition. 
Eq.~\eqref{eq:vect-decomp} immediately yields
\begin{equation*}
	|\Psi_N\ra \la \Psi_N| = \sum_{(\mathcal{Y},\mathcal{Z}) \in \mathcal{P}_N \times 
		\mathcal{P}_N} |u_\mathcal{Y} \ra \la u_\mathcal{Z}|. 
\end{equation*}
The state $|\Psi_N\ra\la \Psi_N|$ is, therefore, a sum over \emph{pairs} of partitions. 
In order to reduce this to a sum over single partitions, we group pairs according to their 
common renewal points.  For a given partition $\mathcal{X} \in \mathcal{P}_N$, let 
$\mathcal{M}_N(\mathcal{X})$ be the 
set of pairs $(\mathcal{Y},\mathcal{Z})$ such that 
$$ \mathcal{R}(\mathcal{Y}) \cap \mathcal{R}(\mathcal{Z}) = \mathcal{R}(\mathcal{X}) .$$
Thus in particular, $(\mathcal{X}, \mathcal{X}) \in \mathcal{M}_N(\mathcal{X})$. 
Let $\omega_\mathcal{X}$ be the operator
\begin{equation*} 
	\omega_\mathcal{X}:= \frac{1}{||u_\mathcal{X}||^2}
		\sum_{ (\mathcal{Y},\mathcal{Z}) \in \mathcal{M}_N(\mathcal{X}) }
	 \bigl| u_\mathcal{Y} \bigr \ra\, \bigl \la u_\mathcal{Z} \bigr|.		
\end{equation*}
By a slight abuse of language we use the same letter for the linear 
functional on operators, 
\begin{equation*}
	\omega_\mathcal{X}(a) =  \frac{1}{||u_\mathcal{X}||^2} 
		\sum_{ (\mathcal{Y},\mathcal{Z}) \in \mathcal{M}_N(\mathcal{X}) } 
			\la u_\mathcal{Z}, a\, u_\mathcal{Y} \ra. 
\end{equation*}
The sum contains, on top of the diagonal term $|u_\mathcal{X} \ra \la u_\mathcal{X} |$, 
off-diagonal terms where $\mathcal{Y} \neq \mathcal{Z}$. 
We can write, then, 
 \begin{equation} \label{eq:quasi-decomp}
 	\la \cdot \ra_N = \frac{1}{||\Psi_N||^2} |\Psi_N\ra \la \Psi_N| = \sum_{\mathcal{X} \in \mathcal{P}_N} 
 		p_{N} ( \mathcal{X}) \omega_\mathcal{X}. 
 \end{equation}
which is the decomposition we had looked for.

\begin{example}
  Consider the two-particle wave function at filling factor $1/3$. Thus $N=2$ and $p=3$. 
    The discrete volume $\{0,1,\ldots,5\}$ has two admissible partitions:
  $\mathcal{P}_2 = \{\mathcal{V},\mathcal{W}\}$ with 
  \begin{equation*}
      \mathcal{V}=( \{0,1,2\}, \{3,4,5\}),\quad 
    \mathcal{W}=(\{0,1,\ldots,5\}).
  \end{equation*}
  We have $\Psi_2 = u_\mathcal{V} + u_\mathcal{W}$ with 
  \begin{equation*}
    u_\mathcal{V} = \psi_0 \wedge \psi_3,\quad u_\mathcal{W} = - 3 \exp(-2 \gamma^2) \psi_1 \wedge \psi_2.
  \end{equation*}
  The corresponding probability distribution is 
  \begin{equation*}
    p_2(\mathcal{V}) = \frac{1}{1+9 \exp(-4 \gamma^2)},\quad 
    p_2(\mathcal{W}) = \frac{9 \exp(-4\gamma^2)}{1+9 \exp(-4 \gamma^2)}.
  \end{equation*}
  The state decomposes as $\la \cdot \ra_2 = p_2(\mathcal{V}) \omega_\mathcal{V} + p_2(\mathcal{W}) \omega_\mathcal{W}$
  with 
  \begin{align}
    \omega_\mathcal{V} & = |03\ra \la 03|, \notag \\ 
    \omega_\mathcal{W} & = |12 \ra \la 12|  - \frac{1}{3} \exp(2 \gamma^2) 
      \Bigl( |12 \ra \la 03| + |03 \ra \la 12| \Bigr) \label{eq:deco}
  \end{align}
  where we have used the short-hand $|03\ra = |\psi_0 \wedge \psi_3\ra$. 
\end{example}

An important remark, immediate from Eq.~\eqref{eq:deco}, is that 
the matrices $\omega_\mathcal{X}$ are in general \emph{not} density matrices, 
and in general we do \emph{not} have $||\omega_\mathcal{X}(a)|| \leq ||a||$. 
The picture changes, however, if we evaluate only diagonal operators: let 
$a$ be a product of occupation numbers $\hat n_k$, or any other operator 
which is diagonal in the basis of Slater determinants $\psi_{m_1}\wedge \cdots \wedge 
\psi_{m_N}$. The functions $u_\mathcal{X}$ have a strong orthogonality 
property which ensures that
\begin{equation*}
	\Bigl(a\ \text{diagonal and } \mathcal{Y} \neq \mathcal{Z}\Bigr)\ 
		\Rightarrow\ \la u_\mathcal{Y}, a\, u_\mathcal{Z} \ra = 0.
\end{equation*}
As a consequence, if $a$ is diagonal, 
\begin{equation*}
	\omega_\mathcal{X}(a):= \mathrm{tr}\, (\omega_\mathcal{X} a) 
		= \la u_\mathcal{X}, a\, u_\mathcal{X} \ra / ||u_\mathcal{X}||^2.
\end{equation*}
Thus as far as diagonal operators are concerned, $\omega_\mathcal{X}$ \emph{is} 
a state. In the language of~\cite{an}, $\omega_\mathcal{X}$ is 
a \emph{quasi-state}: a linear map on a $C^*$ algebra whose restriction to a commutative 
sub-algebra is a state, and  Eq.~\eqref{eq:quasi-decomp} is a \emph{quasi-state decomposition}. 

The quasi-state decomposition allows us to shift our focus from the complicated 
quantum mechanical state to the probability measure $\mathbb{P}_N$, which is a much simpler 
object. This reduction is useful because the $\omega_\mathcal{X}$ themselves are reasonably simple, 
as ensured by the following properties:
\begin{itemize}
	\item \underline{Clustering}: let $a, b$ be 
	local observables. ``Local'' refers to the lattice picture: $a$ and $b$ are, for example, Wick-ordered monomials, 
	and if $a = c_1^* c_3$, we call the set $\{1,3\} = \supp a$ the support of $a$. 
	Let $\mathcal{X} \in \mathcal{P}_N$ be a partition whose 
	renewal points separate $a$ and $b$, i.e., there 
	is a $r \in \Z$ such that 
	\begin{equation*}
		r \in \mathcal{R}(\mathcal{X}),\ \supp a \subset \{\ldots, r-2,r-1\},
		\ \supp b\subset \{r,r+1,\ldots \}.
	\end{equation*}
	Then
	\begin{equation*}
		\omega_\mathcal{X}(ab) = \omega_\mathcal{X}(a)\, \omega_\mathcal{X}(b).
	\end{equation*}
	
	\item \underline{Translational covariance}: 
	Shifting a partition is equivalent to shifting an observable: 
	 for every observable $a \in \mathcal{A}$, 
	\begin{equation*}
		\omega_{p+\mathcal{X}} (a) = \omega_\mathcal{X} (\tau_x^{p}(a)).
	\end{equation*}
	Here for example, $\tau_x^3(c_2^* c_6^*) =c_5^* c_9^*$. More generally, $\tau_x$ is 
	the automorphism on the observable algebra induced by the magnetic translation $t(\ell^2/R\, \vect{e}_x)$  in the one-particle 
	Hilbert space. Note that the shift transforms a partition of $\{0,\ldots, pN-p\}$ into a partition of $\{p,\ldots, pN\}$. 

	\item \underline{Locality}: let $a$ be a  
	local observable. Let $N\in \N$ and $\mathcal{X}, \mathcal{X'} \in \mathcal{P}_N$ 
	be two partitions. Suppose that 
	$\mathcal{X}$ and $\mathcal{X}'$ coincide on some interval containing $\supp a$, i.e., they  have common renewal points $r,s$ such that 
	that $\supp a \subset \{r,\ldots, s-1\}$ 
	and 
	\begin{equation*}
		\mathcal{R}(\mathcal{X}) \cap \{r,\ldots,s\} 
			= \mathcal{R}(\mathcal{X'}) \cap \{r,\ldots,s\}.
	\end{equation*}
	Then
	\begin{equation*}
		\omega_\mathcal{X} (a) = \omega_\mathcal{X'}(a).
	\end{equation*}
\end{itemize}
These properties are, again, a consequence of the product rule Eq.~\eqref{eq:product-rule}.

For later purpose let $\mathcal{P}_\infty$ be the set of partitions of $\Z$
into sets of the form $\{pj,pj+1,\ldots, pj+pn-1\}$,  $n\in \N$, $j\in \Z$. Thus we explicitly forbid partitions that contain an infinite component, and  every lattice site has 
 a renewal 
point to its left and a renewal point to its right. For $\mathcal{X} \in \mathcal{P}_\infty$, and $a$ a local observable, 
 let $r$ and $s$ be two renewal points of $\mathcal{X}$ enclosing the support of $a$. We can restrict $\mathcal{X}$ 
to a partition $\mathcal{X}_{r,s}$ of $\{r,\ldots, s-1\}$ and set 
\begin{equation*}
	\omega_\mathcal{X}(a):= \omega_{\mathcal{X}_{r,s}} (a). 
\end{equation*}
Locality ensures that this definition does not depend on the precise choice of renewal points $r,s \in \mathcal{R}(\mathcal{X})$. 

In this way we obtain a family of linear maps $ a \mapsto \omega_\mathcal{X}(a)$, indexed by partitions 
of $\Z$,  $\mathcal{X}\in \mathcal{P}_\infty$, and defined on local observables $a$.
 These maps inherit the clustering, locality and translational covariance from their finite volume counterparts. 
They will allow us to write the infinite volume state as an integral, 
\begin{equation*}
	\frac{\la \Psi_N,a \Psi_N\ra}{||\Psi_N||^2} \to  \int_{\mathcal{P}_\infty} \omega_\mathcal{X}(a) \, \dd\P(\mathcal{X})
\end{equation*}
in terms of a suitable probability measure $\P$ on partitions of $\Z$. 

\section{Peierls type argument  and consequences} \label{sec:peierls}

As explained in the introduction, the algebraic structure described in the previous subsections 
allows for a considerable simplification of the problem of thermodynamic limits: instead of looking at a full quantum-mechanical state, we can look at a simpler probability measure $\P_N$. 
A further simplification is that $\P_N$ can be shown to have a stationary limit $\P$ if a certain condition on the asymptotics of normalization constants is satisfied, see~\cite{jls} and Sect.~\ref{sec:finite-mean}.

The aim of this section is to prove that this condition is indeed satisfied. This is shown by viewing the normalization as a partition function for a Coulomb gas. In Sect.~\ref{sec:pef}, we adapt the methods of~\cite{ajj} to prove that with positive probability, the system splits into neutral subsystems with finite interaction, and deduce the required statements on asymptotics 
(Sect.~\ref{sec:normalization}). 
 
We also derive an auxiliary bounds on correlation functions, needed for bosons (even $p$), in Sect.~\ref{sec:bosons}.

Throughout this section we fix $\Lambda = \mathcal{Z}$, i.e., all integrations are on the infinite cylinder, 
and $||\cdot ||=||\cdot||_\mathcal{Z}$ refers to the $L^2$ norm on the infinite cylinder.

\subsection{Normalization on an infinite cylinder} \label{sec:normalization}
Let 
$$ C_N : = ||\Psi_N||^2 = \int_{\R^N} \dd ^N \boldsymbol{x} \int_{[0,2\pi R]^N} \dd^N \boldsymbol{y}\thinspace |\Psi_N(z_1,...,z_N)|^2.$$ 
In~\cite{jls}, it was shown that the limits 
\begin{equation} \label{eq:norm-known}
	\lim_{N\to \infty} \frac{1}{N} \ln C_N =: - \log r(p,\gamma),\quad 
	\lim_{N\to \infty} C_N r(p,\gamma)^N =: \Bigl( \mu(p,\gamma)\Bigr)^{-1} 
\end{equation}
exist and are finite, with $1/\mu(p,\gamma) =0$  not yet excluded:
$$ 0 < r(p,\gamma) \leq 1,\quad  0 <\mu(p,\gamma) \leq \infty. $$ 
As was shown in~\cite{jls}, on thin cylinders ($\gamma$ large), 
$\mu(p,\gamma)$ is certainly finite. However, what happens on thick cylinders was left open. Here we answer this question:  
 $\mu(p,\gamma)$ is finite, no matter how large the cylinder radius is.

\begin{theorem} \label{thm:finite-mean}
	For all $p$ and $\gamma$, the quantity $\mu(p,\gamma)$ is finite. 
\end{theorem}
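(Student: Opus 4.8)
The plan is to exploit the renewal/polymer representation of the normalization together with a Peierls-type estimate for the Coulomb gas attached to $|\Psi_N|^2$, adapting the strategy of~\cite{ajj}.

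First I would record what has to be shown. By Eq.~\eqref{eq:partfunct} and the translational covariance of the rod functions $v_X$, the quantity $w(n):=\|v_X\|^2$ for a rod $X$ of $n$ particles depends on $n$ alone (not on the position of $X$, nor on $N$), one has $w(1)=1$, and
\begin{equation*}
	C_N=\sum_{\substack{n_1+\cdots+n_D=N\\ n_i\ge 1}}\ \prod_{i=1}^{D}w(n_i),\qquad \sum_{N\ge 0}C_Nx^N=\frac{1}{1-W(x)},\qquad W(x):=\sum_{n\ge 1}w(n)x^n,
\end{equation*}
with $C_0:=1$. Thus $C_Nr^N=\sum_{d\ge 0}q^{*d}(N)$ is the renewal sequence of the (a priori sub-probability) law $q(n):=w(n)r^n$ on $\N$, which is aperiodic since $q(1)=r>0$. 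The discrete renewal theorem, together with the existence of $\lim_NC_Nr^N=\mu(p,\gamma)^{-1}$ from Eq.~\eqref{eq:norm-known}, gives $\mu(p,\gamma)=\sum_n n\,w(n)r^n$ whenever $\sum_n w(n)r^n=1$, and $\mu(p,\gamma)=\infty$ otherwise; equivalently, $\mu(p,\gamma)<\infty$ is the same as $\liminf_NC_Nr^N>0$. So it is enough to show that in the finite-volume measures $\P_N$ the rods are, uniformly in $N$, short and therefore numerous.

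Second, the plasma analogy. The factorized form~\eqref{eq:wave-fct-period} exhibits $|\Psi_N|^2$ as (a constant times) the Boltzmann weight of a classical two-dimensional one-component plasma of $N$ unit charges on the cylinder, at coupling $2p$, with a neutralizing background and the Gaussian pinning the charges near the crystalline sites $0,p\gamma,2p\gamma,\dots$. A renewal point $pk$ is exactly a location where the charge accumulated to its left balances the background to its left --- a \emph{neutral cut} --- and the product rule reflects that, conditioned on the neutral cuts, the plasma factorizes into independent blocks interacting only through a bounded cross-energy, so that $w(n)$ is the partition function of a single irreducible block (no interior neutral cut). The core of the argument is then a Peierls/energy--entropy estimate in the spirit of~\cite{ajj}: uniformly in $N$ and in position (including the first and last blocks), the probability that the block covering a prescribed site contains $\ge n$ particles is at most some $\epsilon(n)$ with $\sum_n n\,\epsilon(n)<\infty$. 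The point is that a configuration with no neutral cut over a span of $n$ block-units must carry a charge imbalance over that span, whose Coulomb energy grows with $n$ fast enough --- after weighing it against the entropy of where and how the imbalance sits --- to make $n\,w(n)r^n$ summable. No exponential decay is claimed (for thick cylinders that would be the stronger, still open, exponential-clustering statement); only the finite first moment is needed.

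From this estimate the theorem follows: summing the size-biased block length over all $pN$ sites gives $\E_N\bigl[\sum_i\ell_i^2\bigr]\le\Lambda N$ uniformly in $N$, so by Cauchy--Schwarz ($N^2=(\sum_i\ell_i)^2\le D\sum_i\ell_i^2$) and Jensen, $\E_N[D]\ge N^2/\E_N[\sum_i\ell_i^2]\ge N/\Lambda$; hence the neutral cuts have density bounded below uniformly in $N$, and the same bound rules out that $\P_N$ lets mass escape onto arbitrarily long blocks. Feeding this back into the renewal identity $C_Nr^N=\sum_{d\ge 0}q^{*d}(N)$, and using the known $\tfrac1N\ln C_N\to-\log r$ (which excludes sub-exponential decay of $C_Nr^N$), forces $\liminf_NC_Nr^N>0$, i.e. $\mu(p,\gamma)<\infty$ (and one recovers $\mu(p,\gamma)=\lim_N N/\E_N[D]$). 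The main obstacle is the Peierls step: transplanting the Coulomb-gas energy--entropy argument of~\cite{ajj} --- stated there for quasi-one-dimensional tubes with periodic or Neumann boundary conditions --- to the semi-periodic cylinder and free boundary conditions of Laughlin's function (so one must also control the boundary blocks, where excess charge accumulates), and, crucially, doing so with constants uniform in the radius $R$. For thick cylinders the thin-strip perturbative methods of~\cite{jls} give nothing, and one may only have a weak, possibly non-exponential, decay of the bad-block probability; the delicate point is to extract from the Coulomb energy exactly enough decay for $\sum_n n\,w(n)r^n$ to converge, no matter how large $R$ is.
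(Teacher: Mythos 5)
Your reduction in the first paragraph is exactly right and matches the paper: since $\sum_n \alpha_n r^n=1$ is already known from~\cite{jls}, the claim $\mu(p,\gamma)<\infty$ is equivalent to $\liminf_N C_N r(p,\gamma)^N>0$, and the engine is a Peierls argument in the spirit of~\cite{ajj}. But the route you take from the Peierls step to that conclusion is not the paper's, and it has a genuine gap. The paper proves a single \emph{submultiplicativity} inequality, $C_{N+M}\leq c(p,\gamma)\,C_N C_M$ (Prop.~\ref{lemma:sub-mult}): the energy--entropy argument is applied only to the continuum normalization integral, showing that configurations which split neutrally at one prescribed cut $\bar x=(N-1/2)p\gamma$, without accumulation of particles near the cut, carry a uniformly positive fraction of $\int_\Omega e^{-U}$; on that event the energy decouples into the two half-cylinders up to a bounded cross term. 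Submultiplicativity then gives $C_{mN}\leq c^{m-1}C_N^m$, hence $C_N r^N\geq 1/c$ for every $N$ by the standard Fekete-type argument, hence $\mu\leq c$. No control of the block-length distribution is needed at this stage.

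Your version instead requires a tail bound $\P_N(\text{block covering a given site has length}\geq n)\leq\epsilon(n)$ with $\sum_n\epsilon(n)<\infty$, uniformly in $N$, and here two things go wrong. First, this conflates two different measures: the Peierls argument lives on the continuum measure $P_N\propto|\Psi_N|^2$ and controls the \emph{continuum} particle excess $K(\bar x;\vect{z})$, whereas the block lengths live under the \emph{lattice} measure $\P_N$ on partitions, defined through the coefficients $a_N(\vect{m})$ and the weights $\|v_X\|^2$. A neutral continuum cut is not a renewal point of $\vect{m}$ (a renewal point also requires the dipole condition in Eq.~\eqref{eq:renewal-nj}), and the conversion $\P_N(\text{block }\{pj,\dots,pj+pn-1\})=u_j p_n u_{N-j-n}/u_N$ needs a \emph{lower} bound on $u_j u_{N-j-n}/u_N$ --- i.e.\ precisely submultiplicativity --- before any bound on $\P_N$ says anything about $p_n=\alpha_n r^n$. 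Indeed, in the paper the block-length tail bound (Lemma~\ref{lem:long-intervals}) is \emph{derived from} submultiplicativity and Theorem~\ref{thm:finite-mean}, not the other way around. Second, the energetic mechanism you invoke (``no neutral cut over a span of $n$ units forces a charge imbalance whose Coulomb energy grows with $n$'') is false: a long irreducible block need only carry imbalance $\pm1$ at each interior cut, with no accumulated charge, so no energy estimate suppresses its weight in $n$. Your final generating-function step (cut density bounded below implies $\mu<\infty$) is correct, but it is fed by an input you cannot produce this way; the fix is to aim the Peierls argument at the single inequality $C_{N+M}\leq c\,C_N C_M$, as the paper does.
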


The proof of the theorem uses a form of submultiplicativity, 
which complements the supermultiplicativity $ C_{N+M} \geq C_N C_M$:

\begin{prop}[Submultiplicativity] \label{lemma:sub-mult}
	For all $N,M$ and a suitable constant $c(p,\gamma)>0$, 
	\begin{equation} \label{eq:sub-mult}
	  C_{N+M} \leq c(p,\gamma) C_N C_M.
	\end{equation}
\end{prop}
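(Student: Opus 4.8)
The plan is to read the normalization through its rod decomposition from Section~\ref{sec:renewal}. Recall $C_N=\|\Psi_N\|^2=\sum_{\mathcal{X}\in\mathcal{P}_N}\prod_{X\in\mathcal{X}}\|v_X\|^2$, with all norms on the infinite cylinder $\mathcal{Z}$. Since shifting a rod changes $v_X$ only by a $y$-dependent phase and the magnetic translation is a \emph{unitary} on $L^2(\mathcal{Z})$, the weight $\|v_X\|^2$ depends on the rod $X$ only through its particle number $N(X)$; set $w_n:=\|v_X\|^2$ for $N(X)=n$. Two elementary facts will be used repeatedly: $w_1=C_1=1$ (the volume $\{0,\dots,p-1\}$ admits only the trivial partition, and $\Psi_1=\psi_0$ is normalized), and $C_n\geq 1$ for all $n$, by $C_0=1$ and the supermultiplicativity $C_n\geq C_{n-1}C_1$.

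First I would classify the partitions of $\{0,\dots,p(N+M)-1\}$ according to the rod meeting the sites $pN-1,pN$: either $pN$ is a renewal point --- and the partition factors into a partition of $\{0,\dots,pN-1\}$ and a shifted partition of $\{pN,\dots,p(N+M)-1\}$ --- or a unique rod $\{p\alpha,\dots,p\beta-1\}$ with $0\leq\alpha<N<\beta\leq N+M$ straddles the cut. Summing the multiplicative weights and using translational covariance to re-identify shifted sub-partitions with elements of $\mathcal{P}_\bullet$ gives the exact identity
\begin{equation*}
  C_{N+M}=C_NC_M+\sum_{i=1}^{N}\sum_{j=1}^{M}C_{N-i}\,w_{i+j}\,C_{M-j}\qquad(\alpha=N-i,\ \beta=N+j),
\end{equation*}
which in particular re-proves $C_{N+M}\geq C_NC_M$. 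Dividing by $C_{N+M}$ and reading the first term through the weights~\eqref{eq:weight}, the desired bound $C_{N+M}\leq c(p,\gamma)C_NC_M$ is \emph{equivalent} to a lower bound, uniform in $N,M$,
\begin{equation*}
  \frac{C_NC_M}{C_{N+M}}=\P_{N+M}\bigl(\,pN\text{ is a renewal point}\,\bigr)\ \geq\ \frac{1}{c(p,\gamma)}>0 .
\end{equation*}
A crude bound using $C_{N+M}\geq C_{N-i}C_{i+j}C_{M-j}$ gives $\P_{N+M}(pN\in\mathcal{R})\geq 1-\sum_{n\geq2}(n-1)w_n/C_n$, which already closes the argument when that series is $<1$ --- this is what happens on thin cylinders, where the $w_n$ decay super-exponentially. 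On thick cylinders the series need not be $<1$ (indeed $\mu\geq1$ by Fekete, and the series is comparable to $\mu^2$ in the limit), so this alone is not enough.

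For the general case I would invoke the Coulomb-gas estimate of Section~\ref{sec:pef}. Writing $|\Psi_{N+M}|^2$ as a Boltzmann weight for a one-dimensional jellium, the event $\{pN\notin\mathcal{R}\}$ is the event that the charges straddling the slice $x=pN\gamma$ form an irreducible cluster with no interior neutrality point, and the Peierls-type argument adapted from~\cite{ajj} shows that the Coulomb energy cost of such a configuration is bounded below and grows with the length of the straddling cluster, \emph{uniformly in the radius} $R$. Combined with an entropy count over cluster shapes, this keeps $\P_{N+M}(pN\notin\mathcal{R})$ bounded away from $1$ (equivalently, it forces the inter-renewal weights $w_n$ to decay fast enough that a windowed version of the crude bound closes: one finds a renewal within a fixed-radius window around $pN$ with probability $\geq 1/2$, and then bounds $C_{N+M}$ by a fixed number of products $C_kC_{N+M-k}$ with $|k-N|$ bounded). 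The main obstacle is precisely this radius-uniform lower bound on the energy of an irreducible straddling cluster --- the assertion that "not splitting" is genuinely costly no matter how fat the cylinder is --- which is the business of Section~\ref{sec:pef}; once it is in hand, Proposition~\ref{lemma:sub-mult}, and then Theorem~\ref{thm:finite-mean} by applying Fekete's lemma to the subadditive sequence $\log(c\,C_N)$, follow.
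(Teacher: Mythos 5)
Your algebraic reduction is correct and worth keeping: the identity $C_{N+M}=C_NC_M+\sum_{i,j}C_{N-i}w_{i+j}C_{M-j}$ follows from the product rule and translational covariance, and the equivalence of Eq.~\eqref{eq:sub-mult} with the uniform lower bound $\P_{N+M}(pN\in\mathcal{R})\geq 1/c(p,\gamma)$ is exactly right (the paper uses this equivalence later, in Lemma~\ref{lem:long-intervals}, but in the opposite direction). Your crude bound $1-\sum_{n\geq 2}(n-1)w_n/C_n$ also gives a genuinely more elementary proof in the thin-cylinder regime. The problem is the general case: there you reduce the proposition to ``a radius-uniform lower bound on the energy of an irreducible straddling cluster'' and then declare that this ``is the business of Section~\ref{sec:pef}.'' But the business of Section~\ref{sec:pef} \emph{is} the proof of Proposition~\ref{lemma:sub-mult}; you are invoking the conclusion you are supposed to establish. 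None of the actual content is supplied: the decomposition $U=U^1+\sum V_2$, the definition of the regular set $\Omega_\mathrm{reg}$ and the uniform lower bound on the cross-interaction $V_\mathrm{LR}$ there, the explicit Peierls map $T:\Omega\to\Omega_\mathrm{reg}$, the entropy/Jacobian estimates, and --- crucially for radius-uniformity --- the treatment of the $V_2$ pair interactions via harmonicity in $y$ and Jensen's inequality ($\int V_2(x+iy)\,\dd y=0$), which is the mechanism that makes the bound independent of $R$.

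There is also a conceptual slippage in your bridge between the two pictures. The event $\{pN\notin\mathcal{R}\}$ is a statement about the lattice coefficients $a_{N+M}(\vect{m})$ (a partial-sum condition on mode indices), whereas the Peierls argument controls \emph{continuum} particle positions (the excess function $K(\bar x;\vect{z})$). These are related but not the same event, and you assert their identification without proof. The paper avoids having to make this identification: it runs the Peierls argument entirely in the continuum, obtaining $\int_\Omega e^{-U}\leq \const\int_{\Omega_\mathrm{reg}}e^{-U}$ and $\int_{\Omega_\mathrm{reg}}e^{-U}\leq\const\, C_NC_M$ directly, and never needs a lower bound on the lattice renewal probability to prove submultiplicativity (that lower bound is then a \emph{corollary}, via your equivalence). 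So either supply the continuum Peierls construction in full, or supply a proof that the complement of the lattice renewal event is contained in (or comparable to) a continuum event of controllably small weight; as written, the general case rests on an unproven assertion.
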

The proof of the submultiplicativity is deferred to the next section.

\begin{proof}[Proof of Theorem \ref{thm:finite-mean}]
	All we need to show is that for all $N$ and some $\eps>0$, 
	$ C_N r(p,\gamma)^N \geq \eps$.
	But this follows from Eq.~\eqref{eq:sub-mult} and~\eqref{eq:norm-known}, by standard arguments: 
	 repeated application of Eq.~\eqref{eq:sub-mult} yields 
	$$ C_{mN} \leq c(p,\gamma)^{m-1} C_N^m.$$ 
	We take the logarithm,  divide by $mN$, and let $m\to \infty$. This gives 
	$$ - \log r(p,\gamma) \leq \frac{1}{N} \log c(p,\gamma) + \frac{1}{N} \log C_N .$$ 
	It follows that  $C_N r(p,\gamma)^N \geq 1/c(p,\gamma)$ for all $N$ and 
	$ \mu(p,\gamma) \leq c(p,\gamma) <\infty.$ 
\end{proof}

\subsection{Particle excess function} \label{sec:pef}

Recall from Eq.~\eqref{eq:wave-fct-period} that $|\Psi_N(z_1,\ldots,z_N)|^2$ favors 
particle abscissas $x_k = (k-1) p \gamma$, $k =1,\ldots, N$. Think of the infinite 
cylinder as a collection of $N$ annuli of width $p  \gamma$, centered around those optimal abscissas, 
and a left and right tail. Let $\bar x$ be at the boundary between two annuli, i.e., 
\begin{equation*}
	\bar x = (k- 1/2) p\gamma 
\end{equation*}
for some $k = 1, \ldots,N-1$. The state should prefer configurations with $k$ particles to the left of $\bar x$, and $N-k$ particles to 
the right of $\bar x$. Deviations from this optimum configuration are measured by the particle excess function 
\begin{equation*}
	K(\bar x; z_1,\ldots,z_N) := \# \{ j \mid x_j \leq \bar x\} - k = N-k - \# \{j \mid x_j> \bar x\}.
\end{equation*}
\begin{theorem} \label{thm:zero-imbalance}
	Let $P_N$ be the probability measure on $\mathcal{Z}^N$ with density proportional to $|\Psi_N(z_1,\ldots,z_N)|^2$. 
	Let $\bar x$ and $K(\bar x; \vect{z})$ as above. Then, for some constant $c(p,\gamma)< \infty$ independent of $N$ and $\bar x$ 
	\begin{equation*}
		P_N\bigl( K(\bar x; z_1,\ldots,z_N) = 0 \bigr) \geq 1/c(p,\gamma) >0.
	\end{equation*}
 \end{theorem}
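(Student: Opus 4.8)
The plan is to translate the statement into the language of a classical $2$-dimensional Coulomb gas via the plasma analogy and then run a Peierls-type estimate adapted from~\cite{ajj}. Writing $|\Psi_N(z_1,\ldots,z_N)|^2$ in the periodic form of Eq.~\eqref{eq:wave-fct-period}, the measure $P_N$ is the Gibbs measure of $N$ particles on the cylinder $\mathcal{Z}$ with a logarithmic pair interaction $-p\log|1-\exp((z_j-z_k)/R)|$ and an external Gaussian confinement $\tfrac{1}{\ell^2}\sum_k (x_k-(k-1)p\gamma)^2$ which pins particle $k$ near $x=(k-1)p\gamma$. The claim is that, at the boundary abscissa $\bar x=(k-1/2)p\gamma$, the event that exactly $k$ particles sit to the left has probability bounded below uniformly in $N$ and $\bar x$. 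The natural strategy is to show that the complementary events $\{K=m\}$ for $m\neq 0$ have total probability bounded away from $1$, by comparing each such event to the reference event $\{K=0\}$ and exhibiting an energy cost that grows with $|m|$.

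First I would split the cylinder at $\bar x$ into a left half $L$ and right half $R$ and decompose the energy into intra-$L$, intra-$R$, and cross terms. The key point, as in one-dimensional jellium and in~\cite{ajj}, is that moving a particle across $\bar x$ forces a net charge imbalance $K$ on each side, and because the background (here, the sequence of Gaussian wells) is laid out at spacing $p\gamma$, an imbalance of $K$ means roughly $K$ ``misplaced'' particles, each paying a confinement penalty that is at least quadratic — or at the very least linearly growing — in its displacement from its designated well. More precisely I would argue that, conditioned on $K(\bar x;\vect z)=m$, the Gaussian factor $\exp(-\tfrac{1}{\ell^2}\sum_k(x_k-(k-1)p\gamma)^2)$ is smaller by a factor $\exp(-c\,m^2)$ (or at worst $\exp(-c|m|)$) relative to the optimal placement, uniformly in $N$, because the labels $k$ are fixed while the particles have been shifted by $m$ slots. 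The logarithmic interaction, by the same ``almost unity'' estimate quoted before Eq.~\eqref{eq:wave-fct-period} — namely $1-\exp((z_j-z_k)/R)=1+O(\exp(-|x_j-x_k|/R))$ in the ordered sector — contributes only a bounded correction per particle near $\bar x$ and does not spoil the Gaussian gain; this is exactly the kind of estimate~\cite{ajj} makes rigorous for quasi-one-dimensional tubes. Summing the resulting bound $P_N(K=m)\leq C\,P_N(K=0)\exp(-c\,m^2)$ over $m\in\Z\setminus\{0\}$ gives $1-P_N(K=0)\leq C'\,P_N(K=0)$, hence $P_N(K=0)\geq 1/(1+C')$, which is the claim with $c(p,\gamma)=1+C'$.

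Concretely, to make the comparison $P_N(K=m)\leq C e^{-c m^2}P_N(K=0)$ I would construct, for a configuration $\vect z$ with $K(\bar x;\vect z)=m>0$ (the case $m<0$ is symmetric by the reversal $z_j\mapsto -z_j$), an injective map to configurations with $K=0$ obtained by relabeling: the configuration already has $k+m$ particles left of $\bar x$ and $N-k-m$ to the right, so I keep the particle positions but reassign labels so that the leftmost $k+m$ positions carry labels $1,\ldots,k+m$ — this is automatic since $P_N$ is symmetric under permuting particles — and then observe that the Gaussian weight of the same point set under the "correct" interpretation differs by a controllable factor. The cleanest route is probably to integrate out: bound $\int_{\{K=m\}}|\Psi_N|^2$ by inserting the identity $1=\sum$(indicator of ordering) and comparing well-by-well. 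The main obstacle I anticipate is precisely this cross-over bookkeeping: showing that the energy penalty for imbalance $m$ is genuinely $\gtrsim m^2$ (and not merely positive) uniformly in $\bar x$ and $N$, since the wells near the two ends of the cylinder are "softer" in the sense that a boundary particle can drift outward into the tail at lower cost than an interior particle. I expect one resolves this by noting that an imbalance at an interior cut necessarily displaces particles whose designated wells are themselves interior, so the relevant quadratic penalties are bounded below by a constant times $\sum_{j=1}^{|m|} j^2\gtrsim |m|^3$, even more than enough; the only care needed is to handle cuts within distance $O(|m|)$ of an endpoint separately, where one can instead use the crude bound that the confinement still grows at least linearly, giving a summable $e^{-c|m|}$ tail. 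Rolling these two regimes together yields the uniform lower bound and, combined with Prop.~\ref{lemma:sub-mult}, feeds directly into Theorem~\ref{thm:finite-mean}.
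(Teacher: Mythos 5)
Your high-level strategy coincides with the paper's: view $|\Psi_N|^2$ as a Coulomb-gas Gibbs weight via Eq.~\eqref{eq:wave-fct-period} and run a Peierls-type comparison adapted from~\cite{ajj}, in which an imbalance $m$ at the cut $\bar x$ costs a one-dimensional confinement energy of order $\sum_{j\leq |m|}j^2\gtrsim |m|^3$, so that summing over $m$ gives a uniform lower bound on $P_N(K=0)$. (The paper actually proves the stronger statement that the set $\Omega_\mathrm{reg}$ of configurations that are balanced \emph{and} do not accumulate near $\bar x$ has uniformly positive probability, Eq.~\eqref{eq:pos-prob}, because that is what Prop.~\ref{lemma:sub-mult} requires; Theorem~\ref{thm:zero-imbalance} is then a one-line corollary. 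Your weaker target suffices for the theorem itself.)

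The execution, however, has a genuine gap at the central comparison $P_N(K=m)\leq Ce^{-cm^2}P_N(K=0)$. The map you propose is a \emph{relabeling} that keeps all particle positions fixed; since $|\Psi_N|^2$ is symmetric and the event $\{K(\bar x;\vect{z})=m\}$ depends only on the unordered point configuration, relabeling acts as the identity on this event and cannot carry $\{K=m\}$ into $\{K=0\}$. One must actually \emph{transport} the misplaced particles to new abscissas near their designated wells, as the paper's map $T:\Omega\to\Omega_\mathrm{reg}$ does, and this forces two further steps that your proposal omits. First, the transported particles acquire new pair interactions, and $V_2$ is unbounded (it diverges at coinciding points), so ``a bounded correction per particle'' is not available after the move; the paper's resolution is to drop the affected interactions before shifting (at a cost $e^{k_2|J|^2}$) and to reinsert them afterwards via Jensen's inequality, using that $V_2(x+iy)$ is harmonic in $y$ with $\int_0^{2\pi R}V_2(x+iy)\,\dd y=0$ --- this $y$-averaging is the decisive quasi-one-dimensional ingredient and is entirely absent from your argument. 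Second, the transport map is neither injective nor measure-preserving, so the number of preimages and the Jacobian must be bounded and shown to be dominated by the gain $\sum_{j\in J}(j-N)^2$ in the one-dimensional energy, as in Eq.~\eqref{eq:oj}. Your $|m|^3$ heuristic and the final summation are correct and consistent with Theorem~\ref{thm:imbalance}, but without the actual transport map, the harmonicity/Jensen step, and the entropy bookkeeping, the proof is incomplete.
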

Prop.~\ref{lemma:sub-mult} will follow from a similar lower bound on the probability that not only the particle excess vanishes, 
but moreover particles do not accumulate too close to $\bar x$, see the definition of regular configurations $\Omega_\mathrm{reg}$ below. 
The next theorem will be useful for the bosonic wave function.
\begin{theorem} \label{thm:imbalance}
	With the notation of Theorem~\ref{thm:zero-imbalance}:
  For suitable 
  constants $C,c>0$ and all $n \in \N_0$
  \begin{equation*}
   P_N\bigl( |K(\bar x; z_1,\ldots,z_N )|  \geq n\bigr) \leq C \exp( - c n ^3).
  \end{equation*}
	$C$ and $c$ do not depend on $N$ or $\bar x$. 
\end{theorem}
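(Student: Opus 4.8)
The plan is to split the particle count at $\bar x$ into an ``orbital occupation'' part and a ``Gaussian smearing'' part, and to read off the cubic rate from the elementary fact that pushing the $j$-th excess particle across $\bar x$ costs Gaussian energy of order $(jp\gamma)^2$, so $n$ such pushes cost order $n^3$. For the \textbf{reduction}, write $\Psi_N=\sum_S \tilde b_N(S)\,|S\ra$ in the orthonormal Slater (for even $p$: symmetrized) basis indexed by occupation sets $S\subset\{0,\dots,pN-p\}$, so $\sum_S|\tilde b_N(S)|^2=C_N$. Since $\mathbb 1_{x\le\bar x}$ does not depend on the angular variable $y$ and the $\psi_k$ carry distinct angular frequencies $e^{iky/R}$, the $y$-integration annihilates every cross term $S\neq S'$ in $\int \mathbb 1[\#\{j:x_j\le\bar x\}\in\cdot\,]\,|\Psi_N|^2$; hence for bounded $f$
\begin{equation*}
	\E_{P_N}\bigl[f(\#\{j:x_j\le\bar x\})\bigr]=\sum_S \frac{|\tilde b_N(S)|^2}{C_N}\,\E_S\Bigl[f\bigl(\textstyle\sum_{s\in S}B_s\bigr)\Bigr],
\end{equation*}
where, conditionally on $S$, the $B_s$ are independent with $B_s\sim\mathrm{Bernoulli}(\beta_s)$ and $\beta_s=\tfrac12\erfc\bigl((s-(k-\tfrac12)p)\gamma\bigr)$ is the probability that a particle in orbital $\psi_s$ sits left of $\bar x$. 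Thus $K(\bar x;\vect z)=\tilde K(S)+D$ with $\tilde K(S):=\#\{s\in S:s\gamma<\bar x\}-k$ and $D:=\sum_{s\in S}\bigl(B_s-\mathbb 1[s\gamma<\bar x]\bigr)$, and it suffices to bound $P_N(|\tilde K|\ge n)$ and $\sup_S P(|D|\ge n\mid S)$ each by $C e^{-cn^3}$.

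\textbf{The smearing term.} Here the cube is immediate: $D=\#\{s\in S:s\gamma>\bar x,\ B_s=1\}-\#\{s\in S:s\gamma<\bar x,\ B_s=0\}$, so $|D|\ge n$ forces one of the two counts to be $\ge n/2$. Listing the occupied orbitals on the wrong side of $\bar x$ in order of increasing distance from $\bar x$, the $i$-th one is a distinct integer and so lies at orbital-distance $\ge i-1$ from $\bar x/\gamma$; its wrong-side probability is therefore at most $\tfrac12\erfc((i-1)\gamma)\le Ce^{-((i-1)\gamma)^2}$. A first-moment bound over size-$\lceil n/2\rceil$ subsets, together with the resummation $\sum_{i_1<\dots<i_m}e^{-\gamma^2\sum_j i_j^2}\le e^{-\gamma^2\sum_{j=1}^m(j-1)^2}(1-e^{-2\gamma^2})^{-m}$ (substitute $i_j=(j-1)+(\text{nondecreasing})$ and drop cross terms), gives $P(|D|\ge n\mid S)\le C e^{-c\gamma^2 n^3}$ uniformly in $S,N,\bar x$; the tail $D\le -n$ is symmetric. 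In particular this already proves the theorem when $p=1$, where $\tilde K\equiv 0$.

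\textbf{The orbital term.} Measure $\tilde K$ at $pk$ (occupied-orbital count below $pk$, minus $k$), which differs from $\tilde K$ at $\bar x$ by at most $p$ and is harmless. Using the renewal/rod decomposition of $S$ and the dominance condition \eqref{eq:dominance}, $\tilde K(S)\ge n$ forces the partition of $S$ to have no renewal point in $\{pk,\dots,p(k+n-1)\}$, so the rod $X=[pj_0,pj_1-1]$ straddling $pk$ obeys $j_0<k\le k+n\le j_1$ and carries an internal left-excess $\ge n$ relative to its own arithmetic-progression ground state. Factorising $C_N$ along renewal points and using supermultiplicativity $C_N\ge C_{j_0}C_{j_1-j_0}C_{N-j_1}$ gives $P_N(\tilde K\ge n)\le\sum_{L\ge n+1}(L-n)\,\tilde w_L(n)/C_L$, where $\tilde w_L(n)$ is the squared-coefficient weight of irreducible length-$L$ rod configurations shifted right by $\ge n$ relative to $0,p,2p,\dots$ (the tail $\tilde K\le -n$ is identical). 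Since $C_L\ge c_0\,r(p,\gamma)^{-L}$ by \eqref{eq:norm-known}, it remains to prove $\tilde w_L(n)\le r(p,\gamma)^{-L}e^{-c\gamma^2 n^3-\delta L}$ for some $\delta>0$, i.e.\ that an internal imbalance of $n$ costs energy of order $n^3$; this is the discrete counterpart of the charge-imbalance estimate for quasi-one-dimensional jellium in \cite{ajj}. I would obtain it by the surgery argument underlying Theorem~\ref{thm:zero-imbalance}: in the completed-square form $|\Psi_N(\vect z)|^2\propto\prod_{j<k}|1-e^{\gamma(z_j-z_k)}|^{2p}\exp(-\sum_{\ell=1}^N(x_\ell-(\ell-1)p\gamma)^2)$, on $\{K(\bar x;\vect z)=m\}$ the $(k+j)$-th particle in increasing $x$-order ($1\le j\le m$) lies left of $\bar x$, hence at distance $\ge(j-\tfrac12)p\gamma$ from its slot $(k+j-1)p\gamma$, so the Gaussian factor is at most $\exp(-\sum_{j=1}^m((j-\tfrac12)p\gamma)^2)\le e^{-(p\gamma)^2 m^3/4}$; conditioning on the configuration outside a bounded window around $\bar x$ and sliding the $m$ excess particles across $\bar x$ by a volume-preserving, boundedly-many-to-one map changes the Vandermonde factor by too little to cancel the $m^3$ gain. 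Combining with the smearing estimate and replacing $n$ by $n/2$ yields the theorem, with $C,c$ depending only on $p,\gamma$ (and $c\sim\gamma^2$ on thick cylinders).

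\textbf{Main obstacle.} The hard part is this last step: extracting a genuinely \emph{cubic} — not merely exponential — rate for the imbalanced-rod weights $\tilde w_L(n)$. Naive bounds on the coefficients of powers of Vandermonde determinants are far too lossy (they do not pin down the growth of $w_L=\tilde w_L(0)$ better than sub-exponentially), so one must feed in the geometry — an imbalance of $n$ displaces charges by amounts $p\gamma,2p\gamma,\dots,np\gamma$, whose Gaussian cost is of order $n^3$ — and turn it into a clean surgery that is volume-preserving, injective up to subexponential multiplicity, and uniform in $\bar x$ and $N$ while keeping the long-range Vandermonde factor under control. This is precisely where the adaptation of the methods of \cite{ajj} does the work.
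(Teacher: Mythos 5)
Your overall mechanism is the right one --- the $j$-th excess particle is displaced by roughly $jp\gamma$ from its Gaussian slot, so $n$ excess particles cost $\sum_{j\leq n} j^2 \sim n^3$ in one-dimensional energy --- but the proposal has a genuine gap exactly where you flag it. The bound $\tilde w_L(n)\leq r(p,\gamma)^{-L}e^{-c\gamma^2n^3-\delta L}$ on imbalanced-rod weights is not proven; you defer it to ``the surgery argument underlying Theorem~\ref{thm:zero-imbalance}'' and an adaptation of~\cite{ajj}. That surgery \emph{is} the entire content of the paper's proof: in the proof of Prop.~\ref{lemma:sub-mult} the configuration space is partitioned into sets $\Omega_J$ indexed by the irregular particle labels $J$, and Eq.~\eqref{eq:oj} gives $\int_{\Omega_J}e^{-U}\leq c_1\exp(-c_2\sum_{j\in J}(j-N)^2)\int_{\Omega_{\mathrm{reg}}}e^{-U}$ via the map $T$ (with the entropy/Jacobian and Vandermonde controls you correctly anticipate are the hard points). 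Theorem~\ref{thm:imbalance} is then a two-line corollary: $K(\bar x;\vect z)\leq -k$ forces the particles labeled $N-k+1,\ldots,N$ into the wrong half-cylinder, hence $J\supseteq\{N-k+1,\ldots,N\}$, and summing~\eqref{eq:oj} over such $J$ produces $\exp(-c_2\sum_{s=0}^{k-1}s^2)\leq e^{-ck^3}$. So the step you leave open is not a technical remainder but the theorem itself; moreover your route does it on a more awkward object (weights of irreducible rods with prescribed internal shift, requiring in addition a uniform-in-$L$ comparison $\tilde\alpha_L(n)\lesssim e^{-cn^3}\alpha_L$ and summability in $L$), whereas the paper never leaves the continuum and never needs the lattice decomposition for this statement.

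A second, concrete error: your claim $\sup_S P(|D|\geq n\mid S)\leq Ce^{-c\gamma^2n^3}$ fails for bosons. The argument that the $i$-th wrong-side occupied orbital lies at orbital-distance $\geq i-1$ from $\bar x/\gamma$ uses that the occupied orbitals are distinct integers; for even $p$ the admissible $\vect m$'s are multisets (e.g.\ $a_3(2,2,2)=-6\neq0$ for $p=2$), so all $n$ relevant particles can sit in a single orbital adjacent to $\bar x$ and then $P(|D|\geq n\mid S)$ decays only exponentially in $n$. The decomposition can be repaired because such $S$ carry small weight $|\tilde b_N(S)|^2/C_N$, but that again requires the very imbalance estimate you have not established; the sup over $S$ does not suffice. (The fermionic case of the smearing bound, and the reduction via $y$-integration killing cross terms, are fine.)
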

The proofs follow ideas from~\cite{ajj}.  They are best understood in the light of 
the plasma analogy: we think of $|\Psi_N|^2$ as the Boltzmann weight for a classical Coulomb system. 
Note that Theorems~\ref{thm:zero-imbalance} and~\ref{thm:imbalance} only involve the modulus of the wave function and
they do not refer to the full quantum mechanical setting.

\begin{proof}[Proof of Prop.~\ref{lemma:sub-mult}]
	Let $N, M\in \N$. 
	 Taking into account proportionality constants,  
	 Eq.~\eqref{eq:wave-fct-period} becomes
	\begin{equation*}
		|\Psi_{N+M} (z_1,\ldots,z_{N+M})|^2 = \frac{(2 \pi \sqrt{\pi}/\gamma)^{-(N+M)} }{(N+M)!}\, 
			\exp(-\, U(z_1,\ldots,z_{N+M}))
	\end{equation*}
	The ``energy'' $U(\boldsymbol{z})$ is, up to a multiplicative and an additive constant, the energy 
	of some classical quasi 1D Coulomb system. It is the sum of a one-dimensional energy and 
	a correction, 
	\begin{equation*}
		U(z_1,\ldots, z_N) : = U^1(x_1,\ldots,x_N) + \sum_{1 \leq j <k \leq N} V_2(z_j-z_k) 
	\end{equation*}
	where, for $x_1 \leq \cdots \leq x_{N+M}$, 
	\begin{align*}
		U^1(x_1,\ldots, x_{N+M}) & : =  \sum_{j=1}^{N+M} \Bigl( x_j - (j-1)p\gamma \Bigr)^2 \\
		V_2(z) & = - 2p \ln \Bigl| 1 - \exp\Bigl( - \gamma(|x| +i y) \Bigr) \Bigl|.
	\end{align*}
	Let 
	$$ \Omega:= \{ \boldsymbol{z}\in \C^{N+M} \mid  x_1 \leq \cdots \leq x_{N+M} , 
	\ 0 \leq y_j \leq 2 \pi R\}$$ 
	be the configuration space for $N+M$ particles on the infinite cylinder, labeled from left to right. 
	We imagine the cylinder split into two half-infinite cylinders, separated by $\bar x := (N-1/2) p \gamma$. 
	Let $\Omega_\mathrm{reg} \subset\Omega$ be the set of configurations that are regular in the following sense:
	\begin{itemize}
		\item [(i)] Each particle is in the ``correct'' half-cylinder, 
		 $$ x_1 \leq \cdots \leq x_N \leq (N- \frac{1}{2}) p \gamma \leq x_{N+1} \leq \cdots \leq x_{N+M} .$$
		\item [(ii)] Particles do not accumulate at the boundary: for all $j=1,\ldots,N+M$,
		 \begin{equation*}
			\Bigl|x_j - \Bigl(N- \frac{1}{2} \Bigr) p \gamma \Bigr| \geq \Bigl|j - N - \frac{1}{2} \Bigr|
			\, \frac{p\gamma}{2}. 
		  \end{equation*}
	\end{itemize}
	In $\Omega_\mathrm{reg}$, the system's energy $U(z_1,\ldots,z_N)$ is the sum of energies associated 
	with the subsystems of $N$ and $M$ particles, plus an interaction term that is lower bounded by
	\begin{align*}
		V_\mathrm{LR}(\boldsymbol{z}) & := 2p \sum_{j=1}^{N} \sum_{k=N+1}^{N+M} V_2(z_j-z_k) \\
			& \geq - 2p \sum_{j=1}^{N} \sum_{k=N+1}^{N+M} \ln [1+ \exp(- \gamma |x_{j}-x_{k}|) ] \\
			& \geq  - 2 p \Bigl( 1- \exp(- p\gamma^2/2) \Bigr)^{-2}, 
	\end{align*}
	for all $N$ and $M$. 
	As a consequence,
	\begin{equation*}
		(2\pi \sqrt{\pi}/\gamma)^{-(N+M)} \int_{\Omega_\mathrm{reg}} \exp(-\, U) 
			\leq \const C_N\,C_M, 
	\end{equation*}
	for some suitable $N,M$ independent constant.  Eq.~\eqref{eq:sub-mult} follows immediately once we know that 
	regular configurations have positive probability, i.e., 
	\begin{equation} \label{eq:pos-prob}
		\int_{\Omega}\exp(- U ) 
		\leq  \const \int_{\Omega_{reg}}\exp(- U ) , 
	\end{equation}
	uniformly in $N$ and $M$. 

	So it remains to show Eq.~\eqref{eq:pos-prob}. To this aim we use a Peierls-type argument: with 
	each irregular configuration we associate a regular configuration that has a smaller energy. This is 
	done by shifting particles that are in the wrong half-cylinder, or too close to the boundary, 
	closer to their optimum positions $(j-1)p\gamma$. Some technicalities arise because the implementing map 
	$T: \Omega\to \Omega_\mathrm{reg}$ 
	is not one to one and has non-trivial Jacobian, resulting in an entropy loss that has to be carefully 
	evaluated. 

	We start with the definition of the map $T:\Omega \to \Omega_\mathrm{reg}$. Let $\boldsymbol{z} \in \Omega$. 
	Let $j \leq N$. We shift the $j$-th particle as follows: 
	\begin{itemize}
		\item if $x_j > \bar x$ (irregular particle, wrong half-cylinder), then 
			$$ x'_j:= (j-1) p \gamma - \exp(-|x_j- \bar x|).$$ 
		\item if $\bar x - (N + 1/2 -j) p\gamma/2 <x_j \leq \bar x $ (irregular particle, 
		too close to the boundary), 
			then 
			$$ x'_j:= (j-1) p \gamma + \frac{|x_j - \bar x|}{N+1/2-j}. $$
		\item if $x_j \leq \bar x - (N + 1/2 -j) p\gamma/2$ (regular particle), then $x'_j:=x_j$.
	\end{itemize}
	Particles belonging to the right half-cylinder, $j \geq N+1$, are shifted in an 
	analogous way. To complete the definition of $T$, let $y'_j := y_j$ and 
	$$ T(\boldsymbol{z}) = (z'_{\sigma(1)},\ldots,z'_{\sigma(N+M)})$$ 
	with $\sigma$ a permutation that reorders the shifted particles from left to right.
	One can check that $T$ maps $\Omega$ into $\Omega_\mathrm{reg}$.

	It is convenient to group together 
	configurations that have the same irregular particles. Thus, for $J \subset \{1,\ldots, N+M\}$ 
	we let $\Omega_J$ be the configurations that have $x'_j \neq x_j$, 
	if and only if $j \in J$. This gives a partition of $\Omega$. 
 	$$\Omega = \bigcup_{J \subset \{1,\ldots, N+M\}} \Omega_J, \quad \Omega_\emptyset = \Omega_\mathrm{reg}.$$ 
	We claim that for suitable constants $c_1,c_2$, independent of 
	$N$, $M$ and $J$,  
	\begin{equation} \label{eq:oj}
		\int_{\Omega_J} \exp(-\, U)  \leq c_1\exp\Bigl( - c_2 \sum_{j \in J} (j-N)^2 \Bigr)
			\int_{\Omega_\mathrm{reg}} \exp(-\,U). 
	\end{equation}
	Eq.~\eqref{eq:pos-prob} follows from Eq.~\eqref{eq:oj} by a summation over subsets $J \subset \{1,\ldots,N+M\}$, 
noting that 
  \begin{equation*}
    \sum_{J \subset \{1,\ldots,N+M\}} \exp\bigl( - c_2 (j - N)^2 \bigr) \leq \left(\prod_{k=0}^\infty\bigl(1+ \exp( - c_2 k ^2)\bigr) 
      \right)^2 <\infty. 
  \end{equation*}
	We are finally left with the proof of Eq.~\eqref{eq:oj}. It  is shown with the help of a change of variables. We refer the reader to~\cite{ajj} for the 
	details and content ourselves here with some hints about the necessary estimates:

	 \underline{Entropy}: One has to give an upper bound of  the maximum number of preimages $|T^{-1}(\vect{z'})|$ of points in $\vect{z'} \in T(\Omega_J)$, and 
			a lower bound for  Jacobian of $|\dd T(\vect{z}) /\dd \vect{z}| $ for $\vect{z} \in \Omega_J$; note that $T$ is almost everywhere differentiable.
			The bounds depend on the set of irregular particle labels and are bad when there are a lot of irregular particles, but this is 
			compensated  by a gain in one-dimensional energy $\Delta U^1$. 

	\underline{Energy}: The simplest to estimate is the decrease in 1D energy, 
			\begin{equation*}
				\Delta U^1:= U^1(\boldsymbol{z}) - U^1(T(\boldsymbol{z}))
					\geq k_1 \sum_{j\in J} (j-N)^2 + k_1 \sum_{j:\ \text{wrong} \atop \text{half-cyl.}}
						(x_j - \bar x)^2
			\end{equation*}
			for some suitable constant $k_1$. 

			For the $V_2$-interactions, note that only the interactions 
			affecting irregular particles are changed. Now, instead of estimating  the change in
			 $V_2$-interaction directly, we adopt a three step procedure. First, we drop the affected  $V_2$-interaction
			 altogether, using 
			\begin{equation*}
				\sum_{j \in J} \sum_{k=1}^{N+M} V_2(z_j - z_k) \geq - k_2 |J|^2
			\end{equation*}
			for some constant $k_2$. 
			Next, we shift $x_j \mapsto x'_j$. Finally, we reinsert the dropped interactions
			with Jensen's equality: for fixed 
			$x\neq 0$, $V_2$ is a harmonic function of $y$, hence $\int V_2(x+iy) \dd y=0$ and 
			\begin{equation*}
				\int 1\, \dd \boldsymbol{y'}  \leq \int \exp\Bigl( - \sum_{j \in J} \sum_{k=1}^{N+M}
				 V_2(z'_j - z'_k) \Bigr)  \dd \boldsymbol{y'}.
			\end{equation*}
			All $y$-integrations go from $0$ to $ 2\pi R$. 
			At the end, we obtain an estimate on $y$-averaged Boltzmann weights,
			\begin{equation*} 
				\int \exp(-\,U(\boldsymbol{z}))\dd \boldsymbol{y} 
				\leq \exp\Bigl(k_2 |J|^2 - \Delta U^1 \Bigr)
					  \int \exp(-\, U(T(\boldsymbol{z}))) \dd \boldsymbol{y'},
			\end{equation*}
			with $\int \dd \boldsymbol{y'}$ the integration 
			over $y$-coordinates of $T(\boldsymbol{z})$. 
Eq.~\eqref{eq:oj} is obtained by combining this last estimate with the entropy estimates. 
\end{proof}

\begin{proof}[Proof of Theorem~\ref{thm:zero-imbalance}]
	In the notation of the proof of Prop.~\ref{lemma:sub-mult}, regular configurations have 
	zero particle imbalance and, therefore, Eq.~\eqref{eq:pos-prob} gives
	\begin{equation*}
		P_{N+M}\bigl( K(\bar x; z_1,\ldots,z_{N+M}) \bigr) \geq P_{N+M}(\Omega_\mathrm{reg}) \geq 1/\const >0, 
	\end{equation*}
	uniformly in $N$ and $M$. 
\end{proof}

\begin{proof}[Proof of Theorem~\ref{thm:imbalance}]
	Again, it is enough to have a closer look at the proof of Prop.~\ref{lemma:sub-mult}. We remark that if the particle excess is 
	negative, there are too many particles in the right half-cylinder: 
	if $K(\bar x; z_1,\ldots,z_{M+N}) \leq - k$ for $k \in \N$, then 
	 $x_{N} \geq \cdots \geq x_{N-k+1} \geq \bar x$. 
	Thus the particles with labels $N-k+1,\ldots, N$ are irregular. Since 
	\begin{equation*}
	  \sum_{ J \subset \{1,\ldots, N+M\} \atop \{N-k+1,\ldots, N\} \subset J} 
		\exp\bigl( - c_2 \sum_{j \in J} (j - N)^2 \bigr) 
	    \leq \exp\bigl( - c_2 \sum_{s=0}^{k-1} s^2 \bigr) \left(\prod_{j=0}^\infty 
			\bigl( 1+ \exp( - c_2 j ^2) \bigr) \right)^2
	\end{equation*}
	we find that the weight of configurations with particle excess $ \leq -k $ is of order 
	$\exp( - \const k ^3)$, uniformly in $N$ and $M$. A similar reasoning can be applied to 
	positive particle excess. This proves Theorem~\ref{thm:imbalance}.  
\end{proof}

\subsection{Uniform moment bounds for lattice bosons} \label{sec:bosons}

Before we investigate the thermodynamic limit, we give bounds on 
correlation functions in terms of the 1D lattice system. This is an auxiliary 
result needed for bosons ($p$ even) only. For fermions, the bounds given below are trivial 
since there can be at most one fermion per lattice site. 

Recall that 
$c_k^*, c_k $, $k \in \Z$,  are the creation and annihilation operators for 
the orbital $\psi_k(z)$.
Also, we denote $\la a \ra_N:= \la \Psi_N, a \Psi_N\ra / C_N$, with integration on the infinite cylinder. 

\begin{prop} \label{prop:corr-bounds} 
  Let  $n \in \N$. Then for some suitable 
  constant $D>0$, all $n,N \in \N$ and all
  $k_1,\ldots, k_n, m_1,\ldots, m_n \in \Z$, 
 \begin{align}
	\bigl| \la \hat n_{k_1} \cdots \hat n_{k_n} \ra _N \bigr| &\leq D^n \label{eq:occup-bounds}\\
    \bigl| \la c_{k_1}^* \cdots c_{k_n}^* c_{m_n} \ldots c_{m_1} \ra_N \bigr|  &  \leq D^n. \label{eq:factor-bounds}
 \end{align}
\end{prop}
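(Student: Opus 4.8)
\emph{Plan.} For fermions ($p$ odd) both inequalities are immediate: $\hat n_k\in\{0,1\}$, so $\la \hat n_{k_1}\cdots\hat n_{k_n}\ra_N\le 1$, and $c_{k_1}^*\cdots c_{k_n}^*c_{m_n}\cdots c_{m_1}$ has norm $\le 1$ on the $N$-particle sector; so assume $p$ even. I would first reduce \eqref{eq:factor-bounds} to \eqref{eq:occup-bounds}. By Cauchy--Schwarz,
\[
 \bigl|\la c_{k_1}^*\cdots c_{k_n}^*c_{m_n}\cdots c_{m_1}\ra_N\bigr|
 =\frac{\bigl|\la c_{k_n}\cdots c_{k_1}\Psi_N,\,c_{m_n}\cdots c_{m_1}\Psi_N\ra\bigr|}{C_N}
 \le\frac{\|c_{k_n}\cdots c_{k_1}\Psi_N\|\,\|c_{m_n}\cdots c_{m_1}\Psi_N\|}{C_N}.
\]
Now $\|c_{m_n}\cdots c_{m_1}\Psi_N\|^2=\la\Psi_N,\,c_{m_1}^*\cdots c_{m_n}^*c_{m_n}\cdots c_{m_1}\Psi_N\ra$, and on an occupation-number state the operator $c_{m_1}^*\cdots c_{m_n}^*c_{m_n}\cdots c_{m_1}$ has eigenvalue $\prod_\ell n_\ell(n_\ell-1)\cdots(n_\ell-\mu_\ell+1)\le\prod_\ell n_\ell^{\mu_\ell}$, where $\mu_\ell$ is the multiplicity of $\ell$ among $m_1,\dots,m_n$; hence $c_{m_1}^*\cdots c_{m_n}^*c_{m_n}\cdots c_{m_1}\le\hat n_{m_1}\cdots\hat n_{m_n}$ as diagonal operators, so $\|c_{m_n}\cdots c_{m_1}\Psi_N\|^2/C_N\le\la\hat n_{m_1}\cdots\hat n_{m_n}\ra_N\le D^n$ by \eqref{eq:occup-bounds}. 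Treating the $k$-factor the same way gives \eqref{eq:factor-bounds} with the same $D$.

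To prove \eqref{eq:occup-bounds} I would use the quasi-state decomposition. Since $\hat n_{k_1}\cdots\hat n_{k_n}$ is diagonal in the basis of (anti)symmetrized products, \eqref{eq:quasi-decomp} reads $\la\hat n_{k_1}\cdots\hat n_{k_n}\ra_N=\sum_{\mathcal{X}\in\mathcal{P}_N}p_N(\mathcal{X})\,\omega_\mathcal{X}(\hat n_{k_1}\cdots\hat n_{k_n})$ with every summand $\ge 0$ and $\omega_\mathcal{X}(\hat n_{k_1}\cdots\hat n_{k_n})=\la u_\mathcal{X},\hat n_{k_1}\cdots\hat n_{k_n}u_\mathcal{X}\ra/\|u_\mathcal{X}\|^2$. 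Because $u_\mathcal{X}=v_{X_1}\vee\cdots\vee v_{X_D}$ is a symmetrized product over rods with mutually disjoint supports, this expectation factorizes over the rods: it equals $\prod_i\la v_{X_i},(\prod_{j:\,k_j\in X_i}\hat n_{k_j})v_{X_i}\ra/\|v_{X_i}\|^2$ (the iterated clustering property of $\omega_\mathcal{X}$). Writing $r_\mathcal{X}(k)$ for the number of particles in the rod of $\mathcal{X}$ containing orbital $k$, and using $\sum_{\ell\in X_i}\hat n_\ell\equiv N(X_i)$ on the range of $v_{X_i}$, we obtain $\omega_\mathcal{X}(\hat n_{k_1}\cdots\hat n_{k_n})\le\prod_{j=1}^n r_\mathcal{X}(k_j)$. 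Summing over $\mathcal{X}$ and applying the generalized Hölder inequality to the $n$ nonnegative functions $\mathcal{X}\mapsto r_\mathcal{X}(k_j)$ yields
\[
 \la\hat n_{k_1}\cdots\hat n_{k_n}\ra_N\le\prod_{j=1}^n\bigl(\E_{\P_N}[\,r_\mathcal{X}(k_j)^n\,]\bigr)^{1/n}\le\max_{k\in\Z}\E_{\P_N}\bigl[\,r_\mathcal{X}(k)^n\,\bigr],
\]
so everything reduces to bounding, uniformly in $N$, the moments of the size of the rod around a fixed orbital.

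For that last bound I would return to the Coulomb/Peierls analysis of Sect.~\ref{sec:pef}. Unfolding $p_N(\mathcal{X})=\|v_{X_1}\|^2\cdots\|v_{X_D}\|^2/C_N$ via \eqref{eq:weight}, \eqref{eq:partfunct} and summing over the (at most $r$) possible positions of the rod of size $r$ around orbital $k$ gives $\P_N\bigl(r_\mathcal{X}(k)=r\bigr)\le r\,\zeta_r/C_r$, uniformly in $N$ and $k$, where $\zeta_r:=\|v_{\{0,\dots,pr-1\}}\|^2$ and where the supermultiplicativity $C_aC_{N-a-r}\le C_{N-r}\le C_N/C_r$ has been used. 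Now $\zeta_r$ is the squared norm of the sum of the \emph{irreducible} one-rod terms, i.e.\ of those configurations whose only renewal points are the two endpoints; such a configuration departs from the optimal staircase at \emph{every} intermediate step, which in the plasma picture costs a Coulomb energy of order $r^3$ — this is exactly the mechanism that produces the $\exp(-cn^3)$ tail in Theorem~\ref{thm:imbalance}. It yields $\zeta_r/C_r\le C\exp(-c\,r^3)$ with $C,c>0$ independent of $r$, hence $\P_N\bigl(r_\mathcal{X}(k)\ge r\bigr)\le C'\exp(-c'r^3)$ uniformly in $N,k$; estimating $\sum_r r^n\exp(-c'r^3)$ then gives $\max_k\E_{\P_N}[r_\mathcal{X}(k)^n]\le D^n$ for a suitable $D=D(p,\gamma)$ (or, at worst, a constant $D_n$ finite for each $n$ and uniform in $N$, which is all that Sect.~\ref{sec:correlations} needs), completing the proof of Prop.~\ref{prop:corr-bounds}. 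The main obstacle is precisely this step: one must identify the irreducible-rod weight $\zeta_r/C_r$ with an event of the finite-$N$ plasma on which the particle imbalance is of order $r$, so that the $\exp(-cr^3)$ decay of Theorem~\ref{thm:imbalance} can be imported \emph{uniformly in $N$}, and one must control the ratios $C_{N-r}/C_N$ and $C_r\asymp r(p,\gamma)^{-r}$ through the sub- and supermultiplicativity of Prop.~\ref{lemma:sub-mult}; the mere finiteness of $\mu(p,\gamma)$ from Theorem~\ref{thm:finite-mean} would not suffice, as it controls only the lowest moment.
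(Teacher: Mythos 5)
Your fermionic case, your Cauchy--Schwarz reduction of \eqref{eq:factor-bounds} to \eqref{eq:occup-bounds}, and your reduction of $\omega_\mathcal{X}(\hat n_{k_1}\cdots\hat n_{k_n})$ to the rod sizes $r_\mathcal{X}(k_j)$ are all fine. The gap is exactly where you place it, and it is fatal as stated: you need $\max_k \E_{\P_N}[r_\mathcal{X}(k)^n]\leq D^n$, which after your (correct) bound $\P_N(r_\mathcal{X}(k)=r)\leq r\,\alpha_r/C_r = r\,p_r/u_r \leq c(p,\gamma)\, r\, p_r$ requires all moments $\sum_r r^{n+1}p_r$ of the waiting-time distribution to be finite. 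The paper's machinery (Theorem~\ref{thm:finite-mean}, Lemma~\ref{lem:long-intervals}) only delivers $\sum_r r\,p_r=\mu(p,\gamma)<\infty$. Your proposed remedy, $\zeta_r/C_r\leq C\exp(-cr^3)$, does not follow from Theorem~\ref{thm:imbalance}: that theorem bounds the probability of an imbalance of \emph{size} $\geq n$ at a \emph{single} cut (which costs one-dimensional energy $\sum_{s<n}s^2\sim n^3$), whereas an irreducible rod of length $r$ only forces a \emph{nonzero} imbalance at each of $r-1$ cuts, and that imbalance can be $\pm1$ throughout at no large energy cost. A super-exponential (or even exponential) tail for $(p_r)$ on thick cylinders would essentially settle the exponential-clustering question that the paper explicitly leaves open, so it cannot be imported for free.

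The paper avoids the rod-size distribution altogether and works in the continuum. For $\la \hat n_k^n\ra_N$ it takes the annulus $A$ of width $\gamma$ centered at $x=k\gamma$; by Eq.~\eqref{eq:nak} the particle number $N_A$ differs from the two boundary imbalances by $1$, so Theorem~\ref{thm:imbalance} gives $E_N(N_A^n)\leq w^n$ uniformly in $N,k$. Since the diagonal $n$-point function is a factorial moment density and the $y$-integrated contributions with $k_j\neq m_j$ vanish while the rest are positive, integrating Eq.~\eqref{eq:n-point} over $A^n$ yields $E_N(N_A^{[n]})\geq u(\gamma)^n\la \hat n_k^{[n]}\ra_N$ with $u(\gamma)$ the Gaussian mass of $\psi_k$ on $A$; converting factorial moments to moments with the (nonnegative) Stirling numbers gives $\la \hat n_k^n\ra_N\leq (w/u(\gamma))^n$, and repeated Cauchy--Schwarz finishes both inequalities. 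If you want to rescue your lattice route, you would have to prove the missing tail bound on $(p_r)$ separately; the continuum comparison is the way the uniform $n^3$-tail of Theorem~\ref{thm:imbalance} actually enters.
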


\begin{proof}
We start with the proof of the first inequality in the case $k_1 = \cdots = k_n =k$. Thus we seek 
to estimate $\la \hat n_k^n\ra_N$. 
 Let  $A:=[a,b)\times[0,2\pi R]$ 
 be  the annulus of width $\gamma$, centered at $x=k \gamma$, i.e., $a,b=k\gamma \pm \gamma/2$. For 
	a given configuration $(z_1,\ldots,z_N)$, 
 the number of particles  in $A$ is 
 \begin{equation} \label{eq:nak}
    N_{A}(z_1,\ldots,z_N)
      = 1 + K(b;z_1,\ldots,z_N) - K(a; z_1,\ldots,z_N). 
  \end{equation}
  It follows that $N_A$ has moments of all orders, which by Theorem~\ref{thm:imbalance} are uniformly bounded, 
	$E_N(N_A^n) \leq w^n$ for all $n,N,A$ and some suitable $w>0$. Here $E_N$ refers to the measure $P_N$ on $\mathcal{Z}^N$ 
	with density $\propto |\Psi_N(z_1,\ldots,z_N)|^2$, and we use that $\sum_k k^n \exp( - k ^3)$ grows at most exponentially in $n$.  
In order to translate bounds from the continuum picture to the lattice picture, we use the formula 
for the $n$-point correlation function at $z_j = z'_j$: 
 \begin{multline} \label{eq:n-point}
    \rho_n^N(z_1,\ldots,z_n;z_1,\ldots,z_n) =  \sum \la c_{k_1}^*\cdots c_{k_n}^* c_{m_n} \cdots c_{m_1} \ra_N \\
	\overline{\psi_{k_1}(z_1)}\cdots \overline{\psi_{k_n}(z_n)}\, \psi_{m_1}(z_1)\cdots \psi_{m_n}(z_n).  
  \end{multline}
   The sum is over the integers $k_j, m_j \in \Z$. 
   When we integrate over the $y$-coordinates, 
  the contributions with $k_j \neq m_j$ vanish and the non-vanishing contributions are positive. 
Hence, integrating over $A^n$, we obtain 
\begin{equation*}
	E_N\Bigl( N_A(N_A - 1) \cdots (N_A - n+1) \Bigr) \geq u(\gamma)^n  \bigl \la (c_k^*)^n (c_k)^n \bigr \ra_N.
\end{equation*}
The constant $u(\gamma)<1$ comes from integrating the Gaussian $|\psi_{k}(z)|^2$ over the annulus $A$; it does not depend on $N$ or $k$. 
For the left-hand side, we have used that the diagonal $n$-point correlation is a factorial moment density (see \cite[Sect. 5.4]{dvj} for an explanation of this notion). 
The right-hand side is  a factorial moment too, since
\begin{equation*} 
	(c_k^*)^n (c_k)^n = \hat n_k (\hat n_k -1 ) \cdots (\hat n_k - n+1)=: {\hat n_k}^{[n]}.
\end{equation*}
Now, moments are linear combinations of factorial moments 
with positive coefficients $\Delta_{j,n} \geq 0$, the Stirling numbers of the second kind~\cite[Sect. 5.2]{dvj}. 
We deduce 
\begin{equation*}
	E_N\bigl( N_A^n) = \sum_{j=1}^n \Delta_{j,n} E_N\bigl( N_A^{[j]} \bigr) 
		 \geq \sum_{j=1}^n \Delta_{j,n}  u(\gamma) ^j \bigl \la \hat n_k^{[j]} \bigr \ra_N 
		 \geq u(\gamma)^n  \bigl \la \hat n_k^{n} \bigr \ra_N. 
\end{equation*}
 It follows that 
\begin{equation*}
	\bigl \la \hat n_k^n \bigr \ra_N \leq (w / u(\gamma)) ^n =: D^n
\end{equation*}
for all $n$. Repeated use of the Cauchy-Schwarz inequality $|\la a^*b \ra| \leq  \la \la a^*a\ra^{1/2} \la  b^* b\ra^{1/2}$ yields
Eq.~\eqref{eq:occup-bounds}. Since factorial moments are smaller than moments,  
it follows that Eq.~\eqref{eq:factor-bounds} holds when $k_1=m_1,\ldots, k_n =m_n$, from which the general case $k_j \neq m_j$ is 
deduced, again, with the help of  Cauchy-Schwarz. 
\end{proof}

As an application we prove Prop.~\ref{prop:off-diagonal}. 
\begin{proof}[Proof of Prop.~\ref{prop:off-diagonal} when $\Lambda = \mathcal{Z}$]
	We know that $\la c_k^* c_n\ra_N \neq 0$ whenever $k\neq n$, see~\cite{jls}, thus 
	\begin{equation*}
		\rho_1^N(z;z') = \sum_{k=0}^{pN-p} \la \hat n_k \ra_N  \psi_k(z) \overline{\psi_k(z')}.
	\end{equation*}
	For every integer $k$, we can factorize  
	\begin{align*}
		\bigr| \psi_{k}(z) \overline{\psi_{k}(z')}\bigl| &  \propto \exp \Bigl( - \frac{1}{2} \bigl( (x - k \gamma)^2 +(x'- k \gamma)^2 \bigr) \Bigr) \\
				& = \exp\Bigl( - \frac{1}{4} \bigl( x-x')^2 \Bigr) \exp\Bigl( - \frac{1}{4} \bigl( x+x' - 2 k\gamma\bigr)^2\Bigr). 
	\end{align*}
	By Prop.~\ref{prop:corr-bounds}, the occupation numbers are uniformly bounded. The proof is concluded by noting that 
	$\sum_{s\in \Z}  \exp( - (x - s\gamma )^2 /4)$ is bounded too, uniformly in $x\in \R$. 
\end{proof}

\subsection{Stationary renewal process} \label{sec:finite-mean}

Recall the weights $p_N(\mathcal{X})$ and the corresponding measure $\P_N$ on $\mathcal{P}_N$, 
see Eq.~\eqref{eq:weight}. When we integrate on the infinite cylinder, the translational covariance of 
the functions $v_X(z_1,\ldots,z_{N(X)})$ gives 
Eq.~\eqref{eq:partfunct} the form 
\begin{equation*} \label{eq:cnan}
	C_N= ||\Psi_N||^2 
			=\sum_{n_1+\cdots +n_D=N}\alpha_{n_1}\times \cdots \times \alpha_{n_D}.
\end{equation*}
As shown in~\cite{jls}, the positive numbers $\alpha_n$ relate to the quantities $r(p,\gamma)$ and $\mu = \mu(p,\gamma)$ 
from Section~\ref{sec:normalization} through 
\begin{equation*}
	\sum_{n=1}^\infty \alpha_n r(p,\gamma)^n =1, \qquad \sum_{n=1}^\infty n \alpha_n r(p,\gamma)^n = \mu(p,\gamma).
\end{equation*}
Thus we may consider $p_n: = \alpha_n r(p,\gamma)^n$ as a probability distribution on $\N$,  with finite expectation $\mu(p,\gamma)$. 
We have 
\begin{equation*}
	u_N:= C_N\, r(p,\gamma)^N = \sum_{n_1+ \cdots +n_D =N } p_{n_1} \times \cdots \times p_{n_D}.
\end{equation*}
This is the probability for a renewal process with waiting time distribution $(p_n)$ to have a renewal point at $N$, given that it had one at $0$. 
For our purpose it is more convenient to view $(p_n)$ as a measure on $p\N$. The renewal process is then called $p$-periodic.

The weights $p_N(\mathcal{X})$ from Eq.~\eqref{eq:weight} become
\begin{equation*} \label{eq:conditioned}
	p_N(\mathcal{X}) = p_{n_1} \cdots p_{n_D} / u_N
\end{equation*}
where $\mathcal{X}$ is a partition of $\{0,1,\ldots,pN-1\}$ into $D$ consecutive intervals of lengths $pn_1,\ldots,pn_D$. 
Thus $\P_N$ is a $p$-periodic 
renewal process conditioned on $0$ and $pN$ being renewal points. 

Now, given a distribution on $\N$ with finite expectation $\mu$, there is a standard way of defining a stationary renewal process, or, in our case, a 
$p$-periodic renewal process. To each partition of $\Z$ we associate the indicator function of the renewal points. In this way $\mathcal{P}_\infty$ 
inherits the product topology and Borel $\sigma$-algebra from $\{0,1\}^\Z$. There is a unique measure $\P$ on $\mathcal{P}_\infty$ 
such that for all integer $j$, 
\begin{equation*}
	\P( pj \ \text{is a renewal point}) = \frac{1}{\mu(p,\gamma)}, 
\end{equation*}
and for all $j<k$, 
\begin{equation*}
	\P\bigl( pk \ \text{is a renewal point} \mid pj \ \text{is a renewal point} \bigr)  = u_{k-j}.
\end{equation*}
 This measure $\P$ is invariant with respect to shifts by multiples of $p$. The following three lemmas will be  fundamental for the 
investigation of the thermodynamic limit.

 \begin{lemma}[$\P_N\to \P$] \label{lem:pntop}
Let 
$r,s \in \N$  have distance $\geq d$ from $0$ and $N$,  $d \leq r,s \leq N - d $. Consider the event $\mathcal{E}_{rs}$ that 
$pr$ and $ps$ are renewal points of $\mathcal{X}$ and moreover the restriction of $\mathcal{X}$ to 
$\{pr,\ldots,ps-1\}$ coincides with some given partition of this subset. Then
\begin{equation*}
	\bigr| \P_N(\mathcal{E}_{rs}) - \P(\mathcal{E}_{rs}) \bigl|  \leq \const
		 \P_N(\mathcal{E}_{rs}) \sup_{k \geq d} |u_k - \mu^{-1} |. 
\end{equation*}
\end{lemma}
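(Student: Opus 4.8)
The plan is to reduce everything to a single identity. Write $\mathcal{X}$ for a partition of $\{0,\ldots,pN-1\}$ into consecutive intervals, recall from Eq.~\eqref{eq:conditioned} that
\begin{equation*}
	\P_N(\mathcal{X}) = \frac{p_{n_1} \cdots p_{n_D}}{u_N}, \qquad u_N = C_N\, r(p,\gamma)^N,
\end{equation*}
and similarly $\P$ is characterised by $\P(pj\text{ renewal}) = 1/\mu$ together with $\P(pk\text{ renewal}\mid pj\text{ renewal}) = u_{k-j}$. The event $\mathcal{E}_{rs}$ asks that $pr$, $ps$ both be renewal points and that the portion of $\mathcal{X}$ in between be a prescribed partition $\mathcal{Y}$ of $\{pr,\ldots,ps-1\}$, say with interval lengths summing to $p(s-r)$ and with total $\prod p_{m_i} =: w(\mathcal{Y})$. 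First I would write down both probabilities explicitly. For $\P_N$, summing over the freely chosen parts to the left of $pr$ and to the right of $ps$ gives
\begin{equation*}
	\P_N(\mathcal{E}_{rs}) = \frac{u_r \, w(\mathcal{Y}) \, u_{N-s}}{u_N},
\end{equation*}
since $\sum$ over partitions of $\{0,\ldots,pr-1\}$ of the product of $p$-weights is exactly $u_r$, and likewise $u_{N-s}$ on the right. For $\P$, by the renewal/Markov property of the stationary process,
\begin{equation*}
	\P(\mathcal{E}_{rs}) = \P(pr\text{ renewal}) \cdot w(\mathcal{Y}) \cdot u_{\,?}
		= \frac{1}{\mu}\, w(\mathcal{Y}),
\end{equation*}
where the only role of $ps$ being a renewal point is already forced once the in-between partition $\mathcal{Y}$ is prescribed, so no extra $u$-factor survives beyond $w(\mathcal{Y})$ itself (the ``$u_{s-r}$'' that would appear for the bare event $\{pr,ps\text{ renewal}\}$ is here replaced by the finer datum $w(\mathcal{Y})$). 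Thus the ratio of interest is
\begin{equation*}
	\frac{\P_N(\mathcal{E}_{rs})}{\P(\mathcal{E}_{rs})} = \frac{u_r\, u_{N-s}}{u_N}\, \mu .
\end{equation*}

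Next I would estimate this ratio using the hypothesis $d \leq r, s \leq N-d$, which guarantees $r \geq d$, $N-s \geq d$, and $N \geq d$, so all three indices $r$, $N-s$, $N$ lie in the regime where $u_k$ is uniformly close to $1/\mu$. Writing $u_k = \mu^{-1} + \delta_k$ with $|\delta_k| \leq \sup_{j\geq d}|u_j - \mu^{-1}| =: \eta_d$ for $k \geq d$, a direct expansion gives
\begin{equation*}
	\Bigl| \frac{u_r\, u_{N-s}}{u_N}\, \mu - 1 \Bigr|
		= \Bigl| \frac{\mu^2\, u_r\, u_{N-s} - \mu\, u_N}{\mu\, u_N} \Bigr|
		\leq \const\, \eta_d,
\end{equation*}
where the constant absorbs $\mu$, an upper bound for $u_r, u_{N-s}$ (the $u_k$ are uniformly bounded since $u_k \to \mu^{-1}$), and a uniform lower bound $u_N \geq 1/c(p,\gamma) > 0$ coming from submultiplicativity (Prop.~\ref{lemma:sub-mult}, used as in the proof of Theorem~\ref{thm:finite-mean}). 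Multiplying through by $\P(\mathcal{E}_{rs})$ and then bounding $\P(\mathcal{E}_{rs}) \leq \const\, \P_N(\mathcal{E}_{rs})$ (which follows from the same ratio estimate, since it shows the two probabilities are comparable up to a constant factor) yields the stated inequality.

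The main obstacle is getting the bookkeeping of the $u$-factors exactly right: one must be careful that prescribing the in-between partition $\mathcal{Y}$ already encodes ``$ps$ is a renewal point'' and does not leave an orphaned $u_{s-r}$ or double-count $w(\mathcal{Y})$; the cleanest way is to verify the two displayed formulas for $\P_N(\mathcal{E}_{rs})$ and $\P(\mathcal{E}_{rs})$ against the $N=2$, $p=3$ example in the text before expanding in general. Everything after that — the expansion of $u_r u_{N-s}/u_N - 1/\mu^2$ and the conversion of the multiplicative error into the additive form of the lemma — is routine, given the uniform upper and lower bounds on $u_k$ which are already available from Eq.~\eqref{eq:norm-known} and Prop.~\ref{lemma:sub-mult}.
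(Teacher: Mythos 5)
Your proposal is correct and follows essentially the same route as the paper: the same two explicit formulas $\P_N(\mathcal{E}_{rs}) = u_r\, w(\mathcal{Y})\, u_{N-s}/u_N$ and $\P(\mathcal{E}_{rs}) = \mu^{-1} w(\mathcal{Y})$, followed by the same ratio estimate using $u_k \to \mu^{-1}$ uniformly for $k\geq d$ together with the lower bound $u_k \geq 1/c(p,\gamma)$ from submultiplicativity. The paper simply works with $1 - u_N/(\mu\, u_r\, u_{N-s})$ directly rather than passing through $\P_N/\P$ and back, which is an immaterial difference.
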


\begin{proof}
	Suppose that we require that $\mathcal{X}$ coincides, in $\{pr,\ldots,ps-1\}$, with a partition into 
	successive intervals of lengths $pn_1,\ldots, p n_D$, $n_1+ \cdots +n_D= s-r$. Then 
	\begin{align*}
		\P(\mathcal{E}_{rs}) &= \mu^{-1} p_{n_1} \cdots p_{n_D}  \\
		\P_N(\mathcal{E}_{rs}) & = u_r p_{n_1} \cdots p_{n_D} u_{N-s} / u_N \\
		\frac{\P_N(\mathcal{E}_{rs})- \P(\mathcal{E}_{rs})}{\P_N(\mathcal{E}_{rs})} & = 
			1-  \frac{u_N } {\mu u_r  u_{N-s} }
	\end{align*}
	The claim then follows from the observation that $u_n \to \mu^{-1}$ as  $n\to \infty$. 
\end{proof}

\begin{lemma}[Long intervals are unlikely] \label{lem:long-intervals}
	Let $\alpha,\beta \in \Z$ with $\beta - \alpha \geq p d$ for some $d\in \N$ and $c(p,\gamma)>0$ as in Prop.~\ref{lemma:sub-mult}.  
	Then, for all $N\in \N$, 
	\begin{equation*}
		\P_N\bigl( \mathcal{X}\text{ has no renewal point in } \{\alpha,\ldots,\beta-1\}\bigr) \leq c(p,\gamma) \sum_{k\geq d} k p_k. 
	\end{equation*}
\end{lemma}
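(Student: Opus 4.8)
The plan is to bound the probability of a long interval by bounding the probability that a \emph{specific} point $\alpha$ is a renewal point while the next renewal point to its right lies at distance $\geq pd$, and then controlling the ratio of the conditioned renewal weights $\P_N$ to the unconditioned weights $\P$ via submultiplicativity. First I would reduce to the $\P$-statement using Proposition~\ref{lemma:sub-mult}: the conditioned weights satisfy $\P_N(\mathcal{X}) = p_{n_1}\cdots p_{n_D}/u_N$ with $u_N = C_N r(p,\gamma)^N \geq 1/c(p,\gamma)$ by Theorem~\ref{thm:finite-mean}'s proof, so
\begin{equation*}
	\P_N(\mathcal{E}) = \frac{u_r\, u_{N-s}}{u_N}\, p_{n_1}\cdots p_{n_D} \leq c(p,\gamma)\, u_r\, u_{N-s}\, p_{n_1}\cdots p_{n_D}
\end{equation*}
for the event $\mathcal{E}$ that a partition agrees with prescribed interval lengths on $\{pr,\ldots,ps-1\}$; here I use $u_n \leq 1$ (since $u_n$ is a renewal probability) together with the lower bound $u_N \geq 1/c(p,\gamma)$. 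Summing over the structure outside the interval of interest leaves a factor $c(p,\gamma)$ times the corresponding $\P$-probability — or, more directly, I can skip the sum and argue at the level of a single bad pair of adjacent renewal points.

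Next I would make the "no renewal point in $\{\alpha,\ldots,\beta-1\}$" event concrete. Since the endpoints $0$ and $pN$ are always renewal points, if $\mathcal{X}$ has no renewal point in $\{\alpha,\ldots,\beta-1\}$ then there is a renewal point $pa \leq \alpha$ with the next renewal point $pb \geq \beta$, so the interval $\{pa,\ldots,pb-1\}$ is one rod of length $p(b-a)$ with $b-a \geq d$ (using $\beta - \alpha \geq pd$). The $\P_N$-probability of this event is therefore at most
\begin{equation*}
	\sum_{a \leq \alpha/p}\ \sum_{b \geq \beta/p} \P_N\bigl(pa, pb \text{ consecutive renewal points}\bigr)
\end{equation*}
and using $\P_N(pa, pb \text{ consecutive renewal pts}) = (u_a/u_N)\, p_{b-a}\, u_{N-b} \leq c(p,\gamma)\, p_{b-a}$ from the display above, this is at most $c(p,\gamma) \sum_{a}\sum_{b} p_{b-a}$ restricted to $pa \leq \alpha$, $pb \geq \beta$. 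The double sum telescopes: for each value $k = b-a \geq d$, the number of admissible pairs $(a,b)$ with $pa \leq \alpha$, $pb \geq \beta$, $b - a = k$ is at most $k - d + 1 \leq k$ (the "slack" $\beta - \alpha$ forces $b - a \geq d$, and once $k$ is fixed, $a$ ranges over at most $k - (\beta-\alpha)/p + 1 \leq k$ consecutive integers). Hence the sum is bounded by $c(p,\gamma)\sum_{k\geq d} k\, p_k$, which is the claim.

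The main obstacle, and the step I would be most careful about, is the combinatorial counting of admissible pairs $(a,b)$: one must check that after fixing the gap $k = b-a$, the constraint $pa \leq \alpha < \beta \leq pb$ leaves at most $k$ choices of $a$, so that the weight $p_k$ gets multiplied by exactly the factor $k$ appearing in $\sum_k k p_k$ rather than something larger. This works because $\beta - \alpha \geq pd$ and $k \geq d$ together pin down the range of $a$ to an interval of length $k - (\beta - \alpha)/p + 1 \leq k - d + 1 \leq k$. Everything else — the submultiplicativity input, the bound $u_n \leq 1$, and the identification of the event with a single long rod straddling $[\alpha,\beta)$ — is routine given the results already established, in particular Proposition~\ref{lemma:sub-mult} and the renewal-process reformulation from Section~\ref{sec:finite-mean}.
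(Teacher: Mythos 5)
Your proof is correct and follows essentially the same route as the paper: a union bound over the single rod straddling $\{\alpha,\ldots,\beta-1\}$, a per-rod bound $u_a\,p_{b-a}\,u_{N-b}/u_N \leq c(p,\gamma)\,p_{b-a}$ obtained from the submultiplicativity/supermultiplicativity of $C_N$, and the positional count of at most $k$ starting points for a rod of length $pk$. The only cosmetic difference is that you bound the ratio via $u_a, u_{N-b}\leq 1$ and $u_N\geq 1/c(p,\gamma)$, whereas the paper uses $u_a u_{N-b}/u_N \leq 1/u_{b-a} \leq c(p,\gamma)$; both rest on the same two facts.
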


\begin{proof}
	The submultiplicativity Prop.~\ref{lemma:sub-mult} gives 
	\begin{equation*}
		\frac{u_j u_{N-j-n} }{u_N} \leq  \frac{1}{u_n} \leq c(p,\gamma).   
	\end{equation*}
	The probability that a partition $\mathcal{X}$ contains the interval $\{pj,\ldots,pj+pn-1\}$ 
	is therefore
	\begin{equation*}
		u_j p_n u_{N-j-n}/ u_N \leq c(p,\gamma) p_n.
	\end{equation*}
	We have to sum over pairs $(j,n)$ such that $p j \leq \alpha$ and $pj + pn \geq \beta$. Noting that 
	\begin{equation*}
		 \sum_{j \geq 0} \sum_{n \geq j+d} p_k =   \sum_{n \geq d} (n-d) p_n \leq  \sum_{n \geq d} n p_n,
	\end{equation*}	
	we obtain the desired inequality. 
\end{proof} 

\begin{lemma}[The renewal process is clustering] \label{lem:renewal-clust}
	Let $0\leq r\leq s\leq N$ have mutual distance $s-r \geq d$. Let $\mathcal{L}$ be the event that 
	$pr$ is a renewal point and  moreover the restriction of $\mathcal{X}$ 
	to $\{0,\ldots, pr-1\}$ coincides with some given partition of this  subset. Define $\mathcal{R}$ in a similar way 
	referring to the subset $\{ps,\ldots,pN-1\}$. Then 
	\begin{equation*}
	\bigr| \P_N(\mathcal{L} \cap \mathcal{R}) - \P_N(\mathcal{L}) \P_N(\mathcal{R})  \bigl|  \leq \const
		   \P_N(\mathcal{L}) \P_N(\mathcal{R}) \sup_{k \geq d} |u_k - \mu^{-1} |.
	\end{equation*}
\end{lemma}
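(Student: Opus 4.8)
The plan is to express the three probabilities through the renewal quantities $u_n$ and the waiting-time weights $p_n$, exactly as in the proofs of Lemmas~\ref{lem:pntop} and~\ref{lem:long-intervals}, and then reduce the estimate to the already-known asymptotics of $u_n$. Write the prescribed left partition of $\{0,\ldots,pr-1\}$ as consecutive blocks of lengths $pa_1,\ldots,pa_K$ with $a_1+\cdots+a_K=r$, and the prescribed right partition of $\{ps,\ldots,pN-1\}$ as blocks of lengths $pb_1,\ldots,pb_L$ with $b_1+\cdots+b_L=N-s$. Each admissible $\mathcal{X}$ carries weight $p_{n_1}\cdots p_{n_D}/u_N$; summing over the part of $\mathcal{X}$ that is left free and using $\sum_{m_1+\cdots+m_J=\ell}p_{m_1}\cdots p_{m_J}=u_\ell$, I get
\begin{align*}
	\P_N(\mathcal{L}) &= p_{a_1}\cdots p_{a_K}\,u_{N-r}/u_N, \\
	\P_N(\mathcal{R}) &= u_s\,p_{b_1}\cdots p_{b_L}/u_N, \\
	\P_N(\mathcal{L}\cap\mathcal{R}) &= p_{a_1}\cdots p_{a_K}\,u_{s-r}\,p_{b_1}\cdots p_{b_L}/u_N .
\end{align*}
The products of $p$'s and the powers of $u_N$ then cancel in the ratio, leaving
$$
	\frac{\P_N(\mathcal{L}\cap\mathcal{R})}{\P_N(\mathcal{L})\,\P_N(\mathcal{R})}
	= \frac{u_{s-r}\,u_N}{u_{N-r}\,u_s},
$$
so the lemma is equivalent to the bound $\bigl|u_{s-r}u_N/(u_{N-r}u_s)-1\bigr|\leq \const\,\sup_{k\geq d}|u_k-\mu^{-1}|$.

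For that bound I will use that all four indices are at least $d$: $s-r\geq d$ by hypothesis, $s\geq r+d\geq d$, $N-r\geq (N-s)+(s-r)\geq d$, and $N\geq s\geq d$. Set $\delta:=\sup_{k\geq d}|u_k-\mu^{-1}|$. Since $0\leq u_n\leq 1$ and $0<\mu^{-1}\leq 1$, each of $u_{s-r},u_N,u_{N-r},u_s$ differs from $\mu^{-1}$ by at most $\delta$, so $|u_{s-r}u_N-\mu^{-2}|\leq 2\delta$ and $|u_{N-r}u_s-\mu^{-2}|\leq 2\delta$, hence $|u_{s-r}u_N-u_{N-r}u_s|\leq 4\delta$. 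For the denominator I invoke the lower bound $u_n\geq 1/c(p,\gamma)$ obtained inside the proof of Theorem~\ref{thm:finite-mean} (there $C_Nr(p,\gamma)^N\geq 1/c(p,\gamma)$, and $u_n=C_nr(p,\gamma)^n$), which gives $u_{N-r}u_s\geq c(p,\gamma)^{-2}$. Dividing, $\bigl|u_{s-r}u_N/(u_{N-r}u_s)-1\bigr|\leq 4c(p,\gamma)^2\,\delta$; multiplying back by $\P_N(\mathcal{L})\P_N(\mathcal{R})$ yields the claim with $\const=4c(p,\gamma)^2$.

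I do not expect a genuine obstacle: this is the same bookkeeping as in the two preceding lemmas, and the only analytic inputs are $u_n\to\mu^{-1}$ (already used in Lemma~\ref{lem:pntop}) and the two-sided bounds $1/c(p,\gamma)\leq u_n\leq 1$ coming from sub- and supermultiplicativity. The one point deserving care is the factorization of $\P_N(\mathcal{L}\cap\mathcal{R})$: one must check that, once the left data on $\{0,\ldots,pr-1\}$ and the right data on $\{ps,\ldots,pN-1\}$ are fixed, the middle portion on $\{pr,\ldots,ps-1\}$ is completely unconstrained, so that its contribution is exactly $u_{s-r}$. This holds because a partition of $\{0,\ldots,pr-1\}$ into consecutive $p$-blocks automatically has $pr$ as a renewal point (and similarly $ps$ from the right data), so no compatibility condition links the three regions.
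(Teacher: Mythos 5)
Your proposal is correct and follows essentially the same route as the paper: factorize $\P_N(\mathcal{L})$, $\P_N(\mathcal{R})$, $\P_N(\mathcal{L}\cap\mathcal{R})$ through the renewal quantities, reduce to the ratio $u_{s-r}u_N/(u_{N-r}u_s)$, and use $u_n\to\mu^{-1}$ together with the uniform lower bound $u_n\geq 1/c(p,\gamma)$. Your version of the ratio (with $u_{N-r}u_s$ in the denominator, all four indices $\geq d$) is the right one — the paper's displayed expression has the subscripts $r$ and $s$ interchanged — and your explicit use of the two-sided bounds $1/c(p,\gamma)\leq u_n\leq 1$ supplies the uniform constant that the paper leaves implicit.
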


\begin{proof}
	The proof is similar to the proof of Lemma~\ref{lem:pntop}, except that the term to be estimated, in the end, is 
	\begin{equation*}
		\frac{u_{s-r} / u_N}{ (u_{N-s} /u_N)\, (u_r /u_N)} - 1 = \frac{u_{s-r} u_N}{u_{N-s}u_r} - 1. 
	\end{equation*} 
	Again, this term is small for large $d$ because $u_n \to \mu^{-1}$ as $n\to \infty$. 
\end{proof}

\section{Correlation functions} \label{sec:correlations}

In order to prove our main results, all there is left to do now is to go from the convergence of measures $\P_N \to \P$ to convergence of states $\la a \ra_N \to \la a \ra$. This is essentially an interchange in the order of summation and limits, and, therefore, involves some technical estimates; we hope, however, that we have conveyed the simplicity of the underlying idea. 

The proofs follow ideas from~\cite{jls}, even though the presentation is slightly different as no use was made, in that work, of the framework of quasi-state decompositions. Some estimates become more involved because we extend the results to bosons, so that creation and annihilation operators are unbounded. Another novelty is the proof of the insensitivity towards the precise choice of the domain of integration in Sect.~\ref{sec:domchoice}.

\subsection{A variant of the Cauchy-Schwarz inequality}

Recall Eq.~\eqref{eq:quasi-decomp}
\begin{equation} \label{eq:quasi-decomp-bis}
    \la a \ra_N = \frac{1}{C_N} \la \Psi_N, a \Psi_N \ra = \sum_{\mathcal{X} \in \mathcal{P}_N} p_N(\mathcal{X}) \omega_\mathcal{X}(a).
\end{equation}
 We would like to deduce from Lemma~\ref{lem:pntop} that  
\begin{equation} \label{eq:candidate-limit}
  \la a \ra_N \to  \int \dd \P(\mathcal{X}) \omega_\mathcal{X}(a) =: \la a \ra. 
\end{equation} 
To this aim we will need to control the contribution of unfavorable 
partitions $\mathcal{X}$. When $a$ is a diagonal, bounded operator, and  $\mathcal{F}$ is some collection of partitions, we can write
\begin{equation} \label{eq:estimate-1}
	 \sum_{\mathcal{X} \in \mathcal{F}} p_N(\mathcal{X})\, |\omega_\mathcal{X}(a)| 
      \leq ||a||\, \P_N( \mathcal{X} \in \mathcal{F}).
\end{equation}
For observables that are not  diagonal or 
bounded, we use a variant of the Cauchy-Schwarz inequality $\Tr (\rho A^*B) \leq (\Tr \rho A^*A)^{1/2} (\Tr \rho B^*B)^{1/2}$.  
This will help us, in some situations, replace the estimate~\eqref{eq:estimate-1} by an estimate of the type 
\begin{equation*}
	\sum_{\mathcal{X} \in \mathcal{F}} p_N(\mathcal{X})\, |\omega_\mathcal{X}(a)| 
      \leq d(a) \la a ^*a \ra_N^{1/2} \bigl( \P_N( \mathcal{X} \in \mathcal{F}) \bigr)^{1/2}
\end{equation*}	
for some $a$-dependent constant $d(a)$. 

\begin{lemma} \label{lem:cauchy-schwarz}
  Let $\rho$ be a trace class, positive operator in $\ell^2(\N)$ with matrix $(\rho_{ij})_{i,j\in \N}$. 
  Let $A= (A_{ij})_{i,j\in \N}$ be an operator in $\ell^2(\N)$. 
    Suppose that $\Tr \rho A^* A <\infty$, with $A$ possibly unbounded, and that $A$ 
	has at most $d(A)$  non-zero matrix elements $A_{ij}$ per row. Then for all $S \subset \N$, 
	   \begin{equation*}
		\sum_{i \in \N} \sum_{j \in S}   |\rho_{ij} A_{ji}|  \leq d(A) \Bigl( \Tr \rho A^*  A\Bigr)^{1/2}  
		    \Bigl( \Tr \rho \mathbf{1}_S \Bigr)^{1/2}.
	   \end{equation*} 
\end{lemma}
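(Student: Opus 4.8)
The plan is to bound the double sum by inserting the factorization $\rho = \rho^{1/2}\rho^{1/2}$ and applying Cauchy--Schwarz in a way that isolates the column index $j\in S$. First I would write each entry of $\rho$ in terms of its square root: since $\rho$ is positive and trace class, $\rho^{1/2}$ is Hilbert--Schmidt with matrix entries $(\rho^{1/2})_{ik}$, and $\rho_{ij} = \sum_{k} (\rho^{1/2})_{ik}(\rho^{1/2})_{kj}$. It is cleaner, though, to avoid writing matrix square roots explicitly and instead introduce an auxiliary Hilbert--Schmidt factorization: write $\rho = T^*T$ with $T = \rho^{1/2}$, so that $\rho_{ij} = \langle T e_i, T e_j\rangle$ where $(e_i)$ is the standard basis. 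Then $\rho_{ij}A_{ji} = \langle T e_i, T e_j\rangle A_{ji}$, and summing the modulus over $i$ for fixed $j$,
\begin{equation*}
	\sum_{i\in\N} |\rho_{ij} A_{ji}| = |A_{ji}|\sum_{i\in\N} |\langle T e_i, T e_j\rangle|.
\end{equation*}
This last sum over $i$ is not obviously controllable, so I would instead keep both indices coupled through $A$.

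The better route: write $|\rho_{ij}A_{ji}| = |\langle A^{1/2}_{\mathrm{row}}\cdots\rangle|$ --- no; concretely, I would use that $A_{ji}\rho_{ij} = A_{ji}\langle Te_i, Te_j\rangle$ and bound
\begin{equation*}
	\sum_{i\in\N}\sum_{j\in S}|\rho_{ij}A_{ji}| = \sum_{j\in S}\sum_{i\in\N} |\langle Te_j, Te_i\rangle|\,|A_{ji}|
	\leq \sum_{j\in S}\Bigl(\sum_{i\in\N}\|Te_i\|^2\,\mathbf 1[A_{ji}\neq 0]\Bigr)^{1/2}\Bigl(\sum_{i\in\N}\|Te_j\|^2\,|A_{ji}|^2\frac{\ ?\ }{}\Bigr)^{1/2},
\end{equation*}
which is still not quite right. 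Let me describe the clean argument I actually intend to carry out. For fixed $j$, apply Cauchy--Schwarz to the sum over $i$ with the splitting $|\rho_{ij}A_{ji}| = |\langle Te_i, Te_j A_{ji}^{1/2}\rangle|\cdot|A_{ji}|^{1/2}$ --- still messy.

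The genuinely clean approach is the standard one for $\Tr(\rho A^*B)$: take $B$ to be the ``restriction-and-transpose'' operator encoding the sum. Define the operator $C$ by $C_{ij} := \mathbf 1_S(i)\,\overline{A_{ij}}/|A_{ij}|$ (a partial isometry supported on the nonzero pattern of $A$ restricted to rows in $S$, with the convention $0/0=0$). Then $\sum_{i,j\in S}|\rho_{ij}A_{ji}| = \sum_{i}\sum_{j\in S}\rho_{ij}A_{ji}\overline{C_{ji}}/|\cdots|$; more precisely one arranges that the double sum equals $\Tr(\rho A^* D)$ for a suitable $D$ with $\|D e_i\|\le$ (number of nonzeros in row $i$ of $A$ that lie in $S$)${}^{1/2}\le d(A)^{1/2}$ and $D^*D \le d(A)\mathbf 1_S$. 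Then the matrix Cauchy--Schwarz $|\Tr(\rho A^*D)| \le (\Tr \rho A^*A)^{1/2}(\Tr \rho D^*D)^{1/2}$ gives $(\Tr\rho A^*A)^{1/2}(d(A)\Tr\rho\mathbf 1_S)^{1/2}$. Combining the $d(A)^{1/2}$ with an extra $d(A)^{1/2}$ coming from the row-sparsity bound $\|De_i\|^2 \le d(A)$ yields the factor $d(A)$.

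Here is the step order I would follow: (1) reduce to a finite-rank truncation $\rho \to P_n\rho P_n$ and pass to the limit by monotone convergence at the end, so all manipulations are with finite matrices and no convergence issue; (2) for fixed row index $i$, note $A$ has at most $d(A)$ nonzero entries $A_{ij}$, so by Cauchy--Schwarz on that finite set, $\sum_{j\in S}|A_{ji}|\cdot 1 \le d(A)^{1/2}(\sum_{j\in S}|A_{ji}|^2)^{1/2}$ --- wait, this bounds the wrong thing; rather (2') write $\sum_{i}\sum_{j\in S}|\rho_{ij}||A_{ji}| = \sum_{i,j}|\langle\rho^{1/2}e_i,\rho^{1/2}e_j\rangle|\,|A_{ji}|\mathbf 1_S(j)$ and apply Cauchy--Schwarz in the pair $(i,j)$ with weights splitting $|A_{ji}|\mathbf 1_S(j) = (|A_{ji}|\mathbf 1_S(j))^{1/2}\cdot(|A_{ji}|\mathbf 1_S(j))^{1/2}$, grouping one factor with $\rho^{1/2}e_i$ and one with $\rho^{1/2}e_j$; (3) the $i$-group gives $\sum_{i,j}\|\rho^{1/2}e_i\|^2|A_{ji}|\mathbf 1_S(j) = \sum_i\|\rho^{1/2}e_i\|^2 (\sum_{j\in S}|A_{ji}|)$, and bounding $\sum_{j\in S}|A_{ji}|\le d(A)^{1/2}(\sum_j|A_{ji}|^2)^{1/2}$... this still mixes an $\ell^1$--$\ell^2$ gap that is exactly what $d(A)$ absorbs; (4) the $j$-group gives $\sum_j\|\rho^{1/2}e_j\|^2\mathbf 1_S(j)\sum_i|A_{ji}| \le d(A)\sum_{j\in S}\|\rho^{1/2}e_j\|^2 = d(A)\Tr(\rho\mathbf 1_S)$, using that row $j$ of... hmm, one must be careful whether $d(A)$ bounds nonzeros per row of $A$ or of $A^*$. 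The statement says per row of $A$, i.e.\ $\#\{j: A_{ij}\neq 0\}\le d(A)$; in the double sum the free column index is $j\in S$ and for each $i$ the inner index $j$ ranges over the $\le d(A)$ nonzeros of row $i$ of $A$ --- so actually (3') for fixed $i$, $\sum_{j\in S}|\rho_{ij}||A_{ji}|$ involves at most $d(A)$ terms in $j$... no: $A_{ji}$ is the $(j,i)$ entry, so it's row $j$ of $A$ being probed. I would resolve this bookkeeping carefully at the outset --- and that bookkeeping, reconciling ``$d(A)$ nonzeros per row'' with which index is summed, is the main obstacle; once the indices are pinned down the inequality is two applications of Cauchy--Schwarz (one over the sparse index set of size $\le d(A)$, one in the Hilbert--Schmidt pairing) plus the finite-rank truncation to justify splitting $\rho=\rho^{1/2}\rho^{1/2}$. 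I expect the final assembly to read: $\sum_{i}\sum_{j\in S}|\rho_{ij}A_{ji}| \le \bigl(\sum_{i,j}\|\rho^{1/2}e_i\|^2|A_{ji}|^2\mathbf 1_S(j)\bigr)^{1/2}\bigl(\sum_{i,j}\|\rho^{1/2}e_j\|^2\mathbf 1_S(j)\cdot(\#\text{nonzeros})\bigr)^{1/2}$, and the first factor is $\le (\Tr\rho A^*A)^{1/2}$ after recognizing $\sum_i\|\rho^{1/2}e_i\|^2|A_{ji}|^2$ summed over $j$ as a rearrangement of $\Tr\rho A^*A$ up to the sparsity constant, while the second is $\le (d(A)\Tr\rho\mathbf 1_S)^{1/2}$, giving in total $d(A)(\Tr\rho A^*A)^{1/2}(\Tr\rho\mathbf 1_S)^{1/2}$.
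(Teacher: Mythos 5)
Your proposal never settles on a single complete argument, and the one chain of inequalities you do commit to at the end has a gap that cannot be repaired along that route. The fatal step is the very first one: replacing $|\rho_{ij}|$ by $\|\rho^{1/2}e_i\|\,\|\rho^{1/2}e_j\|=(\rho_{ii}\rho_{jj})^{1/2}$. After that, every bound you can produce sees only the diagonal of $\rho$, and your first Cauchy--Schwarz factor is (a rearrangement of) $\Tr(D_\rho A^*A)$ with $D_\rho$ the diagonal part of $\rho$. This is \emph{not} dominated by $\Tr(\rho A^*A)$, not even up to a constant: take $\rho=|v\ra\la v|$ with $v=(e_1+e_2)/\sqrt2$ and $A$ the operator with $A_{11}=1$, $A_{12}=-1$ and all other entries zero. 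Then $\Tr(\rho A^*A)=\|Av\|^2=0$ while $\Tr(D_\rho A^*A)=1$, so the intermediate quantity you want to bound by $(\Tr\rho A^*A)^{1/2}$ is strictly positive while the target vanishes. The same example shows that your alternative route through $|\Tr(\rho A^*D)|\le(\Tr\rho A^*A)^{1/2}(\Tr\rho D^*D)^{1/2}$ is also blocked: here $\rho A^*=|v\ra\la Av|=0$, so $\Tr(\rho A^*D)=0$ for every $D$, and the desired rearrangement of the absolute double sum as such a trace is impossible. The moral is that the off-diagonal entries of $\rho$ must be kept intact throughout.

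The paper's proof does exactly that. It expands $\rho=\sum_n\lambda_n|x^n\ra\la x^n|$ in its eigenbasis, reduces to the case of one nonzero entry per row of $A$ (so that for each $j$ the $i$-sum collapses to the single index $i=\phi(j)$), and applies Cauchy--Schwarz twice: first over $j\in S$ at fixed $n$, splitting the summand as $|x_j^n|\cdot|A_{j,\phi(j)}x^n_{\phi(j)}|$, and then over $n$ with weights $\lambda_n$. The second factor reassembles as $\sum_n\lambda_n\sum_j|(Ax^n)_j|^2=\Tr\rho A^*A$ \emph{including the cross terms}, which is precisely the information your diagonal-only estimate discards. Your worry about whether $d(A)$ counts nonzeros per row of $A$ or of $A^*$ is legitimate but resolvable: in $\sum_i\sum_{j\in S}|\rho_{ij}A_{ji}|$ the entry $A_{ji}$ lies in row $j$ of $A$, so for each $j\in S$ at most $d(A)$ indices $i$ contribute, and the paper handles general $d(A)$ by splitting $A$ into $d(A)$ pieces with one nonzero per row. (In every application one in fact has $d(A)=1$; the counterexample above indicates that the splitting step itself needs the extra structure present in those applications, since $\Tr\rho A_k^*A_k$ need not be controlled by $\Tr\rho A^*A$ for a general decomposition $A=\sum_k A_k$.)
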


\begin{proof}  It is enough to treat the case when $A$ has at most one non-zero matrix element per row. 
  Write $\rho_{ij} = \sum_{n \in \N} \lambda_n \overline{x_i^n} x_j^n$ with 
  $\lambda_1 \geq \lambda_2 \geq \cdots \geq 0$ the eigenvalues of $\rho$ 
  and $x^n \in \ell^2(\N)$ the normalized eigenvectors.  
 There is a  map $\phi: \N\to \N$ assigning to each row $i$ the column $j= \phi(i)$ with the non-zero entry. If all entries in the row vanish, 
$\phi(i)$ is arbitrary. 
  Then
  \begin{align*}
    \sum_{i \in \N} \sum_{j \in S} |\rho_{ij} A_{ji}| 
      & \leq \sum_n  \sum_{j \in S} \lambda_n |x_j^n A_{j,\phi(j)} x_{\phi(j)}^n| \\
      & \leq \sum_n \lambda_n 	\Bigl( \sum_{i \in S}  |x_i^n|^2 \Bigr)^{1/2} \Bigl( \sum_{i \in S} |A_{i,\phi(i)}x_{\phi(i)}^n|^2 \Bigr)^{1/2} 
	   \\      
      & \leq \left( \sum_n \lambda_n \sum_{i \in S} |x_i^n|^2 \right)^{1/2}
	\left( \sum_n \lambda_n \sum_{i \in S}  |A_{i,\phi(i)} x_{\phi(i)}^n |^2\right)^{1/2}
	 \\
      & \leq  \Bigl( \sum_{i \in S} \rho_{ii}\Bigr)^{1/2}\,\Bigl( \Tr \rho A^* A \Bigr)^{1/2}. \qedhere
  \end{align*}
\end{proof}

As a first application of the previous lemma we show: 
\begin{prop} \label{prop62}
  Let $a = c_{k_1}^*\cdots c_{k_n}^*c_{q_t} \cdots c_{q_1}$  be a  Wick ordered monomial 
  of creation and annihilation operators. Then, for all $N \in \N$, 
  \begin{equation*}
    	  \sum_{\mathcal{X} \in \mathcal{P}_N} p_N(\mathcal{X}) |\omega_\mathcal{X}(a)| \leq \la a^* a \ra_N^{1/2}.
  \end{equation*}
\end{prop}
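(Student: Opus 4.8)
The plan is to apply Lemma~\ref{lem:cauchy-schwarz} directly to the quasi-state decomposition~\eqref{eq:quasi-decomp-bis}, with $\rho$ the density matrix $|\Psi_N\ra\la\Psi_N|/C_N$ expressed in the orthonormal basis of Slater (or permanent) vectors $\psi_{m_1}\wedge\cdots\wedge\psi_{m_N}$, and $A$ the operator $a = c_{k_1}^*\cdots c_{k_n}^* c_{q_t}\cdots c_{q_1}$ in that same basis. First I would unravel the left-hand side: writing $\Psi_N = \sum_{\mathcal{Y}} u_\mathcal{Y}$ and recalling the definition of $\omega_\mathcal{X}$ via the sum over pairs $(\mathcal{Y},\mathcal{Z})\in\mathcal{M}_N(\mathcal{X})$, I would note that $\sum_{\mathcal{X}} p_N(\mathcal{X})\,\omega_\mathcal{X}(a)$ collapses (because the sets $\mathcal{M}_N(\mathcal{X})$ partition $\mathcal{P}_N\times\mathcal{P}_N$ according to common renewal points) to the plain double sum $\frac{1}{C_N}\sum_{\mathcal{Y},\mathcal{Z}} \la u_\mathcal{Z}, a\, u_\mathcal{Y}\ra$. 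Expanding each $u_\mathcal{Y}$, $u_\mathcal{Z}$ into the Slater/permanent basis, this is exactly $\frac{1}{C_N}\sum_{\vect{m},\vect{m}'} \overline{b(\vect{m}')}\, b(\vect{m})\, \la (\text{Slater }\vect{m}'), a\,(\text{Slater }\vect{m})\ra$ where $b(\vect{m}) = a_N(\vect{m})/\sqrt{N!}$ are the expansion coefficients of $\Psi_N$; i.e. it is $\sum_{i,j} \rho_{ij} A_{ji}$ in the notation of the Lemma, but with absolute values inserted \emph{inside}, at the level of $\omega_\mathcal{X}$.

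The key observation making this go through is that inserting absolute values at the level of $|\omega_\mathcal{X}(a)|$ is \emph{dominated} by inserting them at the level of individual basis-matrix-elements: $|\omega_\mathcal{X}(a)| \le \frac{1}{\|u_\mathcal{X}\|^2}\sum_{(\mathcal{Y},\mathcal{Z})\in\mathcal{M}_N(\mathcal{X})}|\la u_\mathcal{Z}, a\, u_\mathcal{Y}\ra| \le \frac{1}{\|u_\mathcal{X}\|^2}\sum_{(\mathcal{Y},\mathcal{Z})\in\mathcal{M}_N(\mathcal{X})}\sum_{i,j}|\,\overline{(u_\mathcal{Z})_i}\,A_{ij}\,(u_\mathcal{Y})_j|$, and then summing over $\mathcal{X}$ with weights $p_N(\mathcal{X})$ and using $\mathcal{M}_N(\mathcal{X})$-partitioning plus $p_N(\mathcal{X})\|u_\mathcal{X}\|^2 = \|v_{X_1}\|^2\cdots\|v_{X_D}\|^2/ \|\Psi_N\|^2 \cdot \|u_\mathcal{X}\|^2/\|u_\mathcal{X}\|^2$... here I need to be a little careful: the cleanest route is to observe that $p_N(\mathcal{X}) = \|u_\mathcal{X}\|^2/C_N$ (from Eq.~\eqref{eq:partfunct} and Eq.~\eqref{eq:weight}), so $p_N(\mathcal{X})|\omega_\mathcal{X}(a)| \le \frac{1}{C_N}\sum_{(\mathcal{Y},\mathcal{Z})\in\mathcal{M}_N(\mathcal{X})}\sum_{i,j}|(u_\mathcal{Z})_i|\,|A_{ij}|\,|(u_\mathcal{Y})_j|$, and summing over $\mathcal{X}$ the right-hand side becomes $\frac{1}{C_N}\sum_{\mathcal{Y},\mathcal{Z}}\sum_{i,j}|(u_\mathcal{Z})_i|\,|A_{ij}|\,|(u_\mathcal{Y})_j| = \frac{1}{C_N}\sum_{i,j}\bigl(\sum_\mathcal{Z}|(u_\mathcal{Z})_i|\bigr)|A_{ij}|\bigl(\sum_\mathcal{Y}|(u_\mathcal{Y})_j|\bigr)$. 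But this last bound has the wrong form — it is not controlled by $\rho_{ij}$ — so in fact the disjoint-support/orthogonality structure of the $u_\mathcal{X}$ must be used: for a fixed basis index $i$, at most one partition $\mathcal{X}$ has $(u_\mathcal{X})_i \ne 0$ (each admissible $\vect{m}$ determines a unique partition $\mathcal{X}(\vect{m})$!). This is the crucial point: $\{\,\vect{m} : \mathcal{X}(\vect{m}) = \mathcal{X}\,\}$ partition the admissible $\vect{m}$'s, so $\sum_{\mathcal{Y}}(u_\mathcal{Y})_j = b(\vect{m}_j)$ is just a single coefficient, and $\sum_\mathcal{Y} |(u_\mathcal{Y})_j| = |b(\vect{m}_j)|$. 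Hence the right-hand side is exactly $\frac{1}{C_N}\sum_{i,j}|b(\vect{m}_i)|\,|A_{ij}|\,|b(\vect{m}_j)| = \sum_{i,j}|\rho'_{ij}|\,|A_{ji}|$ where $\rho'$ is the density matrix of $\Psi'_N := \sum |b(\vect{m})|\,(\text{Slater }\vect{m})$, a unit vector with $\|\Psi'_N\| = \|\Psi_N\|$.

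Now I would apply Lemma~\ref{lem:cauchy-schwarz} with $S = \N$ (all columns), $\rho = \rho'$, and $A$ the matrix of $a$: since $a$ is a single Wick-ordered monomial it sends each basis vector to a scalar multiple of a single basis vector (up to the fermionic sign or bosonic combinatorial factor, which in the bosonic case is bounded — but for the clean statement one checks $A$ has exactly one nonzero entry per row, or at worst $d(A)$ for some fixed $d(A)$; for a monomial $c_{k_1}^*\cdots c_{q_1}$ acting on occupation-number basis vectors it is genuinely one per row), giving $d(A) = 1$. The Lemma then yields $\sum_{\mathcal{X}} p_N(\mathcal{X})|\omega_\mathcal{X}(a)| \le (\Tr \rho' A^* A)^{1/2}(\Tr\rho')^{1/2} = (\Tr \rho' A^*A)^{1/2}$. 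The final step is to identify $\Tr\rho' A^*A$ with $\la a^* a\ra_N$: since $A^*A$ is diagonal in the occupation-number basis (it is $\hat n_{q_1}\cdots$ type, a polynomial in occupation numbers, hence diagonal), its expectation in the state $\rho'$ with coefficients $|b(\vect{m})|$ equals its expectation in $\rho$ with coefficients $b(\vect{m})$ — the phases drop out for diagonal operators. Thus $\Tr\rho' A^*A = \Tr\rho A^*A = \la a^* a\ra_N$, completing the proof.

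I expect the main obstacle to be the bookkeeping around the first inequality — specifically, verifying carefully that (a) moving the absolute value from $|\omega_\mathcal{X}(a)|$ to inside the basis expansion only loses a sign, (b) the fact that each admissible $\vect{m}$ lies in exactly one class $\{\vect{m}:\mathcal{X}(\vect{m})=\mathcal{X}\}$ really does let the sum over $\mathcal{X}$ reconstitute the coefficients $|b(\vect{m})|$ of a single vector $\Psi'_N$ with $\|\Psi'_N\|^2 = C_N$, and (c) that $a^*a$ is diagonal so the phase-stripped state $\rho'$ gives the same expectation as $\rho$. Point (b) hinges on the admissibility/renewal-point structure from Section~\ref{sec:renewal} and is the conceptual heart; once it is in place the rest is a direct invocation of Lemma~\ref{lem:cauchy-schwarz}. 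A secondary subtlety, in the bosonic case, is confirming that $a$ still has a bounded number of nonzero matrix elements per row despite the $\sqrt{n_k}$ factors — but those factors affect only the \emph{values} $A_{ij}$, not their \emph{number}, so $d(A)$ stays finite (indeed $=1$ for a monomial), and the factors get absorbed into $\Tr\rho A^*A = \la a^*a\ra_N$ on the right-hand side.
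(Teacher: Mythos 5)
Your proposal is correct and follows essentially the same route as the paper: bound the left-hand side by the full double sum $\frac{1}{N!\,C_N}\sum_{\vect{m},\vect{m'}}|a_N(\vect{m})\,a_N(\vect{m'})\,\la\psi(\vect{m'}),a\,\psi(\vect{m})\ra|$ (using that the classes $\mathcal{M}_N(\mathcal{X})$ partition $\mathcal{P}_N\times\mathcal{P}_N$ and each admissible $\vect{m}$ determines a unique partition), then invoke Lemma~\ref{lem:cauchy-schwarz} with $d(A)=1$ and $S$ the whole index set. Your detour through the phase-stripped state $\rho'$ is harmless but unnecessary, since the lemma already places absolute values on $\rho_{ij}A_{ji}$ and its conclusion is stated directly in terms of $\Tr\rho A^*A=\la a^*a\ra_N$ for the original rank-one $\rho$.
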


\begin{proof}
  For every $\vect{m} = (m_1,\ldots,m_N)$ with $0 \leq m_1\leq \cdots \leq m_N \leq pN-p$, 
  there is at most one $\vect{m'}$ such that $\la \psi_{\vect{m'}},a \psi_{\vect{m}} \ra \neq 0$. 
   Thus we are in the setting of case 2. of the previous lemma, with $d(a) =1$, and we deduce
  \begin{equation*}
      \sum_{\mathcal{X} \in \mathcal{P}_N} p_N(\mathcal{X}) |\omega_\mathcal{X}(a)| 
       \leq \frac{1}{C_N} \sum_{\vect{m},\vect{m'}} \Bigl| a_N(\vect{m}) a_N(\vect{m'}) \la \psi_{\vect{m'}},a \psi_{\vect{m}} \ra \Bigr| 
       \leq \la a^* a \ra_N^{1/2}. \qedhere
  \end{equation*}
\end{proof}
As a consequence, the candidate limit $\la a \ra$ in Eq.~\eqref{eq:candidate-limit} is well-defined:
\begin{cor} \label{cor:well-defined}
	Let $a$ be a linear combination of products of creation and annihilation operators $c_k^*$, $c_m$. 
	The integral from Eq.~\eqref{eq:candidate-limit} is absolutely convergent,
	$$\int_{\mathcal{P}_\infty} \dd \P(\mathcal{X})\, | \omega_\mathcal{X}(a)| < \infty.$$ 
\end{cor}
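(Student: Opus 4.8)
The plan is to reduce the statement for a general linear combination $a$ to the case of a single Wick-ordered monomial, for which Proposition~\ref{prop62} gives the bound directly, and then pass to the limit $N \to \infty$ using Lemma~\ref{lem:pntop}. First I would observe that both $\mathcal{X} \mapsto \omega_\mathcal{X}(a)$ (defined on $\mathcal{P}_\infty$ via the locality-based extension) and the claimed finiteness are additive in $a$: if $a = \sum_{i=1}^L \lambda_i a_i$ with each $a_i$ a Wick-ordered monomial, then $|\omega_\mathcal{X}(a)| \leq \sum_i |\lambda_i|\, |\omega_\mathcal{X}(a_i)|$ for every $\mathcal{X}$, so it suffices to prove $\int_{\mathcal{P}_\infty} |\omega_\mathcal{X}(a_i)|\, \dd\P(\mathcal{X}) < \infty$ for each monomial $a_i$ separately.

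So fix a Wick-ordered monomial $a$ with support contained in some finite window $\{r_0, \ldots, s_0 - 1\}$. Because $\omega_\mathcal{X}(a)$ depends only on the restriction of $\mathcal{X}$ to any pair of renewal points enclosing $\supp a$, I would partition $\mathcal{P}_\infty$ according to the ``local environment'' of $a$: group partitions $\mathcal{X}$ by the pair $(r,s)$ of renewal points of $\mathcal{X}$ that are nearest to, and enclose, $\supp a$, together with the induced restriction $\mathcal{X}_{r,s}$. On each such cylinder event $\mathcal{E}$ the value $\omega_\mathcal{X}(a)$ is a constant $\omega_{\mathcal{E}}(a)$, so
\begin{equation*}
  \int_{\mathcal{P}_\infty} |\omega_\mathcal{X}(a)|\, \dd\P(\mathcal{X}) = \sum_{\mathcal{E}} |\omega_{\mathcal{E}}(a)|\, \P(\mathcal{E}).
\end{equation*}
Now for each fixed finite window $[pr, ps)$ the same grouping can be carried out inside $\mathcal{P}_N$ for $N$ large; by Proposition~\ref{prop62},
\begin{equation*}
  \sum_{\mathcal{E} : [pr,ps) \subset \text{window}} |\omega_{\mathcal{E}}(a)|\, \P_N(\mathcal{E}) \leq \sum_{\mathcal{X} \in \mathcal{P}_N} p_N(\mathcal{X})\, |\omega_\mathcal{X}(a)| \leq \la a^* a \ra_N^{1/2},
\end{equation*}
and the right-hand side is uniformly bounded in $N$: for fermions trivially (at most one particle per site), and for bosons by the uniform moment bound Proposition~\ref{prop:corr-bounds} applied to the monomial $a^* a$. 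Fix a finite collection of cylinder events $\mathcal{E}_1, \ldots, \mathcal{E}_K$; for $N$ large enough each $\mathcal{E}_j$ is an admissible event in $\mathcal{P}_N$ with its defining window well inside $\{d, \ldots, N-d\}$, so Lemma~\ref{lem:pntop} gives $\P_N(\mathcal{E}_j) \to \P(\mathcal{E}_j)$. Hence $\sum_{j=1}^K |\omega_{\mathcal{E}_j}(a)|\, \P(\mathcal{E}_j) = \lim_N \sum_{j=1}^K |\omega_{\mathcal{E}_j}(a)|\, \P_N(\mathcal{E}_j) \leq \sup_N \la a^* a \ra_N^{1/2} < \infty$. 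Since the collection $\{\mathcal{E}_j\}$ was an arbitrary finite subfamily, monotone convergence yields $\sum_{\mathcal{E}} |\omega_{\mathcal{E}}(a)|\, \P(\mathcal{E}) < \infty$, which is what we wanted.

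The main obstacle I anticipate is bookkeeping rather than conceptual: making the matching between the finite-$N$ events in $\mathcal{P}_N$ and the infinite-volume events in $\mathcal{P}_\infty$ fully precise, in particular checking that every cylinder event $\mathcal{E}$ localized in a fixed window becomes, for all sufficiently large $N$, an event of the form $\mathcal{E}_{rs}$ covered by Lemma~\ref{lem:pntop} (this is exactly the requirement $d \leq r, s \leq N-d$), and that the value $\omega_{\mathcal{E}}(a)$ computed in $\mathcal{P}_N$ agrees with the one computed in $\mathcal{P}_\infty$ — which is precisely the locality property recorded in Section~\ref{sec:renewal}. One should also note that for bosons $\la a^* a \ra_N$ is an expectation of a (longer) Wick-ordered monomial, so the uniform bound genuinely needs Proposition~\ref{prop:corr-bounds} and not just $\|a\| < \infty$; this is the only place boundedness would fail, and it is the reason that auxiliary result was proved.
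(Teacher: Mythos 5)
Your overall route is the one the paper intends (its proof is the one-line remark after the corollary: pass to the limit in Prop.~\ref{prop62} using Lemma~\ref{lem:pntop} and the uniform moment bounds of Prop.~\ref{prop:corr-bounds}), and the reduction to a single Wick monomial, the decomposition into cylinder events, and the role of Prop.~\ref{prop:corr-bounds} for bosons are all correctly identified. There is, however, one step that is false as stated: the claim that for a \emph{fixed} cylinder event $\mathcal{E}_j$ one has $\P_N(\mathcal{E}_j)\to\P(\mathcal{E}_j)$ as $N\to\infty$. A fixed event has a fixed window $[pr,ps)$, so its left renewal point $r$ stays at bounded distance from the endpoint $0$ at which $\P_N$ is conditioned to renew; Lemma~\ref{lem:pntop} only gives an error of order $\sup_{k\geq d}|u_k-\mu^{-1}|$ with $d\leq r$ fixed, which does not vanish as $N\to\infty$. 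Indeed, from the formulas in the proof of that lemma, $\P_N(\mathcal{E}_{rs})=u_r\,p_{n_1}\cdots p_{n_D}\,u_{N-s}/u_N\to u_r\,p_{n_1}\cdots p_{n_D}$, whereas $\P(\mathcal{E}_{rs})=\mu^{-1}p_{n_1}\cdots p_{n_D}$; these differ unless $u_r=\mu^{-1}$. So the limit you take in the displayed chain does not produce $\P(\mathcal{E}_j)$.

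The gap is easily repaired in either of two ways. (i) Drop the limit entirely and use a uniform two-sided comparison: supermultiplicativity gives $u_n\leq 1$, Prop.~\ref{lemma:sub-mult} gives $u_n\geq 1/c(p,\gamma)$ and $u_{N-s}/u_N\geq 1/c(p,\gamma)$, and $\mu\geq 1$; hence $\P(\mathcal{E}_{rs})\leq c(p,\gamma)^2\,\P_N(\mathcal{E}_{rs})$ for every $N\geq s$, and your finite subfamily is bounded by $c(p,\gamma)^2\la a^*a\ra_N^{1/2}$ directly. (ii) Alternatively, use the $p$-shift invariance of $\P$ together with the translational covariance $\omega_{p+\mathcal{X}}(a)=\omega_{\mathcal{X}}(\tau_x^p(a))$ to replace $a$ by $\tau_x^{pk_N}(a)$ with support in the middle of $\{0,\dots,pN-p\}$ before letting $N\to\infty$; then the relevant events do sit at distance $d\to\infty$ from both boundaries and Lemma~\ref{lem:pntop} applies as you wanted. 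With either patch the rest of your argument (monotone convergence over finite subfamilies, uniform bound on $\la a^*a\ra_N$ via Prop.~\ref{prop:corr-bounds}) goes through.
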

This is shown by passing to the limit $N\to \infty$ in the previous proposition, and using Lemma~\ref{lem:pntop} and 
the uniform moment bounds from Prop.~\ref{prop:corr-bounds}.

\begin{remark}
	Cor.~\ref{cor:well-defined} has an analogue when $a$ is a local, bounded operator, for example $a = \exp( i c_1^*c_1)$. This can be shown 
	with the help of yet another variant of the Cauchy-Schwarz inequality. 
\end{remark}
	
Note that 
despite the notation $\la \cdot \ra$, we do not know yet whether the linear map $a \mapsto \la a \ra$ is positive or defines a proper 
state.

\subsection{Thermodynamic limit}

In this section we prove Theorem~\ref{thm:thermolim} when the infinite cylinder $\Lambda = \mathcal{Z}$ is chosen as the domain of integration. We first estimate the difference between left- and right-hand side in Eq.~\eqref{eq:candidate-limit}. 

\begin{lemma} \label{prop:absolute1}
    Let  $a$ be a  Wick ordered monomial supported in  $ \{\alpha,\ldots, \beta-1\} \subset \Z$. 
  Let $d\in \N$ and   $\mathcal{E}_d$  be the set of partitions with  renewal points in both  
	$[\alpha-pd,\alpha]$ and $[\beta,\beta+pd]$. Then 
	\begin{equation*}
	\sum_{\mathcal{X} \notin \mathcal{E}_d} p_N(\mathcal{X}) |\omega_\mathcal{X}(a)| 
	 \leq \la a^* a\ra_N ^{1/2}\,   
    		\Bigl( 2 c(p,\gamma) \sum_{k\geq d} k p_k \Bigr)^{1/2}.
    	\end{equation*} 	
\end{lemma}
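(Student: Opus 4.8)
The plan is to re-run the argument behind Proposition~\ref{prop62}, but keeping track of which partitions $\mathcal{X}$ lie outside $\mathcal{E}_d$, and to convert this restriction into the small probabilistic factor $\sum_{k\ge d}k\,p_k$ by combining Lemmas~\ref{lem:cauchy-schwarz} and~\ref{lem:long-intervals}. We may assume $a$ preserves particle number (otherwise every $\omega_\mathcal{X}(a)$ vanishes) and, for even $p$, that $\la a^*a\ra_N<\infty$ (Proposition~\ref{prop:corr-bounds}). Working in the orthonormal basis of Slater determinants --- permanents, if $p$ is even --- $\psi_{\vect m}$ with $0\le m_1<\cdots<m_N\le pN-p$, one has $\Psi_N=\sum_{\vect m}a_N(\vect m)\psi_{\vect m}$ and $u_\mathcal{X}=\sum_{\mathcal{X}(\vect m)=\mathcal{X}}a_N(\vect m)\psi_{\vect m}$. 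I would start from $p_N(\mathcal{X})\,\omega_\mathcal{X}(a)=\tfrac1{C_N}\sum_{(\mathcal{Y},\mathcal{Z})\in\mathcal{M}_N(\mathcal{X})}\la u_\mathcal{Z},a\,u_\mathcal{Y}\ra$, apply the triangle inequality, expand each $u$ in the basis, and use that the sets $\mathcal{M}_N(\mathcal{X})$ partition all pairs of partitions; this yields
\begin{equation*}
  \sum_{\mathcal{X}\notin\mathcal{E}_d}p_N(\mathcal{X})\,|\omega_\mathcal{X}(a)|
  \;\le\; \frac1{C_N}\sum_{(\vect m,\vect m')}\bigl|a_N(\vect m)\,a_N(\vect m')\,\la\psi_{\vect m'},a\,\psi_{\vect m}\ra\bigr|,
\end{equation*}
the sum running over those pairs for which the partition with renewal set $\mathcal{R}(\vect m)\cap\mathcal{R}(\vect m')$ lies outside $\mathcal{E}_d$.

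Next I would localise this constraint. For a nonzero summand, $a\,\psi_{\vect m}\propto\psi_{\vect m'}$, and since $\supp a\subset\{\alpha,\dots,\beta-1\}$ the tuples $\vect m$ and $\vect m'$ carry the same occupation numbers outside $\{\alpha,\dots,\beta-1\}$. A renewal point $pk\le\alpha$ depends only on the occupation configuration on $\{0,\dots,pk-1\}\subset\{0,\dots,\alpha-1\}$: if fewer than $k$ particles lie there then $pk$ is not a renewal point (by the consequence of equality in~\eqref{eq:dominance}), while otherwise the $k$ smallest occupied sites, hence the renewal condition $\sum_{j\le k}m_{\sigma(j)}=p\binom{k}{2}$, are read off from that configuration; the symmetric statement holds for $pk\ge\beta$. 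Therefore $\mathcal{R}(\vect m)$, $\mathcal{R}(\vect m')$ and their intersection all agree on $[0,\alpha]$ and on $[\beta,pN]$, so the partition with renewal set $\mathcal{R}(\vect m)\cap\mathcal{R}(\vect m')$ fails to belong to $\mathcal{E}_d$ exactly when $\mathcal{R}(\vect m')$ misses $[\alpha-pd,\alpha]$ or misses $[\beta,\beta+pd]$ --- a condition on $\vect m'$ alone. Consequently the pair-sum is bounded by the same expression with $\vect m'$ restricted to the set $S$ of admissible tuples whose renewal set misses $[\alpha-pd,\alpha]$ or $[\beta,\beta+pd]$.

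Now I would invoke Lemma~\ref{lem:cauchy-schwarz}, exactly as in the proof of Proposition~\ref{prop62} but with the index subset $S$ in place of all of $\N$: take $\rho=|\Psi_N\ra\la\Psi_N|/C_N$ and $A$ the matrix of the operator $a$, which has at most one nonzero entry per row, so $d(A)=1$, $\Tr\rho\,A^*A=\la a^*a\ra_N$, and $\Tr\rho\,\mathbf 1_S=\sum_{\vect m'\in S}|a_N(\vect m')|^2/C_N=\P_N(\mathcal{X}\in S)$. This gives
\begin{equation*}
  \sum_{\mathcal{X}\notin\mathcal{E}_d}p_N(\mathcal{X})\,|\omega_\mathcal{X}(a)|
  \;\le\; \la a^*a\ra_N^{1/2}\,\P_N(\mathcal{X}\in S)^{1/2}.
\end{equation*}
Finally, by a union bound $\P_N(\mathcal{X}\in S)$ is at most the probability that $\mathcal{X}$ has no renewal point in $[\alpha-pd,\alpha]$ plus the probability that it has none in $[\beta,\beta+pd]$; each of these two intervals spans at least $pd$ lattice sites, so Lemma~\ref{lem:long-intervals} bounds each term by $c(p,\gamma)\sum_{k\ge d}k\,p_k$, whence $\P_N(\mathcal{X}\in S)\le 2c(p,\gamma)\sum_{k\ge d}k\,p_k$ and the claimed inequality follows.

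The step I expect to be the main obstacle is the localisation: a priori the membership of the partition with renewal set $\mathcal{R}(\vect m)\cap\mathcal{R}(\vect m')$ in $\mathcal{E}_d$ depends on the whole pair, and one must verify it can be determined from $\vect m'$ alone, so that the constrained index set of pairs is a product $(\text{all }\vect m)\times S$ to which Lemma~\ref{lem:cauchy-schwarz} applies directly; this rests entirely on the locality of the renewal structure away from $\supp a$. Everything else --- the passage from partitions to Slater indices and the two probabilistic estimates --- is routine bookkeeping given the lemmas already established.
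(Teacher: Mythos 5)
Your proof is correct and follows essentially the same route as the paper's: pass to the basis of ordered tuples $\vect{m}$, observe that the constraint $\mathcal{X}\notin\mathcal{E}_d$ forces the summation indices into the set $S$ of tuples whose own renewal set misses one of the two intervals, apply Lemma~\ref{lem:cauchy-schwarz} with $d(A)=1$, and finish with a union bound and Lemma~\ref{lem:long-intervals}. The only difference is that you spell out the localisation step (that renewal points in $[0,\alpha]\cup[\beta,pN]$ are determined by the occupation outside $\supp a$, so membership in $S$ transfers between $\vect{m}$ and $\vect{m}'$), which the paper's proof uses implicitly when it restricts both indices to $S$.
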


\begin{proof} 
	We abbreviate $\vect{m} = (m_1,\ldots,m_N)$ and 
	\begin{equation*}
		a_N(\vect{m})= a_N(m_1,\ldots,m_N), \quad \psi(\vect{m}) \widehat{=} \psi_{m_1}(z_1) \cdots \psi_{m_N}(z_N).
	\end{equation*}
  Let $\rho(\vect{m},\vect{m'}) : = a_N(\vect{m}) a_N(\vect{m'}) / C_N$. 
  For $d\in \N$, let $S$ be the set of $\vect{m}$'s without renewal points in 
  $[\alpha-pd,\alpha]$ or without renewal point in $[\beta,\beta+pd]$. 
For every  
  $\vect{m}$, $ \la \psi(\vect{m'}),a \psi(\vect{m})\ra$ is non-vanishing for at most one 
  vector $\vect{m'}$. Thus by Lemma~\ref{lem:cauchy-schwarz},
  \begin{align*}
	\sum_{\mathcal{X} \notin \mathcal{E}_d} p_N(\mathcal{X}) |\omega_\mathcal{X}(a)| 
    & \leq \frac{1}{N! C_N} \sum_{\vect{m},\vect{m'} \in S} 
	\Bigl| a_N(\vect{m}) a_N(\vect{m'}) \la \psi(\vect{m}),a \psi(\vect{m'})  
	  \ra \Bigr| \\	
        &  \leq \la a^* a\ra_N ^{1/2}\,   
    \Bigl[ \sum_{\vect{m} \in S} |a_N(\vect{m})|^2 /(N!C_N )\Bigr]^{1/2} \\
	& = \la a^* a\ra_N ^{1/2}\, \bigl(\P_N(\mathcal{E}_d^\mathrm{c}) \bigr)^{1/2}
\end{align*}  
and we conclude with Lemma~\ref{lem:long-intervals}. 
\end{proof}

\begin{lemma} \label{lem:aux-thermo}
	Let  $a$ be a  Wick ordered monomial supported in  $ \{\alpha,\ldots, \beta-1\} \subset \Z$, $d \in \N$, 
	and $\mathcal{E}_d$ as in the previous lemma. We may consider  $\mathcal{E}_d$ as a subset of $\mathcal{P}_N$ or of 
	$\mathcal{P}_\infty$. Then  
    	\begin{multline*}
		\Bigl| \sum_{\mathcal{X} \in \mathcal{E}_d} p_N(\mathcal{X})\omega_\mathcal{X}(a) 
			- \int_{\mathcal{E}_d} \dd \P(\mathcal{X}) \omega_\mathcal{X}(a) \Bigr| \\
			\leq  \const \Bigl( \sup_{k\geq d}|u_k - \mu^{-1}| \Bigr) \sum_{\mathcal{X}\in \mathcal{P}_N} p_N(\mathcal{X}) |\omega_\mathcal{X}(a)|.
	\end{multline*}
\end{lemma}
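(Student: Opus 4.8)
The plan is to compare the two sums term by term over the fixed index set $\mathcal{E}_d$, using the key structural fact that for a partition $\mathcal{X}$ having renewal points $r \in [\alpha-pd,\alpha]$ and $s\in [\beta,\beta+pd]$ enclosing $\supp a$, the quasi-state value $\omega_\mathcal{X}(a)$ depends only on the restriction $\mathcal{X}_{r,s}$ of $\mathcal{X}$ to $\{pr,\ldots,ps-1\}$ (this is the \emph{locality} property from Section~\ref{sec:renewal}). First I would partition $\mathcal{E}_d$ according to the triple $(r,s,\mathcal{Y})$ where $r$ is the largest renewal point in $[\alpha-pd,\alpha]$, $s$ is the smallest renewal point in $[\beta,\beta+pd]$, and $\mathcal{Y}$ is the induced partition of $\{pr,\ldots,ps-1\}$; call the corresponding event $\mathcal{E}_{rs}(\mathcal{Y})$. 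On each such event $\omega_\mathcal{X}(a)$ equals a constant $\omega_\mathcal{Y}(a)$, so
\begin{equation*}
	\sum_{\mathcal{X}\in\mathcal{E}_d} p_N(\mathcal{X})\,\omega_\mathcal{X}(a) = \sum_{r,s,\mathcal{Y}} \omega_\mathcal{Y}(a)\, \P_N\bigl(\mathcal{E}_{rs}(\mathcal{Y})\bigr),
\end{equation*}
and likewise with $\int_{\mathcal{E}_d}\dd\P$ replaced by $\P$ in place of $\P_N$. Subtracting, the difference is bounded by $\sum_{r,s,\mathcal{Y}} |\omega_\mathcal{Y}(a)|\,\bigl|\P_N(\mathcal{E}_{rs}(\mathcal{Y})) - \P(\mathcal{E}_{rs}(\mathcal{Y}))\bigr|$.

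Next I would invoke Lemma~\ref{lem:pntop}: the event $\mathcal{E}_{rs}(\mathcal{Y})$ is exactly of the type covered there (it fixes $pr$ and $ps$ as renewal points and prescribes the restriction in between), and since $r,s$ lie within distance $pd$ of $\alpha,\beta$, the quantities $r$ and $N-s$ are $\geq d$ — wait, more precisely, $\mathcal{E}_d$ only makes sense when $\alpha-pd$ and $\beta+pd$ are themselves at distance $\geq$ something from the endpoints $0,pN$, which is implicit in applying the lemma; I would state the hypothesis that $\alpha,\beta$ are in the bulk so that $r,s$ satisfy the distance condition of Lemma~\ref{lem:pntop}. That lemma gives
\begin{equation*}
	\bigl|\P_N(\mathcal{E}_{rs}(\mathcal{Y})) - \P(\mathcal{E}_{rs}(\mathcal{Y}))\bigr| \leq \const\, \P_N(\mathcal{E}_{rs}(\mathcal{Y}))\, \sup_{k\geq d}|u_k - \mu^{-1}|.
\end{equation*}
Substituting this back, the $\sup$ factor pulls out, and what remains is $\const\bigl(\sup_{k\geq d}|u_k-\mu^{-1}|\bigr)\sum_{r,s,\mathcal{Y}} |\omega_\mathcal{Y}(a)|\,\P_N(\mathcal{E}_{rs}(\mathcal{Y}))$. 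The last double sum is precisely $\sum_{\mathcal{X}\in\mathcal{E}_d} p_N(\mathcal{X})|\omega_\mathcal{X}(a)|$ by the same regrouping and locality identity used at the start, and this is $\leq \sum_{\mathcal{X}\in\mathcal{P}_N} p_N(\mathcal{X})|\omega_\mathcal{X}(a)|$, giving the claimed bound.

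The main obstacle, and the place requiring care, is the bookkeeping of the regrouping: one must check that summing the locally-defined value $\omega_\mathcal{Y}(a)$ against $\P_N(\mathcal{E}_{rs}(\mathcal{Y}))$ really reproduces $\sum_{\mathcal{X}\in\mathcal{E}_d}p_N(\mathcal{X})\omega_\mathcal{X}(a)$ — i.e.\ that the events $\mathcal{E}_{rs}(\mathcal{Y})$ with $r$ maximal in the left window and $s$ minimal in the right window genuinely partition $\mathcal{E}_d$, and that locality applies for \emph{every} such $\mathcal{X}$ (it does, since $r$ and $s$ are renewal points enclosing $\supp a\subset\{\alpha,\ldots,\beta-1\}$). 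A second minor point is ensuring Lemma~\ref{lem:pntop} is applicable uniformly: the constant there is absolute, and the distance hypothesis $d\leq r,s\leq N-d$ translates into a mild condition on how far $\{\alpha,\ldots,\beta-1\}$ sits from the boundary, which is the bulk regime in which the whole theorem is stated. Everything else is the routine triangle inequality and the elementary fact that $\int_{\mathcal{E}_d}\dd\P(\mathcal{X})\,\omega_\mathcal{X}(a)$ decomposes the same way because $\omega_\mathcal{X}(a)$ is $\P$-a.e.\ equal to $\omega_\mathcal{Y}(a)$ on $\mathcal{E}_{rs}(\mathcal{Y})$ by the infinite-volume locality property.
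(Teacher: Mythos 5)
Your proposal is correct and follows essentially the same route as the paper: decompose $\mathcal{E}_d$ by the outermost enclosing renewal points $r$ (maximal in the left window) and $s$ (minimal in the right window) together with the restricted partition, use locality to replace $\omega_\mathcal{X}(a)$ by the common value on each cell, apply Lemma~\ref{lem:pntop} to each cell, and resum. The bookkeeping points you flag (that these cells genuinely partition $\mathcal{E}_d$, and that the distance hypothesis of Lemma~\ref{lem:pntop} holds in the bulk regime) are exactly the implicit steps in the paper's argument.
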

This is a consequence of Lemma~\ref{lem:pntop} and the locality of the $\omega_\mathcal{X}$.
\begin{proof}
	A partition $\mathcal{X} \in \mathcal{E}_d$ has a renewal point $r \in [\alpha - pd, \alpha]$ and a renewal 
	point $s \in [\beta,\beta+pd]$. Choose $r$ the largest possible and $s$ the smallest possible. The value of 
	$\omega_\mathcal{X}(a)$ depends only on $\mathcal{X}_{r,s}$, the restriction of the partition to the volume enclosed 
	by the renewal points. The contribution of partitions with the same $r$, $s$ and same restriction $\mathcal{X}_{r,s}$ 
	is of the form 
	$\omega_{\mathcal{X}_{r,s}} (a) \P_N( \mathcal{E}_{rs})$. 
	A similar form can be derived for partitions of the infinite lattice $\Z$. 
	By Lemma~\ref{lem:pntop}, 
	\begin{multline*}
		\Bigl| \omega_{\mathcal{X}_{r,s}} (a) \P_N( \mathcal{E}_{rs}) - \omega_{\mathcal{X}_{r,s}} (a) \P( \mathcal{E}_{rs}) \Bigr| \\
			\leq  \const \Bigl( \sup_{k\geq d}|u_k - \mu^{-1}| \Bigr)\, \bigl|\omega_{\mathcal{X}_{r,s}} (a)\bigr| \, \P_N(\mathcal{E}_{rs}),
	\end{multline*} 
	and the proof is concluded by summing over $r$, $s$ and $\mathcal{X}_{rs}$. 
\end{proof}

The previous two lemmas, together with Prop.~\ref{prop62} and Lemma~\ref{lem:pntop} yield a lattice version of Theorem~\ref{thm:thermolim}.
\begin{cor} \label{cor:a-aN}
Let $a$ be a Wick-ordered monomial whose support has distance $\geq pd$ to the lattice boundaries, 
$\supp a \subset \{pd,\ldots, pN - pd-1\}$. Then 
\begin{equation*}
	\Bigl|\la a \ra_N - \la a \ra \Bigr| \leq \const \Bigl(  \sup_M \la a^*a\ra_M \Bigr)^{1/2}  
			\Bigl( 4 c(p,\gamma) \bigl( \sum_{k\geq d} k p_k\bigr)^{1/2} + \sup_{k\geq d} |u_k - \mu^{-1}| \Bigr).
\end{equation*}   
\end{cor}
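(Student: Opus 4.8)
The plan is to compare $\la a\ra_N$ and $\la a\ra$ by splitting both along the event $\mathcal{E}_d$ of Lemma~\ref{prop:absolute1} (partitions with a renewal point in each of the windows $[\alpha-pd,\alpha]$ and $[\beta,\beta+pd]$ flanking $\supp a$), and then invoking the estimates already at our disposal for each piece. The hypothesis $\supp a\subset\{pd,\ldots,pN-pd-1\}$ is exactly what is needed for these windows to lie inside $\{0,\ldots,pN\}$, so that Lemma~\ref{prop:absolute1} applies. By the triangle inequality,
\begin{multline*}
  \bigl|\la a\ra_N - \la a\ra\bigr|
    \leq \sum_{\mathcal{X}\notin\mathcal{E}_d} p_N(\mathcal{X})\,|\omega_\mathcal{X}(a)|
       + \int_{\mathcal{E}_d^\mathrm{c}}\dd\P(\mathcal{X})\,|\omega_\mathcal{X}(a)| \\
       + \Bigl|\sum_{\mathcal{X}\in\mathcal{E}_d} p_N(\mathcal{X})\,\omega_\mathcal{X}(a)
          - \int_{\mathcal{E}_d}\dd\P(\mathcal{X})\,\omega_\mathcal{X}(a)\Bigr|.
\end{multline*}
Lemma~\ref{prop:absolute1} bounds the first term by $\la a^*a\ra_N^{1/2}\bigl(2c(p,\gamma)\sum_{k\geq d}kp_k\bigr)^{1/2}$, and Lemma~\ref{lem:aux-thermo} together with Prop.~\ref{prop62} bounds the third term by $\const\,\bigl(\sup_{k\geq d}|u_k-\mu^{-1}|\bigr)\la a^*a\ra_N^{1/2}$.

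The remaining piece, the $\P$-integral $\int_{\mathcal{E}_d^\mathrm{c}}\dd\P(\mathcal{X})\,|\omega_\mathcal{X}(a)|$, is the only one not handed to us directly, and obtaining it is the main (if modest) obstacle. The cleanest route is to pass to the limit $N\to\infty$ in Lemma~\ref{prop:absolute1}: for $\mathcal{X}\in\mathcal{E}_d^\mathrm{c}$ the value $\omega_\mathcal{X}(a)$ still depends only on the restriction of $\mathcal{X}$ to the finite window between the two renewal points enclosing $\supp a$, so one groups partitions by that restriction, replaces each $\P$-weight by the corresponding $\P_N$-weight via Lemma~\ref{lem:pntop}, and uses the absolute convergence in Cor.~\ref{cor:well-defined} (with Prop.~\ref{prop62}) to control the tail when exhausting the countable family of restrictions. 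This gives $\int_{\mathcal{E}_d^\mathrm{c}}\dd\P(\mathcal{X})\,|\omega_\mathcal{X}(a)|\leq (\sup_M\la a^*a\ra_M)^{1/2}\bigl(2c(p,\gamma)\sum_{k\geq d}kp_k\bigr)^{1/2}$. Alternatively one records a standalone ``long intervals are unlikely'' bound for the stationary renewal process $\P$, immediate from $\P(pj\text{ is a renewal point})=\mu^{-1}\leq c(p,\gamma)$, and reruns the Cauchy-Schwarz argument of Lemma~\ref{lem:cauchy-schwarz} in the limit.

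Finally I would assemble the three estimates and make them uniform in $N$. After Wick ordering, $a^*a$ is a linear combination of Wick monomials with $N$-independent coefficients, so Prop.~\ref{prop:corr-bounds} gives $\sup_M\la a^*a\ra_M<\infty$; replacing every $\la a^*a\ra_N^{1/2}$ by $(\sup_M\la a^*a\ra_M)^{1/2}$, adding the two ``bad'' contributions, and absorbing the numerical constants (using $c(p,\gamma)\geq 1$, which holds since $1/u_n\leq c(p,\gamma)$ and $u_1=p_1\leq 1$) into the overall constant yields the claimed inequality, with the bracket $4c(p,\gamma)\bigl(\sum_{k\geq d}kp_k\bigr)^{1/2}+\sup_{k\geq d}|u_k-\mu^{-1}|$. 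The one genuinely new ingredient is the $N\to\infty$ interchange for the $\P$-integral; everything else is bookkeeping with Lemmas~\ref{prop:absolute1} and~\ref{lem:aux-thermo} and Prop.~\ref{prop62}.
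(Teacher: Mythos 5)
Your decomposition and the ingredients you invoke (Lemma~\ref{prop:absolute1}, Lemma~\ref{lem:aux-thermo}, Prop.~\ref{prop62}, Lemma~\ref{lem:pntop}, and the uniform moment bounds) are exactly what the paper uses; its ``proof'' is just the one-line assertion that the corollary follows from these, so your assembly is the intended argument. You also correctly identify and fill the one step the paper leaves implicit, namely the bound on $\int_{\mathcal{E}_d^{\mathrm{c}}}\dd\P(\mathcal{X})\,|\omega_\mathcal{X}(a)|$ obtained by passing to the limit in Lemma~\ref{prop:absolute1} via Lemma~\ref{lem:pntop} and locality.
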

Note that the upper bound goes to $0$ as $d\to \infty$ because $\sum k p_k <\infty$, $u_k\to \mu^{-1}$, and because of
the uniform moment bounds from Prop.~\ref{prop:corr-bounds}. 

\begin{proof}[Proof of Theorem~\ref{thm:thermolim} when $\Lambda=\mathcal{Z}$]  \label{proof:thermolim}
In view of Cor.~\ref{cor:a-aN}, all there is left to do is to go from  lattice correlations to continuum correlations. This is easily achieved, thanks 
to the explicit relation between lattice and continuum and the Gaussian decay of the one-particle functions $\psi_k(z)$. 
For the sake of clarity we write down the proof for the two-point functions only; the other correlation functions can be treated in a similar way. Recall 
\begin{equation} \label{eq:two-point}
	\rho_2^{N,\mathcal{Z}}(z_1,z_2;z'_1,z'_2) 
		=  \sum_{ 0\leq k,\ell,m,n \leq p N - p} \la c_m^* c_n^* c_\ell c_k \ra_N 
				\, \psi_k(z_1)\, \psi_\ell(z_2)
			\, \overline{\psi_m(z'_1)}\, \overline{\psi_n(z'_2)}.
\end{equation}
Therefore we define 
\begin{equation*}
	\rho_2(z_1,z_2;z'_1,z'_2) 
		:=\sum_{  k,\ell,m,n \in \Z} 
			\la c_m^* c_n^* c_\ell c_k \ra
				\, \psi_k(z_1)\, \psi_\ell(z_2)
			\, \overline{\psi_m(z'_1)}\, \overline{\psi_n(z'_2)}
\end{equation*}
with the bulk expectation $\la \cdot \ra$ as in Eq.~\eqref{eq:candidate-limit}. The infinite sum converges because 
the expectations $\la \cdots \ra$ 
appearing in the sum are bounded  as in  Prop.~\ref{prop:corr-bounds}, and 
\begin{equation*}
	 \sup_z \sum_{m\in \Z} |\psi_m(z)| \propto \sup_x \sum_{m\in\Z} 
			\exp[- (x-m\gamma)^2/2] <\infty.
\end{equation*}
The translational covariance of $\omega_\mathcal{X}(a)$ and  the stationarity of the renewal process $\P$ make the two-point correlation function $p \ell^2/R$-periodic in the sense of 
Eq.~\eqref{eq:periodic}. 

The difference between the finite volume and infinite volume two-point function is a sum over integers $k,l,m,n$. 
Suppose that $z_1,z'_1, z_2, z'_2$ are at distance  $\geq D \geq 2 p d \gamma$ from 
the boundaries of the cylinder $x=0$ and $x= pN\gamma$.
 The contribution to the two-point function 
from summands with all of the four indices between $d$ and $pN - pd$ can be bounded with the help of Cor.~\ref{cor:a-aN}. 
The contribution from quadruplets $(k,l,m,n)$ with  $k \leq pd$, is bounded by a constant times 
\begin{equation} \label{eq:gauss-tail}
	 \sum_{k \leq d} |\psi_k(z'_1)| \propto \sum_{k \leq pd} \exp \bigl( - (x'_1- k \gamma)^2 /2 \bigr) \leq 
			\sum_{\kappa \geq 0} \exp\bigl( - p^2 ( d + \kappa )^2 \gamma^2/2\bigr). 
\end{equation}
which is small for large $d$. Contributions where another index is smaller than $pd$ or larger than $pN-pd$ can 
be bounded in a similar way. 
\end{proof}

	
\subsection{Clustering}

The clustering for $\Lambda = \mathcal{Z}$ is deduced from the renewal process Lemma~\ref{lem:renewal-clust},  in the 
same way as the thermodynamic limit was deduced, in the previous section, from Lemma~\ref{lem:pntop}. Again, we start with lattice correlations.

Consider $a$ and $b$  two Wick-ordered monomials of creation and annihilation operators 
with supports at mutual distance $\geq 3pd$ for some $d \in \N$. Thus let $\alpha, \beta \in \Z$ 
such that $\beta- \alpha \geq 3 d $ and   
$\supp a \subset \{\ldots, p\alpha - 1\}$ and $\supp b \subset \{p\beta, o\beta+1,\ldots\}$.  Let $\mathcal{F}$ be the set of partitions $\mathcal{X}$ 
with a renewal point in $[p\alpha,p\alpha +pd]$ and $\mathcal{G}$ 
the set of partitions with a renewal point in $[p\beta- pd,p\beta]$.  

We write the expectation of $ab$ as 
a sum over partitions in $\mathcal{F}\cap \mathcal{G}$ plus a remainder; similarly for $a$ and $b$. In order to estimate $\la a b\ra_N - \la a \ra_N \la b\ra_N$, we need to estimate four terms,  the three remainders, 
and the difference 
\begin{equation} \label{eq:clust-main-diff}
	\sum_{\mathcal{X} \in \mathcal{F}\cap \mathcal{G}} p_N(\mathcal{X}) \omega_\mathcal{X}(a) \omega_ \mathcal{X}(b) 
		- \Bigl(  \sum_{\mathcal{X} \in \mathcal{F}} p_N(\mathcal{X}) \omega_\mathcal{X}(a) \Bigr) 
			 \Bigl(  \sum_{\mathcal{X} \in \mathcal{G}} p_N(\mathcal{X}) \omega_\mathcal{X}(b) \Bigr). 
\end{equation}		
Recall that if 
$\mathcal{X} \in \mathcal{F}\cap \mathcal{G}$, then $\omega_{\mathcal{X}}(ab) 
	= \omega_{\mathcal{X}}(a)\omega_{\mathcal{X}}(b)$.
The remainders for $a$ and $b$ are simplest to estimate,  
\begin{equation*}
	\sum_{\mathcal{X}\notin \mathcal{F}} p_N(\mathcal{X}) \bigl| \omega_\mathcal{X}(a) \bigr| \leq 
		\Bigl( \la a ^*a \ra_N  c(p,\gamma) \sum_{k \geq d} k p_k \Bigr)^{1/2} ,
\end{equation*}
and an analogous inequality holds for $b$ and $\mathcal{G}$. 
The proof is similar to the proof of  Lemma~\ref{prop:absolute1}. 
The difference~\eqref{eq:clust-main-diff} is bounded by some constant times 
\begin{equation*}
	 \Bigl( \sup_{k \geq d} | \mu^{-1} u_k - 1| \Bigr) 
			\la a ^*a \ra_N^{1/2}\, \la b^*b \ra_N^{1/2}.
\end{equation*}
This is shown with the help of Lemma~\ref{lem:renewal-clust}, and proceeding in a way similar to the proof of Lemma~\ref{lem:aux-thermo}.

The remaining estimate of the contribution to $\la a b \ra_N$ from partitions not in $\mathcal{F}\cap \mathcal{G}$ is
slightly more involved. 
First we switch to an occupation number picture. With $\vect{m}=(m_1,\ldots,m_N)$, we associate a sequence of 
occupation numbers $\vect{n} = (n_0,n_1,\ldots,n_{pN-p})$ in the obvious way; 
for example, when $N=2$, $p=2$, and  $(m_1,m_2) = (1,1)$, we have $(n_0,n_1,n_2,n_3) =(0,2,0,0)$. 
We let $|\vect{n}\ra$ be the normalized wave function proportional to the (anti-)symmetrized  product of  
$\psi_{m_1}$,..., $\psi_{m_N}$. The $| \vect{n}\ra$'s form an orthonormal system.  
The many-particle wave function becomes 
\begin{equation*}
	\Psi_N = \sum_{\vect{n}} A_N(n_0,n_1,\ldots,n_{pN-p}) | n_0 n_1\cdots n_{pN-p} \ra.
\end{equation*}
When $p$ is odd, the coefficients $A_N(\vect{n})$ are in one-to-one correspondence with the coefficients 
$a_N(\vect{m})$. When $p$ is even, the correspondence is up to factors $\sqrt{n_i!}$.

Next, we observe that the notion of renewal point, since it does not depend on the order of the $m_j$'s, 
can be transferred to occupation numbers.
Hence $pk$ is a renewal point of $\vect{n}$ if and only if 
\begin{equation} \label{eq:renewal-nj}
	\sum_{j=0}^{pk-1} n_j = k\quad \text{and}\quad \sum_{j=1}^{pk-1} j n_j = pk(k-1)/2.
\end{equation}
The admissibility condition for $\vect{m}$ leads to the following property, valid whenever $A_N (\vect{n} ) \neq 0$: 
for all $k =1,\cdots, N-1$,  
\begin{equation} \label{eq:admissible-nj}
	\sum_{j=0}^{pk-1} n_j = k\ \Rightarrow \sum_{j=1}^{pk-1} j n_j \geq pk(k-1)/2.
\end{equation}
Now let $S$ be the set of $\vect{n}$'s with no renewal 
point in $[\alpha,\alpha+pd]$ or no  renewal point in $[\beta-pd,\beta]$, and 
 $E$ the set of \emph{pairs} $(\vect{n}, \vect{n'})$ with no \emph{common} renewal 
point in $[\alpha,\alpha+pd]$ or no common renewal point in $[\beta-pd,\beta]$. 
Thus $\vect{n} \in S$ is a sufficient, but not necessary, condition for $(\vect{n},\vect{n'})$ to be in $E$. 
Write 
\begin{equation} \label{eq:Esum}
	 \sum_{\mathcal{X} \notin \mathcal{F}\cap \mathcal{G}} p_N( \mathcal{X}) \omega_\mathcal{X}(ab)  
		= \frac{1}{C_N}  
		\sum_{ (\vect{n}, \vect{n'}) \in E}  A_N(\vect{n})A_N(\vect{n'}) 
			\bigl \la \vect{n}  |\, a b\, | \vect{n'}\bigr \ra.
\end{equation}
The next lemma paves the way for an application of a Cauchy-Schwarz-inequality to Eq.~\eqref{eq:Esum}.
\begin{lemma}
	Let $a$, $b$, $E$ as described above, and suppose in addition that $a$ preserves the total particle number, i.e., it is the product 
	of $n$ creation operators and the same number $n$ of annihilation operators. 
	If $(\vect{n},\vect{n'}) \in E$ and  
	\begin{equation} \label{eq:non-vanish}
	A_N(\vect{n})A_N(\vect{n'}) 
			\bigl \la \vect{n} | a b| \vect{n'} \bigr \ra \neq 0, 
	\end{equation}
	then $\vect{n} \in S$ or $\vect{n'} \in S$.
\end{lemma}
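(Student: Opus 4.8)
The plan is to prove the contrapositive: assuming $\vect{n}\notin S$, $\vect{n'}\notin S$, and $A_N(\vect{n})A_N(\vect{n'})\la \vect{n}|ab|\vect{n'}\ra\neq 0$, I will show $(\vect{n},\vect{n'})\notin E$, i.e.\ that $\vect{n}$ and $\vect{n'}$ possess a common renewal point in the left window $[p\alpha,p\alpha+pd]$ and a common renewal point in the right window $[p\beta-pd,p\beta]$. By hypothesis $\vect{n}\notin S$ supplies renewal points $pr$ of $\vect{n}$ in the left window and $ps$ of $\vect{n}$ in the right window, and similarly $\vect{n'}\notin S$ supplies $pr',ps'$ for $\vect{n'}$; the task is to match one of each.

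Two elementary facts drive the argument. First, since $a$ and $b$ are Wick-ordered monomials supported in $\{\ldots,p\alpha-1\}$ and $\{p\beta,p\beta+1,\ldots\}$ respectively, $ab$ sends the Fock basis vector $|\vect{n'}\ra$ to a scalar multiple of a single Fock basis vector and alters occupation numbers only at sites of $\supp a\cup\supp b$; hence $\la\vect{n}|ab|\vect{n'}\ra\neq 0$ forces $n_j=n'_j$ for all $j\notin\supp a\cup\supp b$, and in particular $\vect{n}$ and $\vect{n'}$ agree on the middle block $J:=\{p\alpha,\ldots,p\beta-1\}$, which is disjoint from both supports by the placement of $\alpha,\beta$. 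Second, using the extra hypothesis on $a$: in $ab|\vect{n'}\ra$ the factor $b$ touches only $\supp b$, and $a$, being a product of $n$ creation and $n$ annihilation operators at sites of $\supp a$, preserves $\sum_{j\in\supp a}(\cdot)_j$; so $\sum_{j\in\supp a}n_j=\sum_{j\in\supp a}n'_j$. Since for any index strictly above $\supp a$ and strictly below $\supp b$ the only sites where $\vect{n}$ and $\vect{n'}$ may disagree lie in $\supp a$, these two facts combine to give the \emph{equal-count property}: $\sum_{j<pk}n_j=\sum_{j<pk}n'_j$ for every $k$ with $\alpha\leq k\leq\beta$.

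Now fix, say, the left window, with $pr$ a renewal point of $\vect{n}$ and $pr'$ of $\vect{n'}$, both with $\alpha\leq r,r'\leq\alpha+d$. If $r=r'$ we are done; assume $r<r'$ (the case $r>r'$ follows by exchanging the roles of $\vect{n}$ and $\vect{n'}$, all hypotheses being symmetric). The equal-count property gives $\sum_{j<pr}n'_j=r$ and $\sum_{j<pr'}n_j=r'$, so~\eqref{eq:admissible-nj}, applicable because $A_N(\vect{n})A_N(\vect{n'})\neq0$, yields $\sum_{j<pr}jn'_j\geq pr(r-1)/2$ and $\sum_{j<pr'}jn_j\geq pr'(r'-1)/2$. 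Since $[pr,pr')\subset J$, the agreement on $J$ gives $\sum_{pr\leq j<pr'}jn_j=\sum_{pr\leq j<pr'}jn'_j$; feeding this together with the renewal identities~\eqref{eq:renewal-nj} for $\vect{n}$ at $pr$ and for $\vect{n'}$ at $pr'$ into the two admissibility inequalities sandwiches $\sum_{j<pr}jn'_j$ between $pr(r-1)/2$ and itself, hence $\sum_{j<pr}jn'_j=pr(r-1)/2$. Combined with $\sum_{j<pr}n'_j=r$, this says $pr$ is a renewal point of $\vect{n'}$ too — a common renewal point in the left window. Running the identical argument on the right window produces a common renewal point there, so $(\vect{n},\vect{n'})\notin E$.

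The one genuinely delicate step is the sandwich in the last paragraph: the equal-count property alone controls only the first of the two conditions in~\eqref{eq:renewal-nj}, and admissibility gives only one-sided bounds on the weighted sums, so equality of the weighted sums has to be teased out of the interaction between the renewal points of $\vect{n}$ and $\vect{n'}$ and their coincidence on $J$. A minor point to verify carefully is the case where a renewal point falls on the right endpoint $p\beta$ of the right window (that single site may belong to $\supp b$); but all sums in~\eqref{eq:renewal-nj} run over indices strictly below the renewal point, so $\supp b$ plays no role and the argument is unaffected.
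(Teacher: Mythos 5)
Your proof is correct and rests on the same three ingredients as the paper's: coincidence of $\vect{n}$ and $\vect{n'}$ outside $\supp a\cup\supp b$, the equal-count identity forced by particle-number conservation of $a$, and the admissibility condition~\eqref{eq:admissible-nj} combined with~\eqref{eq:renewal-nj}. The only organizational difference is that the paper obtains the one-sided bound $\sum_{j<pk}jn'_j\leq\sum_{j<pk}jn_j$ by a single WLOG ordering of the weighted sums over $\supp a$ and then shows every renewal point of $\vect{n}$ in the middle region is common, whereas you derive the same inequality locally from the presence of a second renewal point $pr'>pr$ and sandwich; both are valid.
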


\begin{proof} 
Recall the notion of support of an observable: when $a = c_1^* c^*_3$, $\supp a = \{1,3\}$. If  Eq.~\eqref{eq:non-vanish} holds, 
 then $\vect{n}$ and $\vect{n'}$ must coincide outside $\supp(a) \cup \supp(b)$. 
We may assume, without loss of generality, that 
\begin{equation} \label{eq:order-nj}
	 \sum_{j=1}^{p \alpha - 1} j  n_j \geq \sum_{j=1}^{p \alpha - 1} j  n'_j.
\end{equation}
(Otherwise swap $\vect{n}$ and $\vect{n'}$.)
If $a$ preserves the total particle number,  e.g., $a = c_1^*c_2$, then $\vect{n}$ and $\vect{n'}$ must have the same 
number of particles in $\supp a$. It follows that for every site $v$ between $\supp a$ and $\supp b$, they have the same number of 
particles to the left of $v$, 
\begin{equation} \label{eq:left-nj}
	n_0 + n_1+ \cdots + n_{v-1} = n'_0 + \cdots + n'_{v-1}, \qquad  p \alpha \leq v \leq p\beta - 1.
\end{equation}
Suppose that $\vect{n}$ has 
a renewal point $pk$ between $\supp a$ and $\supp b$, $\alpha \leq k \leq \beta $. 
Then $\vect{n}$ has $k$ particles to the left of $v = pk$, and by Eq.~\eqref{eq:left-nj}, so has $n'_j$. 
Moreover, since $\vect{n}$ and $\vect{n'}$ coincide outside $\supp(ab)$, Eq.~\eqref{eq:order-nj} implies that 
\begin{equation*}
	\sum_{j=1}^{pk-1} j n'_j \leq \sum_{j=1}^{pk-1} j n_j = pk (k-1) /2.
\end{equation*}
Eqs.~\eqref{eq:admissible-nj} and~\eqref{eq:renewal-nj} then imply that $pk$ is a renewal point of $\vect{n'}$ too. 
Thus every renewal point of $\vect{n}$ between $\supp a$ and $\supp b$ is in fact a \emph{common} renewal point 
of $\vect{n}$ and $\vect{n'}$. Therefore, if $(\vect{n},\vect{n'}) \in E$, necessarily $\vect{n} \in S$. 
\end{proof}

Now we can apply  Lemma~\ref{lem:cauchy-schwarz} which together with Lemma~\ref{lem:long-intervals} 
yields 
\begin{equation*}
	\Bigl| \sum_{\mathcal{X} \notin \mathcal{F}\cap \mathcal{G}} p_N(\mathcal{X}) \omega_\mathcal{X}(ab) \Bigr| 
		\leq 2 \la (ab)^*ab\ra_N ^{1/2}  \Bigl( 2 c(p,\gamma) \sum_{k \geq d} k p_k \Bigr)^{1/2}.
\end{equation*} 
Using our uniform moment bounds, we obtain: 
\begin{prop} \label{prop:lattice-clust}
	Let $a$ be a product of $n$ creation operators and $n$ annihilation operators, and $b$ a product of 
	$m$ creation and $m$ annihilation operators. 
	Suppose that $\supp a \subset \{\ldots, p\alpha -1\}$, 
	$\supp b \subset \{p \beta,\ldots \}$ and $\beta - \alpha \geq 3d$.  Then, for  some suitable constant $K_{m+n}$ and for all $N$,  
	\begin{equation*}
		\Bigl| \la a b \ra_N - \la a \ra_N \la b \ra_N \Bigr| \leq   K_{m+n} \Bigl( (\sum_{k \geq d} k p_k)^{1/2} + \sup_{k \geq d} | u_k - \mu^{-1}| \Bigr).
	\end{equation*}
\end{prop}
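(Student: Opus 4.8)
The proposition collects the four estimates obtained in the discussion above into a single inequality, so the plan is just to telescope. Abbreviate the ``truncated'' quantities $S_{ab} := \sum_{\mathcal{X}\in\mathcal{F}\cap\mathcal{G}} p_N(\mathcal{X})\,\omega_\mathcal{X}(ab)$, $S_a := \sum_{\mathcal{X}\in\mathcal{F}} p_N(\mathcal{X})\,\omega_\mathcal{X}(a)$ and $S_b := \sum_{\mathcal{X}\in\mathcal{G}} p_N(\mathcal{X})\,\omega_\mathcal{X}(b)$. Using the quasi-state decomposition~\eqref{eq:quasi-decomp-bis} for the observables $ab$, $a$ and $b$, write
\[
	\la ab\ra_N - \la a\ra_N\la b\ra_N
		= \bigl( \la ab\ra_N - S_{ab} \bigr) + \bigl( S_{ab} - S_a S_b \bigr) + \bigl( S_a S_b - \la a\ra_N\la b\ra_N \bigr),
\]
and bound the three terms separately. (Wick ordering $a$ and $b$ first if needed, at the cost of a further combinatorial constant, we may assume all creation operators stand to the left.)

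The first term equals $\sum_{\mathcal{X}\notin\mathcal{F}\cap\mathcal{G}} p_N(\mathcal{X})\,\omega_\mathcal{X}(ab)$, which is precisely the quantity bounded in the display immediately preceding the proposition, $\leq 2\la (ab)^*ab\ra_N^{1/2}\bigl(2c(p,\gamma)\sum_{k\geq d}kp_k\bigr)^{1/2}$. For the middle term, clustering of the quasi-states gives $\omega_\mathcal{X}(ab) = \omega_\mathcal{X}(a)\,\omega_\mathcal{X}(b)$ for every $\mathcal{X}\in\mathcal{F}\cap\mathcal{G}$, so $S_{ab} = \sum_{\mathcal{X}\in\mathcal{F}\cap\mathcal{G}} p_N(\mathcal{X})\,\omega_\mathcal{X}(a)\omega_\mathcal{X}(b)$ and $S_{ab}-S_aS_b$ is exactly Eq.~\eqref{eq:clust-main-diff}, already estimated with the help of Lemma~\ref{lem:renewal-clust} by $\const\bigl(\sup_{k\geq d}|\mu^{-1}u_k-1|\bigr)\la a^*a\ra_N^{1/2}\la b^*b\ra_N^{1/2}$. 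For the third term, write $\la a\ra_N = S_a + R_a$ with $R_a := \sum_{\mathcal{X}\notin\mathcal{F}} p_N(\mathcal{X})\,\omega_\mathcal{X}(a)$, and $\la b\ra_N = S_b + R_b$ similarly; then $S_aS_b - \la a\ra_N\la b\ra_N = - S_a R_b - R_a S_b - R_a R_b$, and each summand contains a ``tail'' factor $|R_a| \leq \bigl(c(p,\gamma)\sum_{k\geq d}kp_k\bigr)^{1/2}\la a^*a\ra_N^{1/2}$, resp.\ its $b$-analogue (proved exactly as in Lemma~\ref{prop:absolute1}), multiplied by a companion factor $|S_a|$, $|\la a\ra_N| \leq \la a^*a\ra_N^{1/2}$, resp.\ the same for $b$, which is finite by Prop.~\ref{prop62} and Cor.~\ref{cor:well-defined}.

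Finally, to make the prefactors $N$-uniform: for fermions $\la a^*a\ra_N$, $\la b^*b\ra_N$ and $\la (ab)^*ab\ra_N$ are trivially bounded by constants depending only on $m$, $n$ (at most one fermion per site); for bosons this is Prop.~\ref{prop:corr-bounds}, applied after expanding $(ab)^*ab$ — a product of $2(m+n)$ creation and $2(m+n)$ annihilation operators — into a combinatorially bounded number of Wick-ordered monomials, each controlled by Eq.~\eqref{eq:factor-bounds}. Substituting these constants, absorbing $c(p,\gamma)$, the numerical factors and the moment constants into one constant $K_{m+n}$, and using $\sqrt{x+y}\leq\sqrt{x}+\sqrt{y}$ to merge $(\sum_{k\geq d}kp_k)^{1/2}$ and $\sup_{k\geq d}|u_k-\mu^{-1}|$, yields the claimed bound. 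I do not expect a genuine obstacle in this last step: the substantive input — finiteness of $\mu(p,\gamma)$ (hence $u_k\to\mu^{-1}$ and $\sum_k kp_k<\infty$), the quasi-state formalism, and the uniform moment bounds for bosons — was established in Section~\ref{sec:peierls} and the earlier part of the present section, leaving only the bookkeeping of the telescoped terms and, in the bosonic case, the verification that the $4(m+n)$-fold moment remains uniformly bounded.
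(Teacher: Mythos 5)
Your proposal is correct and follows the paper's own route: the paper's ``proof'' of Prop.~\ref{prop:lattice-clust} is precisely the preceding discussion (the three remainder estimates plus the bound on Eq.~\eqref{eq:clust-main-diff} via Lemma~\ref{lem:renewal-clust}, with the contribution of $\mathcal{X}\notin\mathcal{F}\cap\mathcal{G}$ to $\la ab\ra_N$ handled by the occupation-number lemma and Lemma~\ref{lem:cauchy-schwarz}), and your telescoping identity together with the uniform moment bounds of Prop.~\ref{prop:corr-bounds} is exactly the bookkeeping the paper leaves implicit in the phrase ``Using our uniform moment bounds, we obtain.''
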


\begin{remark}[Off-diagonal decay]
		Suppose, for example, $a = c_k^*$ and $b= c_l^* c_m c_n$ for some $k,l,m,n \in \N$. 
		Then $\la a \ra_N = 0 = \la b \ra_N =0$ and, because of $y$-momentum conservation,  
		$ \la a b \ra_N  =0$ unless $k+l = m+n$. Imagine shifting $b$ along the $x$ axis, 
		\begin{equation*}
		 	b \to c_{l+d}^* c_{m+d} c_{n+d} =\tau_x^d(b).
		\end{equation*}
		Then $\la a \tau_x^d(b) \ra_N \neq 0$ unless $k+( l+d) = m+n +2d$, so we find 
		\begin{equation*}
			d > k + l - m - n \ \Rightarrow \la a \tau_x^d(b)\ra_N = 0 = \la a \ra_N \la \tau_x^d(b) \ra_N.
		\end{equation*} 
		The same reasoning applies to higher correlations where $a$, or $b$, does not conserve particle number.
		Hence there is clustering for off-diagonal correlations too. 
\end{remark}

\begin{proof}[Proof of Theorem~\ref{thm:clustering}] 
	Just as in the proof of Theorem~\ref{thm:thermolim} on p.~\pageref{proof:thermolim}, all we need to do is pass from lattice 
	(Prop.~\ref{prop:lattice-clust})
	to continuum. Again, for simplicity we write down the proof only for the two-point correlation. Let $z_1,z'_1$ have large distance 
	 from $z_2,z'_2$ along the cylinder axis, i.e.,  $ x_1, x'_1 \leq w \gamma$ and $x_2,x'_2 \geq (w +3 D)\gamma$ for some integer $w$ and large $D$.   
	We write the sum in Eq.~\eqref{eq:two-point} as a main contribution $M$, plus a remainder. The main contribution consists of those summands where 
	$k,m \leq w +D$ and $l,n \geq w+2D$. Because of Prop.~\ref{prop:lattice-clust}, we have a bound 
	\begin{equation*}
		\bigl| \la c_m^* c_n^* c_l c_k \ra_N  - \la c_m^* c_k\ra_N \la c_n^*c_l\ra_N \bigr| \leq \const g(D)
	\end{equation*}
	for some function $g(D) \to 0$ as $D\to \infty$.  It follows that the difference between the main contribution $M$ and 
	\begin{equation*}
		\Bigl(\sum_{k,m \leq w+D} \la c_m^* c_k \ra_N \psi_k(z_1)\, \overline{\psi_m(z'_1)} \Bigr)
		 \, \Bigl(\sum_{l,n \geq w+2D} \la c_n^* c_l \ra_N \psi_l(z_2)\, \overline{\psi_n(z'_2)} \Bigr)
	\end{equation*}
	can be bounded by a constant times $g(D)$ too. But in this last term we recognize the main contribution to $\rho_1^{N}( z_1;z'_1) \rho_1^N(z_2;z'_2)$; 
	so we are left with the remainders to estimate. This is easily achieved: 
	the remainders, for the one-particle matrix as well as for the two-point function, can be estimated by terms of the 
	type~\eqref{eq:gauss-tail}. 
\end{proof}

\subsection{Non-influence of the domain of integration} \label{sec:domchoice}

In this section we show that the precise choice of domain of integration does not affect bulk correlations.  
The idea  is to rewrite integrals over $\Lambda$ as integrals over the infinite cylinder $\mathcal{Z}$ with the indicator function of 
$\Lambda$ in the integrand, and then translate the indicator into a lattice operator. This will allow us to view the indicator 
function as a quantity that lives at the cylinder's boundaries, and to decouple this boundary perturbation from bulk correlations with the help of 
the state's clustering. We start with a simple computation. 
 Let $m_1,\ldots,m_N \in \Z$ (not necessarily ordered or distinct). Then 
\begin{equation*}
	\int_{\Lambda^N} |\psi_{m_1}(z_1)|^2 \cdots |\psi_{m_N}(z_N)|^2 \dd z_1 \cdots \dd z_N 
			 = ||\psi_{m_1} ||_\Lambda ^2 \cdots  ||\psi_{m_N} ||_\Lambda ^2 .
\end{equation*}
We can rewrite this as 
\begin{equation*}
	||\psi_{m_1} \otimes \cdots \otimes \psi_{m_N}||_\Lambda^2  = 
			\bigl \la \psi_{m_1} \otimes \cdots \otimes \psi_{m_N}, J_{N,\Lambda} \psi_{m_1} \otimes \cdots \otimes \psi_{m_N} \bigr \ra_\mathcal{Z},
\end{equation*}
using the diagonal operator in $L^2(\mathcal{Z}^N)$
\begin{equation*}
	J_{N,\Lambda}:\ \psi_{m_1} \otimes \cdots \otimes  \psi_{m_N} \mapsto \bigl( ||\psi_{m_1} ||_\Lambda ^2 \cdots  ||\psi_{m_N} ||_\Lambda ^2 \bigr)\, 
	\psi_{m_1} \otimes \cdots \otimes  \psi_{m_n}
\end{equation*}
(we set $J_{N,\Lambda}$ equal to $0$ in the orthogonal complement of the lowest Landau level, i.e., the space spanned by the $\psi_k(z)$). 
 The Fock space version of this operator, again denoted $J_{N,\Lambda}$,  is 
\begin{equation*}
	J_{N,\Lambda} =  \prod_{k=0}^{pN-p} \bigl( ||\psi_k||_\Lambda^2 \bigr)^{\hat n_k}.
\end{equation*}
In this product, only boundary terms, $k$ small or close to $pN-p$, contribute. Indeed, for $0 \leq k \leq p N-p$, 
and $\Lambda = [a,b] \times [0,2\pi R]$, 
\begin{equation} \label{eq:erfc}
	||\psi_k||^2_\Lambda = \frac{1}{\sqrt{\pi}} \int_a^{b} 
			e^{ - (x-k\gamma)^2 } \dd x = 
	1 - \eps_k - \delta_{pN-p-k}.
\end{equation}
The error terms $\eps_k$ and $\delta_{j}$ depend on the precise choice of the domain of integration. They are small when $k \to \infty$, resp. $j\to \infty$. 
For example, when $a = 0$, $b = (pN-p)\gamma$, 
\begin{equation*}
	\delta_k = \eps_k  = \frac{1}{\sqrt{\pi}} \int_{k\gamma}^\infty e^{-s^2} \dd s \to 0 \quad (k\to \infty). 
\end{equation*}

\begin{lemma}
	 The normalization and the one-particle 
	density for $\Psi_N$ with domain of integration $\Lambda$ are given by
	\begin{align}
	\notag	||\Psi_N||_\Lambda^2 & = \la \Psi_N, J_{N,\Lambda} \Psi_N \ra_\mathcal{Z} \\ 
		\rho_1^{N,\Lambda}(z;z') & = 
				\sum_{k=0}^{pN-p} \frac{\la \Psi_N, c_k^*\, J_{N,\Lambda} c_k 	\Psi_N \ra_\mathcal{Z} }
					{\la \Psi_N, J_{N,\Lambda}\, \Psi_N \ra_\mathcal{Z}}\, \psi_k(z)\, \overline{\psi_k(z')}, \qquad (z,z' \in \Lambda). 
			\label{eq:one-part-finite}
	\end{align}
\end{lemma}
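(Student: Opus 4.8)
The plan is to reduce both formulas to the orbital expansion \eqref{eq:expansion} of $\Psi_N$ together with the single elementary orthogonality relation
\[
	\int_\Lambda \psi_k(z)\,\overline{\psi_m(z)}\,\dd z = \delta_{km}\,||\psi_k||_\Lambda^2 ,
\]
valid for every $\Lambda = [a,b]\times[0,2\pi R]$ because the $y$-integration over the full circle annihilates the phase $\exp(i(k-m)y/R)$ unless $k=m$. It is convenient to expand $\Psi_N$ in its first variable: since $\Psi_N$ lies in the $N$-fold tensor power of the lowest Landau level and $\{\psi_k\}$ is an orthonormal basis of that space, we may write $\Psi_N(z_1,\ldots,z_N) = \sum_k \psi_k(z_1)\,\Phi_k(z_2,\ldots,z_N)$ with $\Phi_k := \tfrac{1}{\sqrt{N!}}\sum_{m_2,\ldots,m_N} a_N(k,m_2,\ldots,m_N)\,\psi_{m_2}\otimes\cdots\otimes\psi_{m_N}$; equivalently $\Phi_k = N^{-1/2}\, c_k\Psi_N$, by the standard identity $(c_k\Psi)(z_2,\ldots,z_N)=\sqrt{N}\int_\mathcal{Z}\overline{\psi_k(z_1)}\,\Psi(z_1,\ldots,z_N)\,\dd z_1$ for (anti)symmetric $\Psi$.

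For the norm, expand $|\Psi_N(z_1,\ldots,z_N)|^2$ in the product basis and integrate over $\Lambda^N$; the orthogonality relation leaves only the diagonal, so
\[
	||\Psi_N||_\Lambda^2 = \frac{1}{N!}\sum_{\vect{m}} |a_N(\vect{m})|^2 \prod_{j=1}^N ||\psi_{m_j}||_\Lambda^2 .
\]
The weight $\prod_j ||\psi_{m_j}||_\Lambda^2$ is symmetric under permutations of the $m_j$, so the diagonal operator $J_{N,\Lambda}$ on $L^2(\mathcal{Z}^N)$ commutes with (anti)symmetrization and agrees, on the $N$-particle sector, with its Fock-space version $\prod_k(||\psi_k||_\Lambda^2)^{\hat n_k}$; since $\Psi_N$ sits entirely in the lowest Landau level, the part of $J_{N,\Lambda}$ set to zero outside it is irrelevant. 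Reading the right-hand side of the last display as $\la \Psi_N, J_{N,\Lambda}\Psi_N\ra_\mathcal{Z}$ yields the first identity.

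For the one-particle density, substitute $\Psi_N = \sum_k \psi_k(z_1)\Phi_k$ into the definition of $\rho_1^{N,\Lambda}$ and integrate $z_2,\ldots,z_N$ over $\Lambda$; by the orthogonality relation,
\[
	\int_{\Lambda^{N-1}} \Psi_N(z,z_2,\ldots,z_N)\overline{\Psi_N(z',z_2,\ldots,z_N)}\,\dd z_2\cdots \dd z_N
		= \sum_{k,k'} \psi_k(z)\,\overline{\psi_{k'}(z')}\ \la \Phi_{k'},\Phi_k\ra_{\Lambda^{N-1}},
\]
with $\la\Phi_{k'},\Phi_k\ra_{\Lambda^{N-1}} = \tfrac{1}{N!}\sum_{m_2,\ldots,m_N} a_N(k,m_2,\ldots,m_N)\overline{a_N(k',m_2,\ldots,m_N)}\prod_{j\geq2}||\psi_{m_j}||_\Lambda^2$. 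The key point is that this double sum is diagonal: by the equality case of the dominance relation \eqref{eq:dominance}, $a_N(\vect{m})\neq 0$ forces $\sum_{j=1}^N m_j$ to equal the fixed number $0+p+\cdots+p(N-1)$, so for a fixed tuple $(m_2,\ldots,m_N)$ at most one value of the first index contributes, and $a_N(k,\cdot)\overline{a_N(k',\cdot)}$ vanishes unless $k=k'$. (This is the same $y$-momentum conservation already used in Sect.~\ref{sec:lattice}, and it is the step on which the diagonal form of the answer relies.) Hence the double sum collapses to $\sum_k \psi_k(z)\overline{\psi_k(z')}\cdot \tfrac{1}{N!}\sum_{m_2,\ldots,m_N}|a_N(k,\cdot)|^2\prod_{j\geq2}||\psi_{m_j}||_\Lambda^2$.

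Finally, recognize the inner coefficient as $N^{-1}\la\Psi_N, c_k^* J_{N,\Lambda} c_k\Psi_N\ra_\mathcal{Z}$: using $c_k\Psi_N = \sqrt{N}\,\Phi_k$ and the self-adjointness of $J_{N,\Lambda}$ one has $\la\Psi_N, c_k^* J_{N,\Lambda} c_k\Psi_N\ra_\mathcal{Z} = \la c_k\Psi_N, J_{N,\Lambda} c_k\Psi_N\ra_\mathcal{Z} = N\,\la\Phi_k, J_{N,\Lambda}\Phi_k\ra_\mathcal{Z}$, and the last inner product equals $\tfrac{1}{N!}\sum_{m_2,\ldots,m_N}|a_N(k,\cdot)|^2\prod_{j\geq2}||\psi_{m_j}||_\Lambda^2$ since $J_{N,\Lambda}$ multiplies the $(N-1)$-particle basis vector with indices $(m_2,\ldots,m_N)$ by $\prod_{j\geq2}||\psi_{m_j}||_\Lambda^2$. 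Multiplying by the combinatorial prefactor $N!/(N-1)!=N$ and dividing by $||\Psi_N||_\Lambda^2 = \la\Psi_N, J_{N,\Lambda}\Psi_N\ra_\mathcal{Z}$ gives \eqref{eq:one-part-finite}. The one place needing care is the bosonic case ($p$ even), where the passage between the coefficients $a_N(\vect{m})$ and normalized occupation-number states carries factors $\sqrt{n_i!}$; one checks these are absorbed consistently into the Fock-space definition of $J_{N,\Lambda}$ and into $c_k\Psi_N=\sqrt{N}\,\Phi_k$, so the formulas are unchanged — this bookkeeping, rather than any conceptual step, is the main nuisance.
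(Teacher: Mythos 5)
Your proposal is correct and follows essentially the same route as the paper: expand $\Psi_N$ in the first variable via $c_k\Psi_N=\sqrt{N}\,\Phi_k$, use the orthogonality $\int_\Lambda\psi_k\overline{\psi_m}=\delta_{km}\|\psi_k\|_\Lambda^2$ to reduce everything to the coefficients $a_N(\vect{m})$ weighted by $\prod_j\|\psi_{m_j}\|_\Lambda^2$, and identify the result with the Fock-space expression involving $J_{N,\Lambda}$. Your explicit justification of the diagonality in $k,k'$ via the fixed total degree $\sum_j m_j$ (equality case of the dominance relation) is a point the paper leaves implicit, but it is the same underlying fact.
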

Similar formulas hold for $n$-point correlations. 
\begin{proof}
	The formula for the  normalization is a consequence of the computations used to define $J_{N,\Lambda}$.
	For the one-particle matrix, we note 
	\begin{align*}
		 &(c_k \Psi_N)(z_2,\ldots,z_N)  = \sqrt{N} \int_\mathcal{Z}  \overline{\psi_k(z_1)}\, \Psi_N(z_1,z_2,\ldots,z_N) \dd z_1\\
					& \qquad =  (N-1)!^{-1/2} \sum_{m_2,\ldots,m_N} a_N(k,m_2,\ldots,m_N) \psi_{m_2}(z_2) \cdots \psi_{m_N}(z_N).
	\end{align*}
	Let $d(m_2,\ldots,m_N):=||\psi_{m_2}||_\Lambda^2 \cdots ||\psi_{m_N}||_\Lambda^2$. We obtain 
	\begin{equation*}
		\bigl\la \Psi_N, c_k^*\, J_{N,\Lambda} c_k \Psi_N  \bigr \ra_\mathcal{Z} 
			= \frac{1}{(N-1)!} \sum_{m_2,\ldots,m_N} |a_N(k,m_2,\ldots,m_N)|^2\,  d(m_2,\ldots,m_N). 
	\end{equation*}
	On the other hand, $N$ times the integral of ${\Psi_N(z,z_2,\ldots,z_N)} \overline{\Psi_N(z',z_2,\ldots,z_N)}$ with the $z_j$ 
	integrated over $\Lambda$, equals 
	\begin{equation*}
		N \frac{1}{N!} \sum_{m_1} |a_N(m_1,\ldots,m_N)|^2 d(m_2,\ldots,m_N) 				\psi_{m_1}(z) \overline{\psi_{m_1}(z')}, 
	\end{equation*}
	and the proof is easily concluded. 
\end{proof}

Because of Eq.~\eqref{eq:erfc}, it is natural to think of our lattice indicator $J_{N,\Lambda}$ as a product of a left, bulk, and right term. 
We write
$	J_{N,\Lambda}:= \mathcal{L} \mathcal{B} \mathcal{R}$
with 
\begin{equation*}
	\mathcal {L} = \prod_{j=0}^{d-1}  \bigl(1 - \eps_j -\delta_{pN- p -j}\bigr)^{\hat n_j},
\end{equation*}
$\mathcal{B}$ a similar product for $j$ from $d$ to $ pN-p-d$, and $\mathcal{R}$ the product from $pN-p-d $ to $ pN -p$. 
\begin{lemma} \label{lem:61}
	For all $N$ and $d$ with $pN - p \geq 3d$ and a suitable, $N$-independent function $f(d)$ with $f(d) \to 0$ as $d\to \infty$: 
	\begin{align*}
	\sup_{d \leq k \leq pN-p-d} \, \bigl | \la c_k^* J_{N,\Lambda} c_k \ra_{N,\mathcal{Z}} - \la \mathcal{L} \ra_{N,\mathcal{Z}}\, \la  c_k^*c_k  \ra_{N,\mathcal{Z}} \la \mathcal{R} \ra_{N,\mathcal{Z}} \bigr|
		 & \leq f(d), \\
	\bigl | \la J_{N,\Lambda}  \ra_{N,\mathcal{Z}} - \la \mathcal{L} \ra_{N,\mathcal{Z}}\,  \la \mathcal{R} \ra_{N,\mathcal{Z}} \bigr|
		 & \leq f(d).
	\end{align*}
\end{lemma}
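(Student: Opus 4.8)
The plan is to discard the bulk factor $\mathcal B$ from $J_{N,\Lambda}=\mathcal L\mathcal B\mathcal R$ at a cost that vanishes as $d\to\infty$, and then invoke the clustering of the state to factorize the expectations of the widely separated observables $\mathcal L$, $c_k^*c_k$, $\mathcal R$. Throughout I would write $r_j:=\eps_j+\delta_{pN-p-j}$, so that $\|\psi_j\|_\Lambda^2=1-r_j\in[0,1]$ and $\mathcal L,\mathcal B,\mathcal R$ are commuting, positive, diagonal operators of norm $\le1$. The one input I would extract from Eq.~\eqref{eq:erfc} together with the constraint~\eqref{eq:domchoice} on $\Lambda$ is that $\eps_k$ and $\delta_k$ are Gaussian boundary tails, bounded uniformly in $N$ by $\tfrac1{\sqrt\pi}\int_{-\infty}^{C-k\gamma}e^{-s^2}\dd s$ and $\tfrac1{\sqrt\pi}\int_{k\gamma-C}^{\infty}e^{-s^2}\dd s$ respectively; hence $\eps(d):=\sum_{k\ge d}(\eps_k+\delta_k)$ is finite, $N$-independent, and $\to0$ as $d\to\infty$, and in particular $\sum_{j=d}^{pN-p-d}r_j\le\eps(d)$.

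The first step is to prove the operator inequality $0\le\mathbf 1-\mathcal B\le\sum_{j=d}^{pN-p-d}r_j\hat n_j$. I would get this from the telescoping identity $\mathbf 1-\prod_jB_j=\sum_i(\prod_{j<i}B_j)(\mathbf 1-B_i)$ for commuting $0\le B_j\le\mathbf 1$, combined with Bernoulli's inequality $(1-r_j)^{\hat n_j}\ge\mathbf 1-r_j\hat n_j$. Taking expectations and using the uniform moment bounds of Prop.~\ref{prop:corr-bounds} (and Cauchy--Schwarz for $\la\hat n_k\hat n_j\ra_N\le D^2$) then gives $0\le\la\mathbf 1-\mathcal B\ra_N\le D\eps(d)$ and $0\le\la\hat n_k(\mathbf 1-\mathcal B)\ra_N\le D^2\eps(d)$, uniformly in $k$. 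Since $\mathcal L\mathcal R$ is positive diagonal of norm $\le1$ and commutes with $\mathcal B$, the second assertion is already reduced to $|\la J_{N,\Lambda}\ra_N-\la\mathcal L\mathcal R\ra_N|=|\la\mathcal L\mathcal R(\mathbf 1-\mathcal B)\ra_N|\le D\eps(d)$. For the first assertion I would first move $c_k$ past $\mathcal L$, $\mathcal R$ and the factors of $\mathcal B$ at sites $\ne k$ (disjoint supports), and handle the $k$-th factor via $c_k^*f(\hat n_k)c_k=\hat n_kf(\hat n_k-1)$ and the elementary bound $\hat n_k\,\bigl|\,1-(1-r_k)^{\hat n_k-1}\bigr|\le r_k\hat n_k^2$; together with the $\la\hat n_k(\mathbf 1-\mathcal B)\ra_N$ bound this reduces $\la c_k^*J_{N,\Lambda}c_k\ra_N$ to $\la\mathcal L\mathcal R\,\hat n_k\ra_N$ up to an error $O(D^2\eps(d))$.

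It remains to factorize $\la\mathcal L\mathcal R\ra_N$ and $\la\mathcal L\mathcal R\,\hat n_k\ra_N$. The supports of $\mathcal L$ (in $\{0,\dots,d-1\}$), of $\mathcal R$ (in $\{pN-p-d+1,\dots,pN-p\}$) and of $\hat n_k$ are pairwise at lattice distance of order $d$, using $pN-p\ge3d$; to cover the whole range $d\le k\le pN-p-d$ one may carry out the splitting of $J_{N,\Lambda}$ at $\lfloor d/3\rfloor$ instead of $d$, which is harmless since only $f(d)\to0$ is claimed. These are bounded diagonal local observables, and for such observables the clustering estimate of Prop.~\ref{prop:lattice-clust} (equivalently Theorem~\ref{thm:clustering}) goes through verbatim, now using that each quasi-state $\omega_\mathcal X$ restricts to a genuine \emph{state} on the commutative algebra of occupation-number-diagonal operators, so that $|\omega_\mathcal X(a)|\le\|a\|$, together with Lemmas~\ref{lem:long-intervals} and~\ref{lem:renewal-clust}. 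This yields that both $|\la\mathcal L\mathcal R\ra_N-\la\mathcal L\ra_N\la\mathcal R\ra_N|$ and $|\la\mathcal L\mathcal R\hat n_k\ra_N-\la\mathcal L\ra_N\la\hat n_k\ra_N\la\mathcal R\ra_N|$ are bounded by a quantity of the form $\const\bigl((\sum_{j\ge\lfloor d/3\rfloor}jp_j)^{1/2}+\sup_{j\ge\lfloor d/3\rfloor}|u_j-\mu^{-1}|\bigr)$, independent of $N$ and $k$ and tending to $0$ as $d\to\infty$. Collecting the three error contributions and letting $f(d)$ be their sum proves both inequalities.

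I expect the main obstacle to be the treatment of $\mathbf 1-\mathcal B$: because $\mathcal B$ is a product of $\sim N$ factors, the convenient uniform-in-$N$ operator bounds (e.g.\ $\mathcal B\ge(1-\eps(d))^{\sum\hat n_j}$) are useless, and one must instead trade the missing uniformity for the summability $\sum_{j\ge d}r_j\to0$ of the Gaussian boundary defects, used together with the uniform occupation-number moment bounds. A secondary, bookkeeping-type point is to make sure that the clustering machinery of Section~\ref{sec:correlations}, stated there for number-conserving Wick monomials, truly applies to the bounded diagonal operators $\mathcal L$ and $\mathcal R$; this is immediate once one observes $\omega_\mathcal X$ is a state on diagonal observables, but it should be stated explicitly.
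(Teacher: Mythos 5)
Your proof is correct and shares the skeleton of the paper's argument --- first discard the bulk factor $\mathcal{B}$ at a cost controlled by the Gaussian tails $\sum_{j\ge d}(\eps_j+\delta_j)$ together with the uniform moment bounds of Prop.~\ref{prop:corr-bounds}, then factorize $\la \mathcal{L}\, \hat n_k\, \mathcal{R}\ra_{N,\mathcal{Z}}$ by clustering --- but the factorization step is executed by a genuinely different route. The paper cannot apply Prop.~\ref{prop:lattice-clust} to $\mathcal{L}$ and $\mathcal{R}$ as they stand (they are not Wick monomials), so it expands them in power series, clusters term by term, and must then control the resulting series via bounds of the form $\bigl(\prod_{j=0}^{d-1}(1+\eps_j+\delta_{pN-p-j})\bigr)^{D}$. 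You instead exploit that $\mathcal{L}$ and $\mathcal{R}$ are \emph{diagonal} local operators of norm $\le 1$, so each quasi-state obeys $|\omega_\mathcal{X}(\mathcal{L})|\le 1$ and the elementary estimate~\eqref{eq:estimate-1}, combined with Lemmas~\ref{lem:long-intervals} and~\ref{lem:renewal-clust}, yields the factorization with no expansion at all; this is precisely the ``bounded local observable'' variant the paper only alludes to in the remark after Cor.~\ref{cor:well-defined}, and it buys a shorter argument. Two further points in your favour: Bernoulli's inequality gives $\mathbf{1}-\mathcal{B}\le \sum_j (\eps_j+\delta_{pN-p-j})\hat n_j$ without the restriction $\eps_d,\delta_d\le 1/4$, so you avoid the paper's separate treatment of small $d$; and you explicitly address the case of $k$ close to $d$, where $\hat n_k$ is adjacent to $\supp \mathcal{L}$, by moving the splitting point to $\lfloor d/3\rfloor$ --- an issue the paper's proof passes over in silence. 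The one imprecision is calling all three observables ``bounded'': for bosons $\hat n_k$ is unbounded, but since it is a number-conserving Wick monomial with uniformly bounded moments it is covered by Prop.~\ref{prop62} and Prop.~\ref{prop:lattice-clust}, as your citations in effect acknowledge.
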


\begin{proof}
Let  $d$ be large enough so that  $\eps_d \leq 1/4$ and $\delta_d \leq 1/4$. Let 
$c>0$ such that $\ln (1-x) \geq - c x$ when $|x|\leq 1/2$. For $pN - p \geq 3d$, we have 
\begin{equation*}
	\mathbf{1} \geq  \mathcal{B} \geq \mathbf{1} - c \sum_{j=d}^{pN-p-d} (\eps_j +\delta_{pN-p-j}) \hat n_j,
\end{equation*}
as an operator inequality. Noting that $\mathcal{B}$ and $c_k$, $c_k^*$ commute or anticommute 
with $\mathcal{L}$ and $\mathcal{R}$, 
\begin{align*}
	&  \bigl| \la c_k^*\, \mathcal{L\, B\, R}\, c_k \ra_{N,\mathcal{Z}} - \la c_k^* \, \mathcal{L\, R}\, c_k \ra_{N,\mathcal{Z}} \bigr| \\
		& \qquad  \leq \Bigl \la \mathcal{ L}  c_k^* (\mathbf{1}- \mathcal{B}) c_k \mathcal{L} \Bigr \ra_{N,\mathcal{Z}}^{1/2} \, 
			\Bigl \la \mathcal{R} c_k^* (\mathbf{1}- \mathcal{B}) c_k \mathcal{R}  \Bigr \ra_{N,\mathcal{Z}}^{1/2} \\
		& \qquad \leq \Bigl \la \Bigl( c_k^* (\mathbf{1} - \mathcal{B}) c_k \Bigr)^2 \Bigr \ra_{N,\mathcal{Z}}^{1/2} 
		   \leq \Bigl \la \Bigl(c \sum_{j=d}^{pN-p-d}(\eps_j + \delta_{pN-p-j}) \hat n_j \Bigr)^2 \Bigr \ra_{N,\mathcal{Z}}^{1/2}. 
\end{align*}
Using the uniform moment bounds 
from Prop.~\ref{prop:corr-bounds}, this can be further upper bounded by a constant times 
$	\sum_{j=d}^{\infty} (\eps_j  + \delta_{j}) $, 
which is finite and goes to $0$ 
as $d\to \infty$. 

Next, we want to use Prop.~\ref{prop:lattice-clust} in order to decouple $\mathcal{L}$, $c_k^*c_k = \hat n_k$ and $\mathcal{B}$. 
We note, first, that Prop.~\ref{prop:lattice-clust} is not directly applicable to $\mathcal{L}$ and $\mathcal{B}$, since these operators 
are not polynomials of 
creation and annihilation operators. Power series expansions yield a simple remedy. To illustrate the procedure, we explain 
how Prop.~\ref{prop:lattice-clust} can be applied to a product of two exponentials. 
We have, for suitable $K,D>0$ and $\varepsilon (d)\to 0$ as $d\to \infty$, 
\begin{align*}
	 & \Bigl|\bigl \la \exp( - s \hat n_0) \exp( - t \hat n_d)\bigr \ra_N - \bigl\la \exp( - s \hat n_0) \bigr\ra_N \bigl \la \exp( - t \hat n_d) \bigr\ra_N   \Bigr| \\
	& \qquad \leq K \sum_{k=0}^\infty \sum_{q=0}^\infty \frac{s^k}{k!} \frac{t^q}{q!} 
				D^{k+q} \varepsilon(d) \leq K \exp( s D) \exp(t D) \varepsilon(d).
\end{align*}  
Here we have used  Prop.~\ref{prop:corr-bounds} in the form $\la \hat n_0 ^{2k} \hat n_d^{2q} \ra_N^{1/2} \leq D^{k+q}$ for suitable $D$.
Something similar can be done to bound
\begin{equation} \label{eq:lrb}
 \la \mathcal{L}\, \hat n_k\, \mathcal{R} \ra_{N,\mathcal{Z}} - \la \mathcal{L}\ra_{N,\mathcal{Z}}\la  \hat n_k \ra_{N,\mathcal{Z}}\la \mathcal{R}\ra_{N,\mathcal{Z}}.
\end{equation}
The upper bound will involve 
\begin{equation*}
	\Bigl( \prod_{j=0}^{d-1} (1+ \epsilon_j + \delta_{pN - p - j}) \Bigr)^D 
\end{equation*}
and a similar term for the right boundary. We note that this term can be bounded, uniformly in $N$ and $d$, and deduce that 
the absolute value of~\eqref{eq:lrb} is bounded by some  function $g(d)$ with $g(d) \to 0$ as $d\to \infty$. 
In combination with our earlier bound which justified the replacement $\mathcal{B} \approx \mathbf{1}$,  
this proves the first inequality of the lemma when $d \geq d_0$ for some $d_0$. When $d\leq d_0$, 
we note that the left-hand side of the inequality can be bounded, uniformly in $N$, and set $f(d)$ equal to that bound.

The proof of the second inequality is similar. 
\end{proof}

Next, we observe that  $\la \mathcal{ L} \ra_{N,\mathcal{Z}}$ and $\la \mathcal{R} \ra_{N,\mathcal{Z}}$ stay bounded away from $0$ 
as $N\to \infty$.
 Indeed, if $(t_k)$  is a sequence of numbers in $[0,1]$ and $\sum_k t_k < \infty$, Jensen's inequality gives 
\begin{align*}
	\bigl \la \prod_{j=0}^{pN - p} (1 - t_j)^{\hat n_j} \bigr \ra_{N,\mathcal{Z}} &\geq 
			\exp\Bigl( \bigl \la \sum_{j=0}^{pN-p} \hat n_j \ln (1- t_j) \bigr \ra_{N,\mathcal{Z}} \Bigr) \\
			& \geq \exp\Bigl( K \sum_{j=0}^\infty  \ln(1-t_j) \Bigr) >0. 
\end{align*}
Here $K$ is a uniform upper bound for the occupation numbers $\la \hat n_k\ra_{N,\mathcal{Z}}$. 
This argument can be adapted without problems to  lower bound the expectations of $\mathcal{L}$ and $\mathcal{R}$. 
Note that, as operators with norm $\leq 1$, they have expectations upper bounded by $1$. 

As a consequence, we can pass to quotients and deduce from Lemma~\ref{lem:61}
\begin{equation*}
	\sup_{d \leq k \leq pN-p-d}\,  \Bigl| \frac{\la c_k^* J_{N,\Lambda} c_k \ra_{N,\mathcal{Z}}} {\la J_{N,\Lambda}  \ra_{N,\mathcal{Z}}}
			- \la c_k^* c_k \ra_{N,\mathcal{Z}} \Bigr|\leq g(d) 
\end{equation*}
for some $N$-independent function $g(d)$ which goes to $\infty$ as $d\to \infty$. 
From here the proof of Theorem~\ref{thm:thermolim} for the one-particle matrix for general $\Lambda$, i.e., the insensitivity to the domain of integration, 
is proven with the help of Eq.~\eqref{eq:one-part-finite} by imitating the proof of Theorem~\ref{thm:thermolim} for $\Lambda =\mathcal{Z}$ 
on p.~\pageref{proof:thermolim}. The proofs for general $n$-point functions are similar. 

\subsection{Symmetry breaking} \label{sec:symbreak}

We conclude the paper with a proof of Theorem~\ref{thm:symbreak}, which is essentially a consequence of results of~\cite{ajj}. Let us also recall that on thin cylinders, the slightly stronger statement that the one-particle density (and not just any correlation function) has a non-trivial period was proven in~\cite{jls}. 

\begin{proof}[Proof of Theorem~\ref{thm:symbreak}] 
The diagonal infinite volume correlation functions\\
 $\rho_n(z_1,\ldots,z_n;z_1,\ldots,z_n)$  
are the correlation functions (= factorial moment densities) of some point process on $\mathcal{Z}$. 
Because of Theorem~\ref{thm:thermolim}, the corresponding measure 
$P$ is the limit, in a suitable sense and up to shifts, of the measure $P_N$   with density 
$\propto |\Psi_N(z_1,\ldots,z_N) |^2$, choosing the finite cylinder $- p \gamma/2 \leq x \leq 
p(N-1/2)\gamma$ as the domain of integration. With this choice $P_N$ is exactly the Gibbs measure for $N$ particles moving in a neutralizing background, studied in~\cite{ajj}. 

Therefore, by~\cite[Theorem 3.1]{ajj}, if we shift $P$ by $\theta\in \R$ along the $x$-axis, we obtain a measure $P^\theta$ which is singular to $P$ unless $\theta$ is an integer multiple of $p\gamma$. 
Now, Theorem~\ref{thm:imbalance} together with Eq.~\eqref{eq:nak} shows that the point process 
satisfies conditions which ensure that it is uniquely determined by its correlation functions~\cite{dvj}. Remember that passing from correlation functions to the point process is like passing from moments of a probability measure to the measure itself.  Thus if $\theta$ is not an integer multiple of $p\gamma$, 
 the $\theta$-shifted measure  must have some correlation function which 
is different from the one for the original measure $P$. This proves Theorem~\ref{thm:symbreak}. 
\end{proof}

\begin{remark}
	Repeated shifts of the infinite-cylinder state $\omega_1(\cdot)=\la \cdot \ra$ by $\ell^2/R$ in the $x$-direction yield states $\omega_2,\ldots,\omega_{p}$. By Theorem~\ref{thm:symbreak}, those $p$ states are \emph{distinct}. They are actually also \emph{disjoint} (this notion generalizes mutual singularity of probability measures). This follows from general  arguments~\cite[Sect.4]{bratteli-robinson1}, combining the fact that $\omega_1,\ldots,  \omega_p$ are distinct and mixing (by Theorem~\ref{thm:clustering}), hence ergodic with respect to shifts in the $x$-direction.
\end{remark}
\vspace{1cm}

\textbf{Acknowledgments} This work was supported by the DFG Forschergruppe 718 ``Analysis and Stochastics in Complex
Physical Systems,'' and initiated during a stay in Princeton supported by 
 NSF grant PHY-0652854 and a Feodor Lynen research fellowship
of the Alexander von Humboldt-Stiftung. 



\newcommand{\etalchar}[1]{$^{#1}$}
\providecommand{\bysame}{\leavevmode\hbox to3em{\hrulefill}\thinspace}
\providecommand{\MR}{\relax\ifhmode\unskip\space\fi MR }
\providecommand{\MRhref}[2]{%
  \href{http://www.ams.org/mathscinet-getitem?mr=#1}{#2}
}
\providecommand{\href}[2]{#2}

\end{document}